\newtheorem{theorem}{Theorem}[section]
\newtheorem{lemma}[theorem]{Lemma}
\newtheorem{corollary}[theorem]{Corollary}
\newtheorem{definition}[theorem]{Definition}
\newtheorem{proposition}[theorem]{Proposition}
\newtheorem{assumption}[theorem]{Assumption}
\begin{document}
\phd\title{Effects of Quenched Randomness on Classical and Quantum Phase Transitions}
\author{Rafael L. Greenblatt}
\director{Joel L. Lebowitz}
\program{Physics and Astronomy}
\approvals{4}
\submissionyear{2010}
\submissionmonth{October}
\abstract{
This dissertation describes the effect of quenched randomness on first order phase transitions in lattice systems, classical and quantum.  It is proven that a large class of quantum lattice systems in low dimension ($d \le 2$ or, with suitable continuous symmetry, $d \le 4$) cannot exhibit first-order phase transitions in the presence of suitable (``direct'') quenched disorder.
}
\beforepreface

\acknowledgements{
Nothing is ever truly the work of a single person.  Much of the content of this dissertation was shaped in the course of conversations with colleagues and many distinguished scientists, among them Royce Zia, Moshe Schechter, Enza Orlandi, Vieri Mastropietro, Tobias Kuna, Michael Kiessling, Sheldon Goldstein, Alessandro Giuliani, Giovanni Gallavotti, Diana David-Rus, Eric Carlen, Gabe Bouch, and especially Michael Aizenman.

Anything else achieved here would have been impossible without the support and encouragement of my friends; besides those I have already listed above, I owe special thanks to Joe Cleffie, Anand Gopal, Alicia Graham, Sarah Grey, and Pankaj Mehta.  I am also grateful to my brothers and sisters in the Rutgers Council of AAUP Chapters-AFT for all they have done to improve the working lives of graduate employees like myself.

Ron Ransome, the director of the physics graduate program for most of my time here, has been extremely helpful at many points, particularly in making it possible for me to resume my studies after an interruption.

Above all, I thank my advisor and mentor, Joel Lebowitz, who has helped me (like so many other students) in more ways than I can express here, not least by the remarkable example he has set for all of us.
}

\dedication{
\begin{center}
{\em To my first math teacher and my first physics teacher:\\
my mother, Susan Selvig\\
and my father, Richard Greenblatt} \\
\end{center}
}


\afterpreface

\chapter{Introduction}\label{Background.section}


One of the basic techniques of condensed matter physics is the effective description of solids as a combination of a static portion and a rapidly-moving part: in a simple description of metals, the nuclei and tightly-bound electrons remain fixed and form an effective potential background for the conduction electrons.  Although the simplest description involves a uniform or periodic background (i.e.\ a perfect crystal), this is hardly a natural assumption: completely pure samples are anything but common or easily prepared.  There are many situations in which disorder has only a minor effect, but there are cases of fundamental importance where this is not the case.  The best-established illustration is the description of electrical conduction in metals: with the application of quantum mechanics in this context it became clear that irregularly-placed scatterers were necessary to account for finite conductivity.  Anderson localization provides a further way in which disorder produces a qualitative difference in the behavior of a physical system.

These examples concern transport properties, which are harder to fit into a comprehensive framework than equilibrium properties.  The core of the work described here is a similarly qualitative effect at the level of equilibrium thermodynamics, the rounding effect predicted by Imry and Ma~\cite{YM} and described in detail in the next section.

It is misleading in a way to talk about equilibrium in this context.  As noted already in the paper which introduced the mathematical framework now known as quenched randomness~\cite{Brout}, it is important to consider a background which is a metastable configuration, and is not typical of the equilibrium state of the full system.  Although disorder is still present in systems which are genuinely in equilibrium (this is what is known as annealed disorder), annealed systems cannot exhibit behavior which is fundamentally different from that of ordered systems.

There is still much that remains to be understood about classical models of quenched randomness, but the situation for quantum systems has been even more obscure.  The main results presented in this dissertation, Propositions~\ref{main_prop} and~\ref{continuous_prop}, are an extension of the proof of the Imry-Ma rounding effect to quantum systems.  The present chapter will discuss the previous state of understanding and attempt to provide context for the result.  
Chapters~\ref{Begin.proof} to~\ref{End.proof} comprise the proof of these results.  Chapter~\ref{Begin.proof} describes the formalism used, establishes several preliminary results, and states the main propositions to be established.  Chapter~\ref{CLT_section} contains a proof of a nonlinear central limit theorem (based on an earlier result of Aizenman and Wehr~\cite{AW.CMP}) which may be of some independent interest.  Chapter~\ref{G_section} completes the proof of the main results with an analysis of the free energy effects of the quenched randomness.  
This work was announced in a publication by the author with M. Aizenman and J. L. Lebowitz, which provides a summary of the argument as is therefore attached as~\cite{QIMLetter}.  Additionally, Appendix~\ref{prob_appendix} reviews some probabilistic terminology and results used in the previous chapters which may be unfamiliar to some readers.

Appendix~\ref{GZRP} (written with J.L. Lebowitz, and published as~\cite{GZRP}) describes earlier work by the author, with J. L. Lebowitz, on nonequilibrium stochastic dynamics.


\section{The rounding effect for classical systems}

A 1975 paper by Imry and Ma contains an important insight into phase transitions in disordered systems based on an analysis of the energy of the ordered phase, using considerations similar to those applied more rigorously in Peierls' proof of long range order in the Ising model~\cite{Peierls} and later in Pfister's proof of the Mermin-Wagner theorem for classical systems~\cite{Pfister81MW}.  The context is $O(N)$ models, that is lattice models where configurations consist of a specification of an $N$-dimensional unit vector (a classical spin) $\vec{\sigma}_x$ at each site $x$, with equilibrium states determined by the Hamiltonian
\begin{equation}\label{ON.simple.Hamiltonian}
 \ham= -J \sum \vec{\sigma}_x \cdot \vec{\sigma}_y - \sum \vec{h}_x \cdot \vec{\sigma_x}.
\end{equation}

Since the paramagnetic phase of this system has higher entropy, for the ordered phase to be stable requires that the energy cost involved in forming a domain where the spins inside are aligned in a different direction from those outside grow with the size of such a domain.  In the Ising ($N=1$) case the cost for a domain of diameter $L$ is of order $L^{d-1}$, and in continuous ($N \ge 2$) versions spin-wave analysis~\cite{HerringKittel} suggests a cost on the order of $L^{d-2}$.  In the absence of a random field this suggests (correctly) that ferromagnetism does not exist in these systems at finite temperature for $d=1$ and $d \le 2$ respectively. Since ferromagnetism appears at any higher dimension\footnote{With the possible exception of the symmetric quantum case, where a ferromagnetic phase has yet to be rigorously shown to exist in three dimensions.}, we may speculate that it is sufficient for the energy cost to grow with $L$, a contention that is supported by estimates of the number of genuinely independent contours of given size~\cite{Chalker,FFS}.   If the random field has typical strength $H$ and we neglect correlations between the field at different locations, then the total random field in a domain of volume $L^d$ will typically have magnitude $H L^{d/2}$.  Then when $d \le 2$ for Ising models and $d\le 4$ for continuous models, and given any direction, there will be a large number of large domains for which flipping into that direction is energetically favored. On this basis, Imry and Ma predicted that there would be no long range order at low temperature for the random field Ising model in two dimensions and for similar continuous models in $d \le 4$.  They also suggested that ferromagnetism would persist in higher dimensions.

Another way of looking at ferromagnetic order in this system is as a first order transition, where the equilibrium magnetization $\state{\vec{\sigma}}$ changes discontinuously as the external field $\vec{h}$ is changed through zero.  The disappearance of ferromagnetic order corresponds to a ``rounding'' of this discontinuity, leaving a continuous transition.  We shall see that this ``rounding effect'' occurs in a large number of systems in the presence of quenched randomness.

In 1976 Aharony, Imry and Ma established a detailed connection between random field $O(N)$ models with continuous spin in $4<d<6$ dimensions and the field-free versions in $d-2$, finding an exact correspondence between the most divergent Feynman diagrams of all orders for the two models~\cite{AYM}; among other things this provided strong support (which had previously been lacking) for the prediction that the random field models had ferromagnetic order for $d>4$.  However by expressing the Lagrangian of the model in a supersymmetric form, Parisi and Sourlas were able to extend this perturbative correspondence to all dimensions and to $n=1$, suggesting that the random field Ising model was not, in fact, ferromagnetic in three dimensions but only for four dimensions or more~\cite{ParisiSourlas}, or in other words that its lower critical dimension $d_l$ was 3.  A number of attempts to study the formation of domain walls more carefully than Imry and Ma seemed at first to agree on $d_l=3$~\cite{Pytte81RFIM,Kogon81RFIM,Binder81RFIM}, but before long other domain-wall studies appeared to return to $d_l=2$~\cite{GrinsteinMa,Binder83}, along with other theoretical~\cite{Villain} and experimental work~\cite{Belanger83RFIM}; in particular Chalker~\cite{Chalker} and Fisher, Fr\"{o}hlich and Spencer~\cite{FFS} provided strong (but not conclusive) arguments for $d_l=2$ based on a rigorous treatment of the ``no contours within contours'' approximation.  However further arguments emerged for $d_l=3$~\cite{Krey}, and the debate was only resolved with rigorous proofs of long range order for the 3 dimensional random field Ising model by Imbrie~\cite{Imbrie.PRL,Imbrie.CMP} (for zero temperature) and Bricmont and Kupiainen~\cite{BK.PRL,BK.CMP} (for low temperature), based on intricate examinations of the scaling behavior of the contour representations of the model.

This did not yet completely vindicate Imry and Ma's argument; this was done by Aizenman and Wehr, who proved that first order transitions could not exist for a large variety of classical systems in the presence of disorder~\cite{AW.PRL,AW.CMP}.  They were able to do this by first constructing a suitable description of the equilibrium states of the infinite system (metastates), which allowed the construction of a quantity describing the free energy fluctuations due to the random term in the Hamiltonian.  The estimates of domain energies in the Imry-Ma argument correspond to rigorous bounds on this quantity, and by examining only hypercubic domains it is possible to show that a first order transition would cause a contradiction between these bounds in the dimensions which Imry-Ma predicted a rounding effect, that is always in  $d \le 2$, and for systems with continuous symmetries $d \le 4$.

The precise conditions are somewhat cumbersome to state precisely.  They are exactly the same as those of Propositions~\ref{main_prop} and~\ref{continuous_prop} below, so for the moment I will confine myself to some general remarks.  The main one is on the relationship between the quenched disorder and the order parameter.  If the phase transition or long range order under examination is described by averages of some local quantity $\kappa_x$, then a rounding effect can be expected only when the Hamiltonian can be written in the form
\begin{equation}
  \mathcal{H} = \mathcal{H}_0 - \sum_x \left( h + \epsilon \eta_{x} \right) \lop_{x};
\end{equation}
following Hui and Berker~\cite{HuiBerker.PRL}, we can refer to this as ``direct randomness''\label{Direct.label}. In the Ising model, a random field is direct with respect to the spins, and so an arbitrarily weak random field eliminates ferromagnetism in two dimensions; bond randomness is not direct, so it does not (at least not when it is sufficiently weak).  It should be noted that this notion is relative to a particular phase transition: by way of illustration we may consider the random-bond Potts model in two dimensions.  In the nonrandom version of this model with sufficiently many colors, the order-disorder transition involves a nonzero latent heat, which in this case means that the equilibrium bond energy density is discontinuous with respect to the bond strength.  This is a first order transition for which bond randomness is direct, and therefore the latent heat vanishes whenever it is present.  On the other hand this randomness does not couple to the color, and so as in the Ising model long range order remains.

\subsection{Ising models}\label{background.lowDIsing}

More can be said about the random field Ising model (henceforth RFIM) by bringing a variety of specialized techniques to bear, leading to more insight into the scope and significance of the rounding effect.  In particular, some insight can perhaps be obtained into what replaces the ferromagnetic phase in this situation, and into the more subtle effects of the random field in 3 dimensions.

Besides computational efficiency, one major issue in simulating disordered systems is the presence of additional finite size effects, especially at the lower critical dimension.  To make this clearer, let us revisit the Imry-Ma~\cite{YM} analysis of the random field Ising model with random fields of typical strength $H$.  Flipping a typical domain of linear size $L$ will involve a bond energy of the order $JL^{d-1}$ and field energy on the order of $HL^{d/2}$; in one dimension, the field energy will dominate once length scales on the order of $(J/H)^2$ come into play, but when smaller systems are analyzed they will appear to be ferromagnetic.  In two dimensions the competing energies are both proportional to $L$; when $H$ is small compared to $J$ the rounding effect occurs only because of fluctuations in the random field, which makes its effect stronger in particular regions.  This can be studied by means of extreme value statistics, and this approach~\cite{Binder83} gives a breakup length scale on the order of
\begin{equation}
  L_b = \exp\left[ A (J/H)^2 \right],
\end{equation}
with a constant $A$ of order $1$.  This has been backed up by numerical studies, which found $A=2.1 \pm .2$ for a Gaussian distribution of the random fields and $1.9 \pm .2$ for a bimodal distribution~\cite{RFIM.breakup}.  For weak values of the random field, this distance can easily be hundreds or even thousands of sites - nowhere near macroscopic, but potentially very difficult to reach in simulations.

\subsubsection{Nearest neighbor Ising chain at zero temperature}\label{1dIsing.sec}
One requirement of the Aizenman-Wehr proof of the rounding effect are assumptions which must be made on the distribution of the random parameter in certain contexts.  Some limitations may be purely technical (see Section~\ref{restrictions_sec} below), but not all.  We can see this thanks to studies of the one-dimensional Ising model by Bleher et. al.~\cite{Bleher}.  Examining the case of a ``dichotomous'' random field, i.e.\ one taking only the two values $\pm H$ and those with equal probability, they found that the ground state configuration of the spin at any site $x$ could be deduced from the random field in some finite but undetermined neighborhood as follows.

Let us write the Hamiltonian of the system as
\begin{equation}
  \ham=-J \sum_x \sigma_x \sigma_{x+1} + \sum_x \eta_x \sigma_x.
\end{equation}
We can recursively define two position-dependent functions of the random fields by
\begin{gather}
  u_x = \piecewise{
  u_{x-1}+h_x, & |u_{x-1}+h_{x-1}| \le J \\
  J, & u_{x-1}+h_{x-1} > J \\
  -J, & u_{x-1}+h_{x-1} < -J }\label{ux_recursion}\\
  v_x = \piecewise{
  v_{x+1}+h_x, & |v_{x+1}+h_{x+1}| \le J \\
  J, & u_{x+1}+h_{x+1} > J \\
  -J, & u_{x+1}+h_{x+1} < -J }.
\end{gather}
$u_x$ (respectively $v_x$) can be thought of as representing the effect of $x$'s neighbors to the left (resp. right) on flipping it out of the ground state - the lowest-energy flip may involve a number of sites depending on the magnetic field they experience.
These quantities always exist, and are almost always uniquely specified since there will eventually be a large block of sites where all the magnetic fields point in the same direction.  Any site $x$ for which $u_x+v_x+h_x$ is positive (resp. negative) will necessarily have $\sigma_x = 1$ (resp. $-1$) in any ground state, and if $u_x+v_x+h_x=0$ there will be ground states with $\sigma_x = 1$ and $\sigma_x = -1$.  All this is proven (for $\eta_x= \pm H$) in~\cite{Bleher}, but readers should be able to convince themselves by considering the minimum energy cost of flipping a block of sites containing $x$ out of the resulting configuration; and Appendix~\ref{numerics.chapter} contains a derivation for arbitrary fields.

It is not difficult to numerically estimate the probability distribution of $u_0$ from the recursion relationship~\eqref{ux_recursion} (the distribution of $v_0$ is identical and independent), and from this calculate the average value of the ground state magnetization.  Figure~\ref{chain.fig.first} shows a plot resulting from such a calculation
\begin{figure}
  \includegraphics[width=\textwidth]{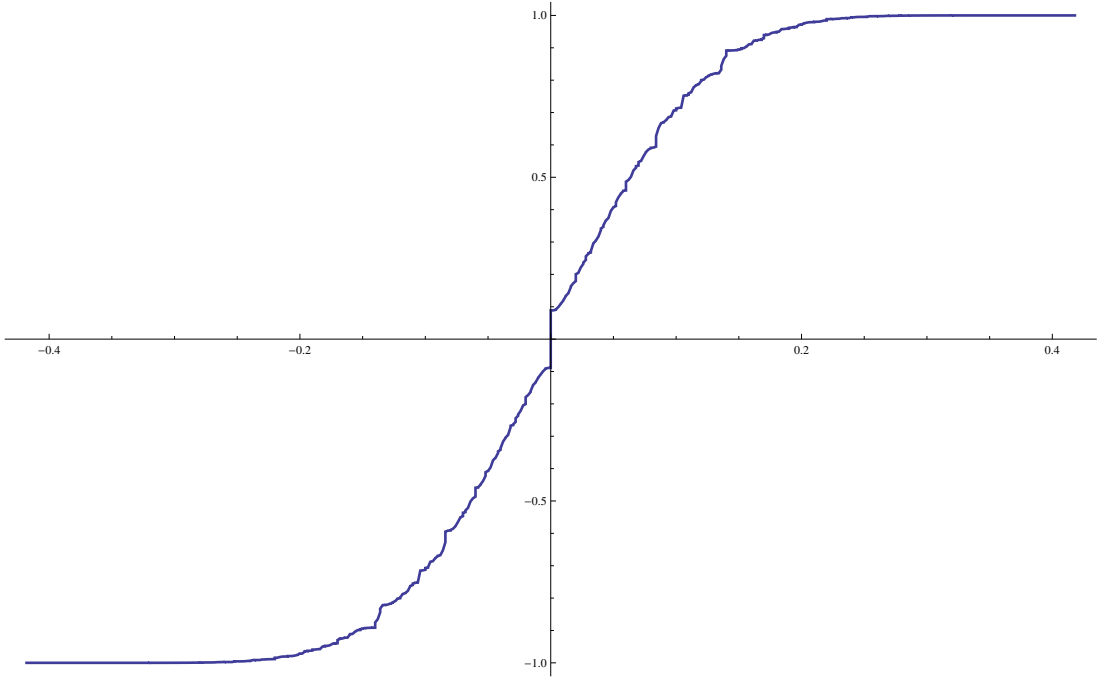}
  \caption{Plot of ground state magnetization as a function of mean magnetic field $h_0$ for a random field Ising chain with magnetic field distribution $\tfrac{1}{2}\delta_{\bar{h}-H}+\tfrac{1}{2}\delta_{\bar{h}+H}$, $J=1$, $H=0.42$ .\label{chain.fig.first}}
\end{figure}

As is apparent from Figure~\ref{chain.fig.first}, the magnetization in the presence of a dichotomous random field has a number of discontinuities, in fact an infinite number occurring wherever $\bar{h}/H$ is rational.  It is interesting to note that these first order transitions \emph{do not} correspond to any long range order: there are a finite density of isolated regions which can be flipped independently with no change in energy, resulting in a finite residual entropy; this situation was called ``Perestroika'' when first described in 1989~\cite{RFIM.residual}.\label{Perestroika}

Nonetheless, this illustrates that one of the restrictions on the proof of the rounding effect by Aizenman and Wehr~\cite{AW.CMP}, the requirement of an absolutely continuous distribution of the random field at zero temperature, is indeed necessary.

These first order transitions do not appear for absolutely continuous distributions of the random field.  The curves do not, however, appear to be always analytic, and the character of the singularities (that is, the order of phase transitions present) appears to depend on the corresponding properties of the random field distribution in a way that remains to be investigated more carefully.

\subsubsection{Long range interactions in one dimension}

The one dimensional Ising model can exhibit long range order at finite temperature if interactions are sufficiently long range~\cite{Dyson69}.  Let us consider the variant of the RFIM with the following Hamiltonian:
\begin{equation}
  \ham= -J_0 \sum_{x<y} \frac{\sigma_x \sigma_y}{|y-x|^\alpha} - \sum_x h_x \sigma_x.
\end{equation}
The bond energy associated with flipping a block of $L$ spins is on the order of $L^{2-\alpha}$, so the Imry-Ma argument indicates that rounding should occur for $\alpha \ge 3/2$.  This was confirmed by Aizenman and Wehr~\cite{AW.CMP}, but the question of what happens for even longer ranged interactions remained unanswered until recent work by Cassandro, Orlandi and Picco~\cite{Cassandro}, who showed that long range order persists in the presence of weak random fields for $1-\ln(3/2) < \alpha <3/2$.  This means that the estimate provided by the Imry-Ma argument is also sharp in this respect, and suggests that the restrictions on long range interactions used below may in some sense be sharp as well.

\subsubsection{Higher dimensions}

There is a complication in the ground state behavior of the RFIM which has not been well studied: ``Perestroika'' (see p.~\pageref{Perestroika}) occurs in the ground state for dichotomous random fields in all finite dimensions, because there will be a finite density of regions where the magnetic field has a pattern like that shown in Figure~\ref{perestroika.fig}.
\begin{figure}
\begin{center}
\begin{picture}(200,200)
  \multiput(0,0)(0,20){11}{\line(1,0){200}}
  \multiput(0,0)(20,0){11}{\line(0,1){200}}
  \multiput(0,0)(20,0){4}{\multiput(5,5)(0,20){10}{$+$}}
  \multiput(120,0)(20,0){4}{\multiput(5,5)(0,20){10}{$-$}}
  \multiput(85,5)(0,20){4}{$+$}
  \multiput(105,5)(0,20){6}{$+$}
  \multiput(85,85)(0,20){6}{$-$}
  \multiput(105,125)(0,20){4}{$-$}
  \linethickness{1mm}
  \put(80,80){\line(0,1){120}}
  \put(80,80){\line(1,0){40}}
  \put(120,0){\line(0,1){120}}
  \put(80,120){\line(1,0){40}}
\end{picture}
\end{center}
  \caption{A two-dimensional example of a dichotomous random field configuration leading to ``Perestroika'' in the ground state. The bold lines divide separate regions where the spins are always $+$, indeterminate, and always $-$ in any ground states. \label{perestroika.fig}}
\end{figure}
In two dimensions, this means that a dichotomous RFIM will exhibit a first order transition at zero temperature, which has probability zero~\cite{AW.CMP} either for the same model at finite temperature or for any absolutely continuous distribution of the random fields.  In three dimensions there is also a more conventional degeneracy in the ground state due to long range order~\cite{Imbrie.CMP}, but numerical studies still indicate that dichotomous and absolutely continuous distributions show strikingly different behavior, not even lying in the same universality class at zero temperature~\cite{HartmannNowak,Sourlas99}.  If Perestroika is the main cause of the difference between the two cases then it should disappear at finite temperature, with all low temperature systems behaving like the zero temperature system with an absolutely continuous field distribution.  This appears to be supported by comparing finite-temperature Monte Carlo studies with dichotomous~\cite{Fytas08} and Gaussian~\cite{WuMachta} random fields, as well as comparing the latter to ground state studies with Gaussian fields~\cite{WuMachta}.  It is possible that renormalization studies which include parameters differentiating between the different distributions could shed light on the situation; then the scenario described above would involve additional dichotomous-field fixed points, all with this new parameter as an unstable direction.

Whatever the details, it is clear that different types of random field distribution can result in profoundly different types of behavior, especially in the ground state.

\subsection{The 3 dimensional XY model}\label{3DXY.section}

A claim has arisen recently that the prediction of a rounding effect for three and four dimensional systems with continuous symmetry (where dimensional reduction, the Imry-Ma argument, and the proof of Aizenman and Wehr are all in agreement) is either incorrect or misunderstood.  The controversy has been specifically about what is probably the simplest such model, the three dimensional random field XY model described by the Hamiltonian
\begin{equation}\label{RFXY_ham}
  \beta \ham = -J\sum_{<x,y>} \vec{\sigma}_x \cdot \vec{\sigma}_y - \sum_x \vec{h}_x\cdot \vec{\sigma}_x
\end{equation}
where $\sigma_x$ are unit vectors in $\R^2$, and $\vec{h}_x$ are i.i.d.\ random vectors in $\R^2$; this is the $N=2$ case of the $O(N)$ model discussed above.  In the following discussion, we can assume that $\vec{h}_x$ are chosen uniformly from some circle of specified radius $H$, as is done in most of the numerical studies we will discuss.

In 2007, Fisch~\cite{Fisch07} published the results of Monte Carlo simulations on the random field clock model, which shares the Hamiltonian~\eqref{RFXY_ham}, but where $\vec{\sigma}$ and $\vec{h}$ are now restricted to a set of $q$ evenly-spaced directions (in~\cite{Fisch07} $q=12$ is used).  Although this model does not have the continuous $U(1)$ symmetry of the XY model, there is certainly a relationship between the properties of the two models~\cite{Fisch07,Z12.Jose,Tobochnik}, and the clock model can be simulated very efficiently.

We can be more concrete in considering the nonrandom ($H=0$) model in two dimensions.  Here we know that the XY model has no long range order (i.e.\ no ferromagnetic phase) at finite temperature thanks to the Mermin-Wagner theorem~\cite{MerminWagner66,Mermin67,Pfister81MW}.  The clock model, on the other hand, has a finite number of ground states, which are related by a symmetry group with minimum interface energy of $(1-\cos 2\pi/q)J$ per bond, and so by Pirogov-Sinai theory~\cite{PirogovSinai1,PirogovSinai2} have ferromagnetic long range order for sufficiently small temperatures.

An early Monte Carlo study of two dimensional clock models~\cite{Tobochnik} noted that outside the ferromagnetic phase the clock model behaved similarly to the XY model, in particular showing evidence of a Kosterlitz-Thouless phase of quasi-long-range order.  It is not surprising that the relationship between the two systems should depend significantly on the temperature, since the relatively high energy excitations will be similar between the two systems.

It was claimed in~\cite{Fisch07} that the temperatures under consideration were high enough that differences between the clock and XY models would not come into play, but there are reasons to doubt that this is a reasonable line of argument.  Ferromagnetic order can always be disrupted by a system's lowest energy excitations; their effects only become less relevant with increasing temperature insofar as they are overwhelmed by other, more entropically favorable, excitations.  If the QLRO phase begins at a nonzero temperature, it is because it is only then that the associated modes begin to play a dominant role, and it is at exactly this temperature that the absence of sufficiently low-energy excitations in the clock model makes itself felt.

It is also worth noting that some studies of the clock model~\cite{Fisch10} involve some sites with zero magnetic field; although ferromagnetism is still ruled out at finite temperature (see Section~\ref{continuous_def} below), this does open the possibility of a ferromagnetic ground state or some other phenomenon similar to perestroika which could have subtle effects at finite temperature.


It appears to be possible to virtually eliminate discretization effects with another scheme which at least provides substantial improvements over rejection sampling.  The idea, described in Appendix~\ref{numerics.chapter}, is based on the Ziggurat algorithm~\cite{Ziggurat}, a method which has proven to be highly efficient in sampling the normal distribution~\cite{Gaussian.RNG.review}.

Preliminary tests of this method have been very promising.  Using a C++ program on a desktop computer with a 3.2 GHz Pentium 4 processor and 2 GB RAM, I have been able to achieve an update rate of $8.6\times 10^5$ to $2.1 \times 10^6$ sites per second\footnote{The efficiency depends on the system size, which may indicate that further optimization is possible.} on a parameter range $k \in [0,14]$, compared to $6.8\times 10^5$ to $1.3 \times 10^6$ for a comparable lookup-table implementation of the 12-state clock model.  It seems very likely, then, that it will be possible in the near future to conduct a detailed study comparing the two models.

\section{The rounding effect for quantum systems}

\subsection{Transverse field Ising models: direct and orthogonal randomness}

The simplest quantum lattice spin system\footnote{That is, the simplest lattice spin system which involves nontrivial commutation relationships.  The Ising model, for example, has no classical dynamics, and is in a certain trivial sense a quantum system - the DLR conditions~\cite{Ruelle} which define its equilibrium state are equivalent to the quantum KMS conditions defined by Heisenberg evolution\cite{Simon}.  This is unlike off-lattice systems where the classical and quantum KMS conditions do not coincide~\cite{ClassicalKMS}.} is the transverse field Ising model, defined by the Hamiltonian
\begin{equation}
  \ham = - \sum J_{xy} \sigma_{3,x} \sigma_{3,y} - \sum \lambda_x \sigma_{1,x} - \sum h_x \sigma_{3,x},
\end{equation}
where $\sigma_{i,x}$ denotes the $i$ component of a $\tfrac{1}{2}$-spin at site $x$.  The properties of this system (and a number of variants) are relatively well known, in large part due to the fact that its path integral representation is the continuum limit of an Ising model with an additional dimension (sometimes called the ``space-time Ising model'')~\cite{AKN,AizenmanNachter}.  Its behavior is consequently very close to that of a classical Ising model in many respects; the nonrandom version is ferromagnetic when $\lambda$ is small, for example.  One difference is that the phase diagram of the system at zero temperature is more complicated than that of its classical counterpart; the system has a ferromagnetic-paramagnetic transition at zero temperature at a critical value of $\lambda$, which provides a paradigmatic example of a quantum critical point~\cite{Sachdev}.

Among the ways of introducing quenched randomness to this system, the most straightforward are to add randomness in the transverse field $\lambda_x$ or the longitudinal field $h_x$.  We can hardly expect quantum effects to be very striking in this system, but it is still worth clarifying where it fits into the picture I have been discussing.

The random transverse field case is a good example of orthogonal randomness, in that ferromagnetic order remains as long as the transverse field is not too strong~\cite{campanino1991lgs}; the nature of the ferromagnetic-paramagnetic transition can be changed significantly~\cite{Fisher}, but the details of this are beyond the scope of the present work.

A random longitudinal field, on the other hand, couples to the magnetization, and so should be direct randomness.  It has been expected~\cite{Senthil} that the outcome should be similar to the (classical) random field Ising model, to which it reduces for $\lambda_x \equiv 0$.

\subsection{The quantum Ashkin-Teller chain: an exception?}

The possibility of additional complications in the quantum case have been raised in the context of the quantum $N$-color Ashkin-Teller model.  In this system, each lattice site $x$ contains $N$ $\tfrac{1}{2}$-spins, described by operators $\ATSpin{i}{x}{\alpha}$ for the $i$th component of the $\alpha$ spin at site $x$.  In one dimension, the Hamiltonian is given by
\begin{equation}
\begin{split}
  \ham = &- \sum_{\alpha=1}^N \sum_x \left( J_x \ATSpin{3}{x}{\alpha} \ATSpin{3}{x+1}{\alpha} + h_x \ATSpin{1}{x}{\alpha} \right)
  \\&-\epsilon \sum_{\alpha < \beta}^N \sum_x \left( J_x \ATSpin{3}{x}{\alpha} \ATSpin{3}{x+1}{\alpha} \ATSpin{3}{x}{\beta} \ATSpin{3}{x+1}{\beta} + h_x \ATSpin{1}{x}{\alpha}  \ATSpin{1}{x}{\beta} \right).
  \end{split}
\end{equation}
To begin with, we examine the nonrandom version of the system, where $J_x=J$ and $h_x=h$ are constant.  At sufficiently low temperature (zero temperature in one dimension) and when $J$ is large compared to $h$, the system is in an ordered ``Baxter phase'', with the spins of each color exhibiting long range order, with no simple correlation between the different colors, while for large $h$ the system is paramagnetic.  For $N \ge 3$ and $\epsilon >0$, the transition between these states is of first order~\cite{Goswami}, characterized for example by a discontinuity in $\state{\ATSpin{1}{x}{\alpha}}$ (which is independent of $x$ and $\alpha$).  Randomness in $h_x$ is clearly direct with respect to this transition, and so is randomness in $J_x$ - the two can be shown to be equivalent by a duality transformation~\cite{Goswami}.  Therefore we should expect a system with such randomness to round the first order transition (at least provided it has an absolutely continuous distribution, cf. Section~\ref{1dIsing.sec} above).

A renormalization group analysis by Goswami, Schwab and Chakravarty~\cite{Goswami} suggested that this might not be the case.  They found that when $\epsilon$ was below a certain nonzero value $\epsilon_c(N)$, the flow of the system was similar to that of the random transverse-field Ising model and that there was no first order transition; however above this value their scaling analysis broke down in a way that led them to suggest that a first order transition might persist.  As we shall see this can be rigorously ruled out, but we are not yet in a position to say exactly what is happening.

\chapter{Proof of the rounding effect: overview and preliminaries}\label{Begin.proof}

We now embark on the proof of the rounding effect.  The basic framework of the argument is the same as~\cite{AW.CMP}, which in turn uses reasoning based on that of Imry and Ma~\cite{YM}.  One constructs a random variable $G_L$ which represents the free energy effect of the random field on a scale $L$.  We then show that it has a strict upper bound of the form
\begin{equation}\label{simpleUpperBound}
|G_L| \le C L^{d-1} + C' L^{d/2}
\end{equation}
or in more restricted cases
\begin{equation}\label{continuousUpperBound}
|G_L| \le C L^{d-2} + C' L^{d/2}.
\end{equation}
At the same time, we show that when the system is at a first order transition, it has asymptotic fluctuations described by a normal distribution,
 \begin{equation}\label{normalLimit}
   G_L \approx \mathcal{N}(0,L^{d/2})
 \end{equation}
on the scale $L^{d/2}$, which means that it will violate the above bounds in sufficiently low dimension.

The behavior indicated in Equation~\eqref{normalLimit} is akin to a central limit theorem, but instead of a sum of random variables it concerns a suitably continuous function of a large number of random variables.  In Chapter~\ref{CLT_section} we present a suitable nonlinear central limit theorem.  This result is a slight modification of one found in~\cite{AW.CMP}.  Although the result is phrased in what we hope will be a more useful form for some readers, the proof is substantially the same, apart from a correction due to Bovier~\cite{Bovier}.

The upper bound~\eqref{simpleUpperBound} is quite easy to show for finite systems, however it is not trivial to show that an infinite-system limit exists.  This problem was resolved for classical systems by defining $G_L$ as expectation values with respect to metastates, which are random probability measures related to the random Gibbs states of a disordered classical system~\cite{Bovier}.  The notion of metastate has been generalized to one suitable to quantum systems (that is, one based on the operator analysis notion of KMS state rather than the measure-theoretical notion of Gibbs state) by Barreto and Fidaleo~\cite{Barreto,Fidaleo}, but while this is promising for many other problems in disordered systems it is of little use to us.  Instead, we have formulated an argument which remains almost exclusively at a thermodynamic level.  This has the additional merit of producing a proof which is considerably more accessible from both a physical and a mathematical point of view.  Chapter~\ref{G_section} begins with the construction of an object satisfying the upper bound~\eqref{simpleUpperBound} and the conditions of the nonlinear central limit theorem proven in Chapter~\ref{CLT_section}, completes the proof of our first main result, and then provides the additional estimates needed to obtain the bound~\eqref{continuousUpperBound} under suitable conditions and obtain a stronger result for systems with continuous symmetry.

Before embarking on the proof, we establish definitions and a number of preliminary results which establish the context, and state our two main results.

\section{Notation and systems under consideration} \label{defs_section}
We consider systems on a lattice (we take this to be the simple cubic lattice $\Zd$ for simplicity, but many other cases can be reduced to this), where the possible configurations of each site are described by a finite-dimensional Hilbert space, with time evolution affected by a static background described by means of its statistical properties.

To make this more mathematically precise, we suppose that we are given a dimensionality $d$ and a finite-dimensional $C^*$-algebra\footnote{A $C^*$-algebra is a collection of operators with addition, multiplication, conjugation, which is closed under all of these operations (e.g.\ the product of two operators is another operator in the same algebra) and with a norm which defines limits, convergent series, etc.  This is a common way of formalizing the notion of the set of operators describing a quantum system. \cite{Ruelle,BR1,BR2}} $\alg_0$.  We introduce a copy $\alg_x$ of this algebra for each lattice site $x \in \Zd$, and take everything which can be obtained by tensor products, sums, and limits: this is the quasi-local $C^*$-algebra $\alg$~\cite{BR1,BR2,Simon}, and we will take the conventional point of view that this allows us to describe all physical observables.  We let $\finiteDom$ be the finite subsets of $\Zd$, and for any $\Lambda \in \finiteDom$ we let $\alg_\Lambda$ be the local $C^*$-algebra on $\Lambda$.

To specify the background referred to above, we will make use of the following concepts:
\begin{definition}
A field (on \Zd) is a map from $\Zd$ to the real numbers.  The set of all fields is denoted by \Fields.
\end{definition}

\begin{definition} \label{RandomField.def}
A random field (on \Zd) is a collection of random variables indexed by the elements of $\Zd$.  A random field is i.i.d.\ if the random variables it consists of are independently and identically distributed. 
\end{definition}
If we consider the space $\Fields$ to have the cylinder-Borel sigma algebra (the conventional choice), then an i.i.d.\ random field is also a $\Fields$-valued random variable (by Kolmogorov's extension theorem).

For Chapter~\ref{CLT_section}, as in many other generalizations of the central limit theorem, we need a restriction on the moments of the individual random variables:
\begin{definition}\label{Lyapunov.def}
A random variable $X$ is Lyapunov if there is a $\delta > 2$ such that $\Av |X|^\delta$ is finite.  A random field $\eta$ is Lyapunov iff each $\eta_x$ is Lyapunov.
\end{definition}
Note that an independent, Lyapunov random field defines an array (by restrictions to subsets of \Zd) which satisfies the usual Lyapunov condition, hence my appropriation of that name.

In what follows, strictly separate symbols will be used to denote random fields and specific values.  $\eta$ will be a random field (consisting of the individual real random variables $\eta_x$), while $\zeta$ is a specified (nonrandom) element of $\Fields$.  It is very convenient to have a compact way of referring to the random field within a specified subset of $\Zd$; to do so we use the symbol $\eta_\Lambda$ to refer to the collection of $\eta_x$ with $x \in \Lambda$; the meaning of expressions like $\zeta_\Lambda = 0$ should be clear.  This allows a convention we will use for conditional expectations: by expressions of the form
\begin{equation}
\condAv{f(\eta)}{\eta_\Lambda = \zeta_\Lambda}
\end{equation}
we mean a conditional expectation of the random variable $f(\eta)$ on the sigma-algebra generated by specifications of $\eta_\Lambda$, understood as a function of $\zeta$.

Dynamics (and equilibrium states) on such a structure are defined by way of the concept of an interaction, basically a rule for assigning Hamiltonians to families of systems defined on different finite regions.  Formally, a nonrandom interaction is a function $\inter_0:\finiteDom \to \alg$ satisfying $\inter_0(X) \in \alg_X$.

We wish to consider interactions depending on one or more random fields.  For the matter at hand, we do not need to talk about arbitrary random interactions; it is enough to talk about systems where the Hamiltonian on a finite region $\Gamma \in \finiteDom$ with free boundary conditions is
\begin{equation}
H_{\Gamma,0}^{h,\zeta,\ul{\omega}}=\sum_{X \subset \Gamma} \inter_0(X) + \sum_{x:T_x A_0 \subset \Gamma} (h+\zeta_x)\lop_x + \sum_{\alpha=1}^{N_\alpha} \sum_{x:T_x A_\alpha \subset \Gamma} \omega_{\alpha x} \gamma_{\alpha x},
\end{equation}
where $T_x$ denotes translation by $x$, and $\inter_0$ is assumed to be translation invariant ($\inter_0(T_x X) = T_x \inter_0 (X)$).  We define other boundary conditions as follows:
\begin{definition}
A boundary condition is a linear map $B:\finiteDom \times \alg \to \alg$, $(\Gamma,A) \mapsto B_\Gamma (A)$ satisfying
\begin{enumerate}
\item $\|B_\Gamma(A)\| \le \|A\|$ for all $A \in \alg$
\item $B_\Gamma(A) \in \alg_\Gamma$ for all $A \in \alg$
\item $B_\Gamma(A)=A$ for all $A \in \alg_\Gamma$
\item $B_\Gamma(A)=0$ for all $A \in \alg_{\Gamma^C}$
\end{enumerate}
\end{definition}
This is a fairly generous notion of boundary conditions, and in particular includes fixed and periodic boundary conditions.  We denote the Hamiltonian with boundary condition $B$ by
\begin{equation} \label{HDef}
H_{\Gamma,B}^{h,\zeta,\ul{\omega}} = \sum_X B_\Gamma(\inter_0(X)) + \sum_{x\in\partial_0 \Gamma} (h+\zeta_x)B_\Gamma(\lop_x) + \sum_{\alpha=1}^{N_\alpha} \sum_{x\in \partial_\alpha \Gamma} \omega_{\alpha x} B_\Gamma (\gamma_{\alpha x}).
\end{equation}
where $\partial_\alpha \Gamma$ denotes the set of $x \in \Zd$ for which $T_x A_\alpha$ contains members of both $\Gamma$ and $\Gamma^C$.

This allows us to define partition functions by
\begin{equation} \label{Zdef}
Z^h_{\Gamma,\bc}(\zeta,\ul{\omega}) := \Tr \exp (- \beta H_{\Gamma,\bc}^{h,\zeta,\ul{\omega}}),
\end{equation}
free energy by
\begin{equation} \label{Fdef}
F^h_{\Gamma,\bc}(\zeta,\ul{\omega}) := -\frac{1}{\beta} \log Z^h_{\Gamma,\bc}(\zeta,\ul{\omega})
\end{equation}
and Gibbs states by
\begin{equation} \label{stateDef}
\state{\cdot}^h_\Gamma (\zeta,\ul{\omega}) := \frac{\Tr \cdot e^{-\beta H_{\Gamma,\bc}^{h,\zeta,\ul{\omega}}}}{ Z^h_{\Gamma,\bc}(\zeta,\ul{\omega}) }.
\end{equation}
To avoid a profusion of subscripts, we omit a label for boundary conditions when periodic boundary conditions should be understood; and when an integer $L$ appears instead of the finite set $\Gamma$ it should be understood to represent the (hyper)cubic subset of $\Zd$ of side length $L$ approximately centered at the origin, i.e.\ \begin{equation}
\Gamma_L:=\left[-\frac{-L+1/2}{2},\frac{L+1/2}{2}\right]^d \cap \Zd.
\end{equation}

It is helpful to observe that the $\beta \to \infty$ limit of the free energy and Gibbs states (for the time being, we consider these limits with all other parameters fixed) exist.  Indeed, when the ground state is nondegenerate, the free energy converges to the ground state energy and the Gibbs state converges to the (unique) ground state.  Even in the presence of degeneracy, these limits provide an equally useful description of the system, and we can establish many results simultaneously for finite and zero temperature by taking advantage of this.  We will therefore take the free energy and Gibbs states to be defined for all $\beta \in [0,\infty]$, with the values at $\beta=\infty$ being the above limits.

The free energy, as defined in Equation~\eqref{Fdef}, has the following well-known property we will use repeatedly in what follows:
\begin{lemma}[\cite{Ruelle}]\label{Ruelle_lemma}
For any Hermitian matrices $A,B$ of the same size,
\begin{equation}
\left| \log \Tr e^A - \log \Tr e^B \right| \le \|A-B\|
\end{equation}
\end{lemma}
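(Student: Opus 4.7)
The plan is to interpolate linearly between $A$ and $B$ and bound the derivative of the log-trace along the path. Set $C(t) := (1-t)A + tB$ for $t \in [0,1]$ and define
\[
f(t) := \log \Tr e^{C(t)},
\]
so that $f(0) = \log \Tr e^A$ and $f(1) = \log \Tr e^B$. The lemma follows from the fundamental theorem of calculus once we show $|f'(t)| \le \|A-B\|$ uniformly in $t$.

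To compute $f'(t)$ I would use the Duhamel expansion
\[
\frac{d}{dt} e^{C(t)} = \int_0^1 e^{sC(t)} (B-A) e^{(1-s)C(t)} \, ds,
\]
which is the standard remedy for the fact that $C(t)$ and $C'(t) = B-A$ need not commute. After taking the trace, cyclicity collapses the $s$-integral and yields the clean formula
\[
f'(t) = \frac{\Tr\!\bigl[(B-A)\, e^{C(t)}\bigr]}{\Tr e^{C(t)}}.
\]

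Now the right-hand side is an expectation value of the Hermitian operator $B-A$ in the Gibbs-like state $\rho(t) := e^{C(t)}/\Tr e^{C(t)}$, which is positive and has unit trace. For any density matrix $\rho$ and Hermitian $X$ one has $|\Tr(X \rho)| \le \|X\|$: diagonalize $\rho = \sum_i p_i \ket{i}\bra{i}$ with $p_i \ge 0$, $\sum_i p_i = 1$, and estimate
\[
|\Tr(X\rho)| \le \sum_i p_i |\braket{i|X|i}| \le \|X\| \sum_i p_i = \|X\|.
\]
Applying this with $X = B - A$ gives $|f'(t)| \le \|A-B\|$, and integrating from $0$ to $1$ completes the proof.

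There is no real obstacle here; the only mildly subtle point is the noncommutativity in differentiating $e^{C(t)}$, which the trace absorbs painlessly. The argument is coordinate-free and works verbatim at any matrix size, so it transfers without change to the finite-dimensional local algebras $\alg_\Gamma$ that will be used throughout the sequel.
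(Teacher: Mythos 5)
Your proof is correct. The paper itself does not supply an argument for this lemma --- it is quoted from Ruelle and used as a black box --- so there is nothing to compare against; your Duhamel-plus-Gibbs-state derivation is a perfectly valid way to establish it, and the key inequality $|\operatorname{Tr}(X\rho)|\le\|X\|$ for a density matrix $\rho$ and Hermitian $X$ is exactly the right tool. For what it is worth, there is an even shorter route that avoids differentiating under the trace altogether: from $A\le B+\|A-B\|\,I$ and the fact that $X\mapsto\operatorname{Tr} e^{X}$ is monotone on Hermitian matrices (the ordered eigenvalues of $A$ are dominated by those of $B+\|A-B\|I$ by the min--max principle), one gets $\operatorname{Tr} e^{A}\le e^{\|A-B\|}\operatorname{Tr} e^{B}$, and the symmetric estimate finishes it in two lines. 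Your interpolation argument is a bit heavier but generalizes more gracefully, e.g.\ to settings where one wants the explicit Duhamel formula for other purposes, so either is a reasonable choice.
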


The terms of the interaction connecting a finite region to the rest of the system play an important role in the arguments of the present work.  We denote these by
\begin{equation} \label{Vdef}
V_L^{\zeta,\ul{\omega}} := \sum_{X: \overlappingClass{X}{\Gamma_L}} \inter_0(X) + \sum_{x\in \partial_0 \Gamma_L} (h + \zeta_x) \lop_x + \sum_{\alpha=1}^{N_\alpha} \sum_{x \in \partial_\alpha \Gamma_L} \omega_{\alpha x} \gamma_{\alpha x},
\end{equation}
Note that $\| B_\Gamma (V^{\zeta,\ul{\omega}}_L) \| \le \| V^{\zeta,\ul{\omega}}_L \|$ for all boundary conditions, so bounds on the norm of the infinite-system operator above give considerable information about finite systems as well.

Our main result will be restricted to systems which are short range in the following sense:
\begin{assumption}\label{short_range_assumption}
There are constants $0 \le C, C' < \infty$ such that
\begin{equation}
\Av \left\| V_L^{\zeta,\ul{\omega}} \right\| \le C (1+|h|) L^{d-1} + C' L^{d/2}
\end{equation}
\end{assumption}
This may not be very transparent, so we note the  following results which provide sufficient conditions under which Assumption~\ref{short_range_assumption} is satisfied.
\begin{lemma}
If $\eta$ and $\ul{\upsilon}$ are i.i.d.\ and mutually independent with $N_\alpha$ finite, then there is a constant $0 \le c_1 < \infty$ such that
\begin{equation}
\Av \left\| \sum_{x:\overlappingClass{T_x A_0}{\Gamma_L}} (h + \eta_x) \lop_x + \sum_{\alpha=1}^{N_\alpha} \sum_{x:\overlappingClass{T_x A_\alpha}{\Gamma_L}} \upsilon_{\alpha x} \gamma_{\alpha x} \right\| \le c_1 L^{d-1}.
\end{equation}
\end{lemma}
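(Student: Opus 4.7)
The plan is to reduce the operator-norm bound to a deterministic scalar sum by applying the triangle inequality pointwise in the disorder, then to use translation invariance to make the per-site contribution a constant, and finally to use the elementary fact that the boundary region indexing the sums has size $O(L^{d-1})$.

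First, since the operator norm is subadditive and homogeneous, I apply the triangle inequality inside the norm \emph{before} taking the expectation, obtaining the pointwise (in $\eta,\upsilon$) bound
\[
\left\| \sum_{x} (h+\eta_x)\lop_x + \sum_{\alpha=1}^{N_\alpha} \sum_{x} \upsilon_{\alpha x}\gamma_{\alpha x}\right\| \le \sum_{x} (|h|+|\eta_x|)\|\lop_x\| + \sum_{\alpha=1}^{N_\alpha} \sum_{x} |\upsilon_{\alpha x}|\|\gamma_{\alpha x}\|,
\]
where all sums are restricted to the boundary index sets of the lemma statement. Applying $\Av$ and using linearity of expectation turns the right-hand side into a deterministic sum.

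Next I exploit the structure of the summands: each $\lop_x$ (resp.\ $\gamma_{\alpha x}$) is a translate $T_x \lop_0$ (resp.\ $T_x \gamma_{\alpha 0}$) of a fixed operator in the finite-dimensional algebra $\alg_0$, so $\|\lop_x\| = \|\lop_0\|$ and $\|\gamma_{\alpha x}\| = \|\gamma_{\alpha 0}\|$ are finite constants independent of $x$. The i.i.d.\ hypothesis makes $\Av|\eta_x| \equiv \Av|\eta_0|$ and $\Av|\upsilon_{\alpha x}| \equiv \Av|\upsilon_{\alpha 0}|$; these are finite under the implicit assumption that the disorder has finite first moment (automatic under the Lyapunov condition of Definition~\ref{Lyapunov.def}). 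Every site therefore contributes the same finite amount to each sum.

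Finally, the index sets consist of those $x$ for which $T_x A_\alpha$ straddles the boundary of the cube $\Gamma_L$. Because $A_\alpha$ is a fixed finite set of diameter $d_\alpha$, any such $x$ must lie within a layer of fixed thickness $d_\alpha$ around the $(d-1)$-dimensional surface of $\Gamma_L$, so the number of such sites is at most $K_\alpha L^{d-1}$ for a purely geometric constant $K_\alpha$ depending only on $d$ and $d_\alpha$. Combining these three steps yields the bound with
\[
c_1 = (|h|+\Av|\eta_0|)\,\|\lop_0\|\, K_0 + \sum_{\alpha=1}^{N_\alpha} \Av|\upsilon_{\alpha 0}|\, \|\gamma_{\alpha 0}\|\, K_\alpha .
\]
There is no real obstacle here: the entire content of the lemma is the perimeter bound $|\partial \Gamma_L| = O(L^{d-1})$ combined with an averaged triangle inequality, and the only point requiring attention is the implicit finiteness of $\Av|\eta_0|$ and $\Av|\upsilon_{\alpha 0}|$.
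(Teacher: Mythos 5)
Your proof is correct and follows essentially the same route as the paper's one-sentence argument: triangle inequality, a uniform per-term bound, and the $O(L^{d-1})$ count of boundary sites. You are in fact slightly more careful than the paper's wording (``each bounded in norm by $1$ or $|h|$''), since you correctly take expectations and invoke finiteness of $\Av|\eta_0|$ and $\Av|\upsilon_{\alpha 0}|$ rather than treating the disorder as pointwise bounded.
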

\begin{proof}
The quantity whose norm is being bounded consists of $N+1$ sums, each with no more than $2d |A_\alpha| L^{d-1} $ terms, each bounded in norm by $1$ or $|h|$.
\end{proof}
When this holds, it means that Assumption~\ref{short_range_assumption} is satisfied iff the following condition on $\inter_0$ is satisfied:
\begin{equation} \label{inter0_boundary_bound}
\|V_L^{0,0}\| = \left\| \sum_{X: \overlappingClass{X}{\Gamma_L}} \inter_0(X) \right\| \le C L^{d-1} + C' L^{d/2}
\end{equation}
This is clearly the case when $\inter_0$ is of finite range, but also allows some scope for infinite range interactions.  A convenient condition~\cite{AW.CMP} is
\begin{lemma}\label{AW_LR_lemma}
If
\begin{equation}
\sum_{\substack{ X \ni 0 \\ \diam X \le L }} \diam X \frac{|\partial X|}{|X|} \|\inter_0(X)\| \le c' L^{(2-d)/2}
\end{equation}
for all $L$, then Inequality~\ref{inter0_boundary_bound} is true.
\end{lemma}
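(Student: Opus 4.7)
The plan is to apply the triangle inequality to $V_L^{0,0}$, reorganize the resulting sum by translation class, and then combine a geometric counting estimate with the hypothesis. Starting from the definition of $V_L^{0,0}$ and the triangle inequality, the observation that each $X$ overlapping $\Gamma_L$ can be written as $T_y X'$ with $X' = T_{-y} X \ni 0$ for any $y \in X$ (giving $|X|$ representations per $X$), together with the translation invariance of $\inter_0$, gives
\begin{equation}
\|V_L^{0,0}\| \le \sum_{X : \overlappingClass{X}{\Gamma_L}} \|\inter_0(X)\| = \sum_{X' \ni 0} \frac{n_L(X')}{|X'|} \|\inter_0(X')\|,
\end{equation}
where $n_L(X') = |\{y \in \Zd : T_y X' \text{ overlaps } \Gamma_L\}|$ counts translates of $X'$ that straddle the boundary of $\Gamma_L$.

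The crux is a geometric bound on $n_L(X')$. For a translate of $X'$ to straddle a given face of the cube $\Gamma_L$, the bounding box of the translate must meet both sides of that face, which confines the translate center to a slab of area $\le L^{d-1}$ and thickness $\le \diam X'$. Summing over the $2d$ faces of $\Gamma_L$ and absorbing lower-order corrections from the corners and edges into the constant, one obtains $n_L(X') \le C L^{d-1} \diam X'$ for all $X'$ with $\diam X' \le L$.

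Combining these pieces and using the elementary inequality $|\partial X'| \ge 1$, so that $\diam X' / |X'| \le \diam X' \cdot |\partial X'|/|X'|$, the hypothesis yields
\begin{equation}
\sum_{\substack{X' \ni 0 \\ \diam X' \le L}} \frac{n_L(X')}{|X'|} \|\inter_0(X')\| \le C L^{d-1} \sum_{\substack{X' \ni 0 \\ \diam X' \le L}} \diam X' \frac{|\partial X'|}{|X'|} \|\inter_0(X')\| \le C c' L^{d-1+(2-d)/2} = C c' L^{d/2},
\end{equation}
which matches the $C' L^{d/2}$ term of the target inequality.

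The remaining contribution from sets with $\diam X' > L$ is controlled using a coarser bound $n_L(X') \le C (L + \diam X')^d$ together with a dyadic decomposition on the ranges $2^k L < \diam X' \le 2^{k+1} L$, on each of which the hypothesis may be applied at scale $2^{k+1} L$, producing a geometric series that sums to a bound consistent with the $C L^{d-1}$ term of the target. The main obstacle lies in this tail analysis: it requires checking that the apparent loss in the coarser estimate for $n_L(X')$ is compensated by the decay of $\|\inter_0(X')\|$ encoded in the hypothesis, particularly in the borderline dimensions $d \le 2$ where the hypothesis allows the most long-range behavior.
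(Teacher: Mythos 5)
Your treatment of the $\diam X \le L$ terms---the triangle inequality, the regrouping by translation class with the overcounting factor $|X'|$, the geometric count $n_L(X') \le C L^{d-1}\diam X'$, and the use of $|\partial X'|\ge 1$ to match the hypothesis---is correct and matches the paper's argument.

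The difficulty you flag in the tail is a genuine gap, and the dyadic decomposition as you have set it up does not close it. The quantity that actually enters the regrouped sum is $n_L(X')/|X'|$, and the bound $n_L(X') \le C(L+\diam X')^d$ ignores the division by $|X'|$: for a sparse set such as $X' = \{0, M e_1\}$ in $d=1$ with $M\gg L$, one has $n_L(X')/|X'| = L$ while $(L+\diam X')^d/|X'| \approx M/2$, so your bound overshoots by a factor of order $\diam X'/L \approx 2^k$ on the $k$-th shell. Carried through the dyadic blocks this produces block contributions of order $(2^k L)^{d/2}$, and $\sum_k (2^k L)^{d/2}$ diverges---there is no convergent geometric series and no $C L^{d-1}$ bound. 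The missing observation is that for $\diam X' > L$ a straddling translate of $X'$ must place one of its $|X'|$ points inside $\Gamma_L$, so $n_L(X') \le |X'|\,|\Gamma_L| = |X'| L^d$, hence $n_L(X')/|X'| \le L^d$ uniformly in $\diam X'$. On the $k$-th dyadic shell one then has $1 \le \diam X'\,|\partial X'|/(2^k L)$, and the hypothesis at scale $2^{k+1}L$ gives a block contribution of order $c' L^{d/2}\, 2^{-kd/2}$, whose sum converges for every $d\ge 1$ and produces the $C' L^{d/2}$ term. Equivalently, for the large-diameter terms $n_L(X')/|X'| \le L^d \le L^{d-1}\diam X'$ and one feeds in $|\partial X'|\ge 1$; this is the estimate the paper's proof uses for the ``remaining portion'' (the constraints $\diam X \le L$ in the paper's second displayed sum are typographical and should read $\diam X > L$).
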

\begin{proof}
By the triangle inequality
\begin{equation}
\|V_L^{0,0}\| \le \sum_{X: \overlappingClass{X}{\Gamma_L}}  \left\| \inter_0(X) \right\|;
\end{equation}
and in this sum the terms with diameter $L$ or less contribute, at most,
\begin{equation}
\sum_{\substack{ X \ni 0 \\ \diam X \le L}} 2 d L^{d-1} \frac{\diam X}{|X|} \le 2 d c' L^{d/2},
\end{equation}
and the remaining portion is bounded by
\begin{equation}
\sum_{\substack{X \ni 0 \\ \diam X \le L}} L^d \frac{|\partial X |}{|X|} \| \inter_0 (X) \| \le L^{d-1} \sum_{\substack{X \ni 0 \\ \diam X \le L}} \diam X \frac{|\partial X|}{|X|} \| \inter_0 (X) \| \le c' L^{d/2},
\end{equation}
Putting the two parts back together we have Inequality~\ref{inter0_boundary_bound} with $C' = (2 d + 1) c'$.
\end{proof}

For pair interactions, the bound in Lemma~\ref{AW_LR_lemma} is satisfied in $d=1$ for interactions decaying like $(\textup{distance})^{-3/2}$ or faster; a result of Cassandro, Orlandi and Picco~\cite{Cassandro} shows that Proposition~\ref{main_prop} is false for a system with slightly longer range interactions, which suggests that Assumption~\ref{short_range_assumption} may in some sense be optimal.  This may be of some practical interest, since for pair interactions in $d=2$ we need the interactions to decay strictly faster than $(\textup{distance})^{-3}$ for Lemma~\ref{AW_LR_lemma} to apply, and inverse cube interactions seem to be quite common~\cite{SchSt.JPCM}.

Finally, we give a similar statement which provides some control over the case of infinite-range random interactions:
\begin{lemma}
Let $\ul{\upsilon}$ be i.i.d., with $N_\alpha=\infty$ and
\begin{equation}
\sum_{\substack{\alpha \ge 1 \\ \diam A_\alpha \le L}} \diam A_\alpha |\partial A_\alpha| \Av | \upsilon_{\alpha,0} | \le c L^{(2-d)/2}.
\end{equation}
Then
\begin{equation}
\Av \left\| \sum_{\alpha=1}^\infty \sum_{x:\overlappingClass{T_x A_\alpha}{\Gamma_L}} \upsilon_{\alpha x} \gamma_{\alpha x} \right\| \le c' L^{d/2}.
\end{equation}
\end{lemma}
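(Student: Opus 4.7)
The proof will mirror the structure of Lemma~\ref{AW_LR_lemma} very closely, with $\Av|\upsilon_{\alpha,0}|$ playing the role that $\|\inter_0(X)\|$ played there and with the sum over $\alpha$ replacing the sum over $X \ni 0$. One simplification is that each label $\alpha$ already picks out a specific reference shape $A_\alpha$, so there is no ``anchor'' overcounting of the type $1/|X_0|$ that appeared in the earlier proof; a complication is that we must keep track of the geometry $|\partial A_\alpha|$ appearing in the hypothesis.

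First, I would pass the norm inside the sum using the triangle inequality and use linearity of $\Av$:
\begin{equation}
\Av \left\| \sum_{\alpha=1}^\infty \sum_{x:\overlappingClass{T_x A_\alpha}{\Gamma_L}} \upsilon_{\alpha x} \gamma_{\alpha x} \right\|
\le \sum_{\alpha=1}^\infty \sum_{x:\overlappingClass{T_x A_\alpha}{\Gamma_L}} \Av |\upsilon_{\alpha x}|\cdot \|\gamma_{\alpha x}\|.
\end{equation}
Because $\upsilon_{\alpha,\cdot}$ is i.i.d.\ in $x$, $\Av|\upsilon_{\alpha x}| = \Av |\upsilon_{\alpha,0}|$, and by the translation invariance of the operators $\|\gamma_{\alpha x}\|=\|\gamma_{\alpha,0}\|$, which we normalize (as with the interaction terms in~\eqref{HDef}) to be at most $1$. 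Everything then reduces to a counting problem: how many translates $x$ are there such that $T_x A_\alpha$ straddles $\partial \Gamma_L$?

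The counting splits naturally according to whether $\diam A_\alpha$ is larger or smaller than $L$. For $\diam A_\alpha \le L$, the straddling translates $T_x A_\alpha$ must be contained in a slab of thickness $\diam A_\alpha$ around $\partial \Gamma_L$, whose volume is of order $\diam A_\alpha \cdot L^{d-1}$; since $|\partial A_\alpha| \ge 1$ we also have the (weaker) bound $\diam A_\alpha \cdot |\partial A_\alpha|\cdot L^{d-1}$, matching the form of the hypothesis. For $\diam A_\alpha > L$, a cruder bound of order $L^d$ suffices, and we rewrite $L^d \le L^{d-1}\diam A_\alpha \le L^{d-1}\diam A_\alpha\,|\partial A_\alpha|$, so the same bookkeeping quantity controls both regimes. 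Summing and applying the hypothesis gives
\begin{equation}
\Av \|\cdot\| \le c_d\, L^{d-1}\sum_{\alpha\ge 1} \diam A_\alpha \,|\partial A_\alpha|\, \Av|\upsilon_{\alpha,0}| \le c_d\, L^{d-1}\cdot c L^{(2-d)/2} = c' L^{d/2}.
\end{equation}

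The main obstacle is the handling of the long-range terms $\diam A_\alpha > L$, since the hypothesis is a priori only stated for $\diam A_\alpha \le L$. The cleanest way around this is to take the hypothesis as holding for every $L$, so that the monotone sum $\sum_{\alpha:\,\diam A_\alpha \le L'}$ with $L' \to \infty$ controls the full series (whenever that limit is finite for the $L$ at hand); the geometric step that $L^d \le L^{d-1}\diam A_\alpha$ when $\diam A_\alpha > L$ then folds the two regimes into the same sum. The only other care needed is to make sure the boundary-crossing condition $\overlappingClass{T_x A_\alpha}{\Gamma_L}$ is quantitatively matched to the slab estimate; this is a minor geometric check analogous to the one implicit in the proof of Lemma~\ref{AW_LR_lemma}.
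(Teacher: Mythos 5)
Your proof follows the same route as the paper's: triangle inequality plus translation invariance reduce the claim to counting the translates $x$ for which $T_x A_\alpha$ straddles $\partial\Gamma_L$; the split at $\diam A_\alpha = L$, the slab count $\sim L^{d-1}\diam A_\alpha$ in the short-range regime, the insertion of $|\partial A_\alpha|\ge 1$, and the rewriting $L^d \le L^{d-1}\diam A_\alpha\,|\partial A_\alpha|$ in the long-range regime all match the paper's steps (the paper records the long-range translate count as $L^d|\partial A_\alpha|$ rather than your $L^d$, but after your inequality the two chains yield the same quantity, and the paper's constant is $c'=(d2^d+1)c$). Your observation that the tail $\diam A_\alpha>L$ invokes the hypothesis beyond its stated range is accurate and in fact latent in the paper's proof too; however, your proposed fix of sending $L'\to\infty$ only closes the gap when $d=2$, since for $d<2$ the right side $cL^{(2-d)/2}$ grows with $L$ and so does not cap the tail at the scale $L$ actually needed -- the cleaner reading is simply that the summability hypothesis is meant to control the tail (or full) series directly, as the final inequality in both proofs presumes.
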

\begin{proof}
The contribution of terms with $\diam A_\alpha \le L$ is bounded by 
\begin{equation}
\begin{split}
& \sum_{\substack{ \alpha \ge 1 \\ \diam A_\alpha \le L}} 2d (L+ \diam A_\alpha)^{d-1} \diam A_\alpha \Av |\upsilon_{\alpha 0} | \\
& \; \le \sum_{\substack{ \alpha \ge 1 \\ \diam A_\alpha \le L}} d 2^d L^{d-1} \diam A_\alpha |\partial A_\alpha| \Av |\upsilon_{\alpha 0} | \le d 2^d c L^{d/2},
\end{split}
\end{equation}
while the remaining terms are bounded by
\begin{equation}
\sum_{\substack{ \alpha \ge 1 \\ \diam A_\alpha > L}} L^d |\partial A_\alpha| \Av |\upsilon_{\alpha 0} |
\le L^{d-1} \sum_{\substack{ \alpha \ge 1 \\ \diam A_\alpha > L}} \diam A_\alpha |\partial A_\alpha| \Av |\upsilon_{\alpha 0} |
\le c L^{d/2},
\end{equation}
and the conclusion follows with $c' = (d 2^d +1)c$.
\end{proof}

\subsection{Systems with continuous symmetries} \label{continuous_def}

Imry and Ma's initial work~\cite{YM} mainly concerned systems with continuous symmetries.  In this context the Mermin-Wagner theorem~\cite{MerminWagner66,Mermin67} already precludes long range order without randomness in two dimensions, so the rounding effect would be of little consequence except that it extends to four dimensions, but only so long as the randomness preserves the symmetry ``on average'' in a sense the following passage should make clear.

First, we assume that the single-site algebra $\alg_0$ contains a subalgebra isomorphic to the rotations $SO(N)$ for some $N \ge 2$.  For each rotation $R \in SO(N)$, let $R_x$ be the corresponding element of $\alg_x$.  We will say that an interaction $\inter$ is invariant iff
\begin{equation}
\inter(X) = \left( \prod_{x \in X} R^{-1}_x \right) \inter(X) \left( \prod_{x \in X} R_x \right)
\end{equation}
for all $X \in \finiteDom$ and all $R \in SO(N)$.

Intuitively, for a random system to be (stochastically) invariant under rotations, the field and the quantity it couples to should both transform as dual representations of $SO(N)$.  The vector representation is the only case we are aware of which includes any cases of intrinsic interest (this case, in particular, includes Heisenberg models in a random magnetic field), so we will focus on this.  The fields are then elements of $\Fields^N$, or equivalently maps $\vec{\zeta}:\Zd \to \R^N$, and we let
\begin{definition}
A random vector field is a collection of $\R^N$-valued random variables indexed by the elements of $\Zd$.
\end{definition}

A random vector field is i.i.d.\ iff these random variables are independent and identically distributed.  The components of a random vector field in a particular direction are a random field in the sense of Definition~\ref{RandomField.def}, a fact which we will use frequently.  We will say that a random vector field satisfies the Lyapunov condition if each of its components does in the sense of Definition~\ref{Lyapunov.def}.

A random vector field $\vec{\eta}$ is {\emph isotropically distributed} iff for each $x \in \Zd$ and $R \in SO(N)$ the distribution of $\vec{\eta}_x$ is the same as the distribution of $R \vec{\eta}_x$.  Among other things, this implies that the component $\hat{e} \cdot \vec{\eta}_x$ in an arbitrary direction will have an absolutely continuous distribution so long as $\vec{\eta} \ne 0$ with probability one, and will have no isolated point masses (see the statement of Proposition~\ref{main_prop} below) provided that $\vec{\eta} \ne 0$ with nonzero probability.

We then define systems by the quenched local Hamiltonians
\begin{equation} \label{vecHam}
H_{\Gamma}^{h,\vec{\zeta},\ul{\vec{\omega}}} = \sum_X B_\Gamma(\inter_0(X)) + \sum_{x \in \Gamma} (\vec{h}+\vec{\zeta}_x) \cdot B_\Gamma(\vec{\lop}_x) 
,
\end{equation}
where each $\vec{\lop}_x$ 
is a vector operator, that is a collection of $N$ operators satisfying
\begin{equation}\label{vector_operator_transformation}
R \vec{\lop}_x = 
R_x^{-1} \vec{\lop}_x R_x,
\end{equation}
and we also assume that the components of $\vec \lop _x$ are in $\alg_x$. 
Other local Hamiltonians, free energies, etc. are defined in the same terms.  Then if $\vec{\eta}$ is isotropically distributed and $\inter_0$ is invariant, we will say that the system described by $H_{\Gamma}^{h,\vec{\eta}} $ is isotropic.

We will have need of a restriction on long range interactions similar to Assumption~\ref{short_range_assumption} to extract additional results for these systems.  The assumption (employed in the proof of Lemma~\ref{little_g_bound_lemma}) is as follows:
\begin{assumption}\label{continuous_short_range_assum}
The sum
\begin{equation}
\sum_{X \ni 0} (\diam X)^2 |X| \|\inter_0(X)\|
\end{equation}
is finite.
\end{assumption}
For pair interactions, this reduces to the statement
\begin{equation}
\sum_{x \in \Zd} \|\inter_0 (\{0,x\}) \|x\|_\infty^2 < \infty
\end{equation}
found (in slightly different notation) in~\cite{QIMLetter}.

\section{Thermodynamic limit and notions of long range order}

The first requirement in talking rigorously about the thermodynamics of an infinite system is to prove the existence of the thermodynamic limit of some basic quantity.  For lattice systems one conventionally uses either the pressure (as in~\cite{Simon,Ruelle}) or the free energy density (as in~\cite{AW.CMP,Bovier}) - they are related by $P = -\beta f$, so for most purposes they are interchangeable.  We will employ the free energy density, since it has the considerable advantage of having a well-defined behavior at $\beta=\infty$ (zero temperature) where in the absence of residual entropy it coincides with the ground state energy density.

We define the free energy density for a finite system in the more or less obvious manner, as
\begin{equation}
f^h_{\Gamma,\bc}(\zeta,\ul{\omega}) := \frac{F^h_{\Gamma,\bc}(\zeta,\ul{\omega})}{|\Gamma|},
\end{equation}
where $|\Gamma|$ is the number of points in $\Gamma$.  As the notation suggests, this depends on the choice of boundary conditions and of the disorder variables.  In the thermodynamic limit, however, the dependence on boundary conditions disappears and the dependence on the disorder variables becomes trivial, as the following theorem will show.  Essentially the same statement was first proven by Vuillermot in 1977~\cite{vuillermot1977tqr}; the version given here is more suited to the present work.

\begin{theorem}[\cite{AW.CMP}]\label{Therm.Limit.Thm}
Let Assumption~\ref{short_range_assumption} be satisfied.  For any $h$, any i.i.d.\ random fields $\eta,\ul{\upsilon}$ with finite variance, any $\beta \in [0,\infty]$, there is a set $\mathcal{N} \in \Fields^{N+1}$ such that $\Prob{(\eta,\ul{\upsilon}) \in \mathcal{N}}=1$ so that the limit
\begin{equation}
\F(\beta,h) := \lim_{L \to \infty} f^h_{\Gamma_L,\bc}(\zeta,\ul{\omega})
\end{equation}
exists for all $(\zeta,\ul{\omega}) \in \mathcal{N}$, $h \in \R$, and all $\bc$, and is independent of $\zeta$, $\ul{\omega}$, and $\bc$.

Furthermore,
\begin{equation}
\lim_{L \to \infty} \frac{\| V^{\zeta,\ul{\omega}}_L \|}{L^d} = 0
\end{equation}
for all $(\zeta,\ul{\omega}) \in \mathcal{N}$.
\end{theorem}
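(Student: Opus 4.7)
The plan is to follow the self-averaging strategy of Vuillermot and Aizenman--Wehr. The backbone is twofold: (i) Lemma~\ref{Ruelle_lemma} combined with Assumption~\ref{short_range_assumption} shows that cutting a large box into smaller ones perturbs the free energy only by a boundary term of order $L^{d-1}+L^{d/2}=o(L^d)$; (ii) the free energies of disjoint sub-boxes are i.i.d.\ random variables (since $\eta,\ul{\upsilon}$ are i.i.d.), so the strong law of large numbers produces a deterministic limit.

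First I would establish convergence of $\Av f^h_{\Gamma_L,\mathrm{free}}$. Fix integers $k,L$ and decompose $\Gamma_{kL}$ into $k^d$ translates of $\Gamma_L$. The big-box Hamiltonian equals the sum of the $k^d$ sub-box Hamiltonians (each with free boundary conditions) plus an interface term $R$ collecting those interactions whose support straddles the partition. Assumption~\ref{short_range_assumption} applied in each sub-cube gives $\Av\|R\|\le k^d[C(1+|h|)L^{d-1}+C'L^{d/2}]$. Lemma~\ref{Ruelle_lemma}, after taking expectations and dividing by $(kL)^d$, yields
\[
|\Av f^h_{\Gamma_{kL}}-\Av f^h_{\Gamma_L}|\le \tfrac{C(1+|h|)}{L}+\tfrac{C'}{L^{d/2}},
\]
so $\Av f^h_{\Gamma_L}$ is Cauchy and converges to some $\F(\beta,h)$. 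For almost sure convergence, apply the SLLN to the i.i.d.\ family of sub-box free energies to get $k^{-d}\sum_i F^h_{\Gamma_L^{(i)}}\to\Av F^h_{\Gamma_L}$ a.s.\ as $k\to\infty$, and combine with the same interface estimate lifted to an a.s.\ bound on $\|R\|/(kL)^d$ via Chebyshev and Borel--Cantelli along the geometric subsequence $k=2^n$. Sending $L\to\infty$ and sandwiching between successive cubes handles general box sizes; continuity of $F$ in $h$ allows the exceptional set to be chosen uniformly in $h$.

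Independence of the boundary condition then reduces to the final assertion: for any two boundary conditions $\bc_1,\bc_2$ the Hamiltonians differ in norm by at most $2\|V_L^{\zeta,\ul{\omega}}\|$, so Lemma~\ref{Ruelle_lemma} gives $|f^h_{\Gamma_L,\bc_1}-f^h_{\Gamma_L,\bc_2}|\le 2\|V_L\|/L^d$, and the same comparison rules out dependence of $\F$ on the disorder realization. The main obstacle is that Assumption~\ref{short_range_assumption} only controls $\Av\|V_L\|/L^d\to 0$, which is convergence in mean. To upgrade to a.s.\ convergence I would decompose $V_L$ by the triangle inequality into a deterministic piece coming from $\inter_0$, which is already $O(L^{d-1})$, plus a random piece bounded by a sum of $O(L^{d-1})$ nonnegative terms of the form $|\eta_x|\,\|\lop_x\|$ or $|\omega_{\alpha x}|\,\|\gamma_{\alpha x}\|$ indexed by sites in $\partial_\alpha \Gamma_L$. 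These are nonnegative, integrable, translation-covariant functions of the disorder; since $\Zd$-translations act ergodically on the i.i.d.\ disorder space, Birkhoff's ergodic theorem (or a direct Chebyshev--Borel--Cantelli estimate along $L=2^n$ plus monotone sandwiching) yields that this random sum is $O(L^{d-1})$ a.s., hence $\|V_L\|/L^d\to 0$ a.s. This closes the last gap and, retroactively, confirms the independence of $\F$ from boundary conditions and disorder realization claimed in the theorem.
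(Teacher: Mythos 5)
Your proposal is correct and follows essentially the route the paper delegates to Aizenman--Wehr (and Vuillermot): a block decomposition of $\Gamma_{kL}$ into translates of $\Gamma_L$, with Lemma~\ref{Ruelle_lemma} plus Assumption~\ref{short_range_assumption} controlling the interface terms, giving Cauchy behavior of the averaged free energy density; almost-sure convergence from the i.i.d.\ structure of sub-box free energies together with the almost-sure smallness of $\|V_L\|/L^d$; and the boundary-condition and disorder-independence both riding on the same $\|V_L\|$ bound. The paper does not reprove this but simply observes that the classical proof only uses Lemma~\ref{Ruelle_lemma}, which holds equally for quantum traces; your sketch spells out exactly that proof, so there is no genuine difference in strategy.

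One small remark on presentation rather than content: the quantity $\|R\|/(kL)^d$ does not tend to $0$ almost surely as $k\to\infty$ at fixed $L$ (its mean stays of order $C/L$); what the block argument actually furnishes is an almost-sure $\limsup$ bound $O(1/L + 1/L^{d/2})$ at fixed $L$, which is then sent to zero by letting $L\to\infty$. You evidently intend this, but as written the ``a.s.\ bound on $\|R\|/(kL)^d$'' reads as if it vanishes in $k$ alone. Similarly, for the a.s.\ statement $\|V_L\|/L^d\to 0$, Birkhoff's theorem applies to volume sums over $\Gamma_L$ rather than directly to the varying boundary layers $\partial_\alpha\Gamma_L$; the cleanest phrasing is to sandwich the boundary sum between two volume sums (annulus $=\Gamma_{L+r}\setminus\Gamma_{L-r}$) and subtract, which is what makes the ergodic theorem deliver the claim. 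These are cosmetic tightenings; the argument is sound.
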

%
%
%
This theorem was stated for classical systems, but the proof depends only on some properties of $f$ - in particular Lemma~\ref{Ruelle_lemma} - which also hold for quantum systems.

The fact that the limiting free energy is almost certainly independent of the random field provides the following conclusion:
\begin{corollary}[Brout's prescription\cite{Brout,vuillermot1977tqr}]
  \begin{equation}
    \lim_{L \to \infty} \Av f^h_{\Gamma_L,\bc}(\zeta,\ul{\omega}) = \Av \lim_{L \to \infty} f^h_{\Gamma_L,\bc}(\zeta,\ul{\omega}).
  \end{equation}
\end{corollary}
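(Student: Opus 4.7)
The strategy is Vitali's convergence theorem. Theorem~\ref{Therm.Limit.Thm} already provides $f_L := f^h_{\Gamma_L,\bc}(\eta,\ul{\upsilon}) \to \F(\beta,h)$ almost surely to a deterministic limit, so the corollary reduces to showing that the sequence $\{f_L\}$ is uniformly integrable. I would establish this by proving the stronger bound $\sup_L \Av f_L^2 < \infty$, after which Vitali's theorem immediately yields $\Av f_L \to \F(\beta,h) = \Av \F(\beta,h)$.

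To obtain the $L^2$ bound, apply Lemma~\ref{Ruelle_lemma} with $A = -\beta H^{h,\eta,\ul{\upsilon}}_{\Gamma_L,\bc}$ and $B = -\beta H^{h,0,0}_{\Gamma_L,\bc}$ to obtain
$$|F^h_{\Gamma_L,\bc}(\eta,\ul{\upsilon}) - F^h_{\Gamma_L,\bc}(0,0)| \le \|H^{h,\eta,\ul{\upsilon}}_{\Gamma_L,\bc} - H^{h,0,0}_{\Gamma_L,\bc}\|,$$
where the right-hand side is, by construction, a sum of $O(|\Gamma_L|)$ terms of the form $|\eta_x|\|\lop_x\|$ and $|\upsilon_{\alpha x}|\|\gamma_{\alpha x}\|$. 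Dividing by $|\Gamma_L|$ expresses the random part of $f_L$ as a finite combination of averages of i.i.d.\ random variables with finite variance; squaring and taking the expectation, independence kills the cross terms down to the square of first moments (giving an $O(1)$ contribution) while the diagonal terms contribute $O(1/|\Gamma_L|)$. The deterministic piece $f^h_{\Gamma_L,\bc}(0,0) = F^h_{\Gamma_L,\bc}(0,0)/|\Gamma_L|$ is bounded using the elementary inequality $|\log Z| \le \beta\|H\| + |\Gamma_L|\log \dim \alg_0$, together with the translation-invariant bound $\|H^{h,0,0}_{\Gamma_L,\bc}\| = O(|\Gamma_L|)$ implied by Assumption~\ref{short_range_assumption}.

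Combining the two contributions, $\sup_L \Av f_L^2 < \infty$ as desired. The main obstacle---minor, but worth verifying carefully---is that for infinite-range interaction types ($N_\alpha = \infty$), the $\alpha$-sum inside the Hamiltonian difference must still be controlled in $L^2$ expectation; this is ensured by exactly the same summability conditions used to validate Assumption~\ref{short_range_assumption} in the lemmas just established, so no new analytic input is required and the argument closes cleanly.
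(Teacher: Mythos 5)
Your proof is correct, and it takes a genuinely different route from the paper's. The paper states the corollary as an immediate consequence of Theorem~\ref{Therm.Limit.Thm} without argument; the implicit reasoning in~\cite{AW.CMP} (to which Theorem~\ref{Therm.Limit.Thm} is attributed) runs through a subadditivity argument for the averaged free energy, which establishes $\Av f_L \to \F$ directly and in tandem with the a.s.\ convergence, so the exchange of limit and expectation is obtained ``for free.'' You instead treat a.s.\ convergence as given and supply the missing uniform integrability by an explicit $L^2$ bound and Vitali's theorem. This is clean and modular: it decouples the exchange-of-limits question from whatever ergodic/subadditivity machinery proves the a.s.\ convergence, and the Lemma~\ref{Ruelle_lemma} decomposition of $f_L$ into a deterministic piece plus an $\ell^1$ average of the absolute fields is exactly the right split, with independence killing the cross terms and the finite-variance hypothesis handling the diagonal.

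Two small points to verify rather than assume. First, the bound $\|H^{h,0,0}_{\Gamma_L,\bc}\| = O(|\Gamma_L|)$ is not literally a consequence of Assumption~\ref{short_range_assumption} as stated (which controls only the boundary operator $V_L$), but it is part of the standard Banach-space-of-interactions framework implicitly in force throughout the chapter and is guaranteed by the sufficient conditions of Lemma~\ref{AW_LR_lemma}; you should say so. Second, for $N_\alpha = \infty$ the claim ``exactly the same summability conditions'' needs care: the hypothesis in the relevant lemma controls $\sum_\alpha (\cdots)\Av|\upsilon_{\alpha 0}|$, a first-moment sum, whereas your $L^2$ estimate requires a compatible second-moment sum $\sum_\alpha (\cdots)\Av\upsilon_{\alpha 0}^2$ over the infinitely many interaction types. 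Without that extra summability the diagonal terms can diverge even though each $\upsilon_{\alpha 0}$ has finite variance, so you should either restrict to $N_\alpha < \infty$ (which suffices for the main propositions) or add the second-moment summability as an explicit hypothesis.
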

In other words, one can take the average over the randomness before or after the thermodynamic limit without changing the free energy.

Since $\F$ is a limit of convex functions, the following useful fact (also noted in~\cite{AW.CMP}) follows immediately from Theorem~\ref{Therm.Limit.Thm}:
\begin{corollary}
$\F(\beta,h)$ is convex as a function of $\beta$ and concave as a function of $h$.
\end{corollary}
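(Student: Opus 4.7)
The plan is to derive both properties of $\F$ as inherited properties from the finite-volume free energy densities $f^h_{\Gamma_L,\bc}(\zeta,\ul{\omega})$, which converge to $\F$ pointwise by Theorem~\ref{Therm.Limit.Thm}. The key elementary observation driving the argument is that a pointwise limit of convex (respectively concave) real-valued functions on an interval is itself convex (respectively concave), so it suffices to establish the corresponding properties at finite volume.

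First I would check the finite-volume claim: for any fixed $L$, $\zeta$, $\ul{\omega}$, and boundary condition $\bc$, the function $f^h_{\Gamma_L,\bc}(\zeta,\ul{\omega})$ is convex in $\beta$ and concave in $h$. Because the local Hamiltonian $H^{h,\zeta,\ul{\omega}}_{\Gamma_L,\bc}$ depends linearly on $h$ (through the single term $\sum_x (h+\zeta_x) B_\Gamma(\lop_x)$), the log-partition function $\log Z^h_{\Gamma_L,\bc} = \log \Tr \exp(-\beta H^{h,\zeta,\ul{\omega}}_{\Gamma_L,\bc})$ is a convex function of $h$; the cleanest proof in the quantum setting is to differentiate twice and recognize the second derivative as a nonnegative Duhamel two-point function, or equivalently to apply the Peierls--Bogoliubov inequality. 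The analogous identity $\partial_\beta^2 \log Z = \langle H^2\rangle - \langle H\rangle^2 \ge 0$ gives the convexity of $\log Z$ in $\beta$. The stated convexity/concavity of $f = -\log Z/(\beta|\Gamma_L|)$ then follow from these input facts together with the usual manipulation involving the factor $-1/\beta$ (recalling that $\beta > 0$, and interpreting the $\beta\to\infty$ endpoint by the ground-state limit built into the definition of $f$).

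To pass to the limit, I would invoke Theorem~\ref{Therm.Limit.Thm} combined with a countable-intersection argument: one fixes a single full-measure set $\mathcal{N}$ of disorder configurations on which $f^h_{\Gamma_L,\bc}(\zeta,\ul{\omega}) \to \F(\beta,h)$ for every rational $\beta \in [0,\infty]$, every rational $h$, and every $\bc$ in a suitable countable class (the intersection over a countable family of full-measure sets is still full measure). Taking $(\zeta,\ul{\omega}) \in \mathcal{N}$, the convexity in $\beta$ (respectively concavity in $h$) transfers from each finite-$L$ function to $\F$ along the rationals, and extends to all real values by continuity of convex/concave functions on the interior of their domain, or alternatively by applying Theorem~\ref{Therm.Limit.Thm} at the remaining parameter values and using the same limit-of-convex-functions principle.

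The argument is essentially routine; there is no single hard step. The only mild technical point is the one just flagged, namely that the exceptional null set provided by Theorem~\ref{Therm.Limit.Thm} must be chosen uniformly in the parameters $(\beta,h)$ so that the finite-volume convexity/concavity transfers cleanly, which is precisely what the countable-dense choice accomplishes.
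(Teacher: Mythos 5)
Your overall strategy is exactly the paper's: the paper proves this corollary with the single remark that $\F$ is a pointwise limit of convex (resp.\ concave) functions, so the content is the finite-volume statement plus the fact that pointwise limits preserve convexity; your countable-dense choice of the exceptional null set is a reasonable way to make the ``for all $(\beta,h)$ and all $\bc$ simultaneously'' claim precise. The concavity in $h$ is handled correctly: the Hamiltonian is affine in $h$, so $\log Z^h_{\Gamma,\bc}$ is convex in $h$, and multiplying by the negative constant $-1/(\beta|\Gamma|)$ (with $\beta$ held fixed) gives concavity of $f^h_{\Gamma,\bc}$ in $h$, which survives the limit.

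The gap is in the $\beta$ part. Convexity of $\log Z(\beta)$ in $\beta$ does \emph{not} pass through ``the usual manipulation involving the factor $-1/\beta$'': there one divides by the variable itself, and no convexity statement for $f(\beta)=-\log Z(\beta)/(\beta|\Gamma|)$ follows. Indeed the finite-volume claim is false for any nontrivial system: $\partial_\beta f = S/(\beta^2 |\Gamma|) \ge 0$, where $S$ is the entropy of the Gibbs state, so $f$ is nondecreasing in $\beta$ and bounded above by the ground-state energy density; a convex function on $(0,\infty)$ that is nondecreasing and bounded above must be constant. What the ``limit of convex functions'' argument actually delivers is that $\beta f$, i.e.\ $-\log Z/|\Gamma|$, is concave in $\beta$ (equivalently, that the free energy density is concave as a function of temperature). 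This defect is inherited from the statement of the corollary itself rather than introduced by your argument, and the paper offers no proof of the $\beta$ clause either; note that everything used downstream (Corollaries~\ref{LongLongRange} and~\ref{mag_ergodicity}, Theorem~\ref{ShortLongRange}) relies only on the concavity in $h$, which your argument does establish.
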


This allows us to prove some handy results which extend the relationship between the derivatives of the free energy to expectation values of certain observables from finite to infinite systems.  To begin with, note that
\begin{equation}
\frac{\partial f^h_{\Gamma,\bc}(\zeta,\ul{\omega})}{\partial h} = \frac{1}{|\Gamma|} \sum_{x \in \Gamma} \state{\lop_x}^h_{\Gamma,\bc}(\zeta,\ul{\omega}).
\end{equation}
The convexity of $\F$ does not imply that the above derivative always converges in the thermodynamic limit, but it does imply something almost as good:
\begin{corollary} \label{LongLongRange}
\begin{equation}
\LIM_{L \to \infty} \frac{1}{|\Gamma_L|} \sum_{x \in \Gamma_L} \state{\lop_x}^h_{\Gamma_L,\bc}(\zeta,\ul{\omega})
\in \left[ \frac{\partial \F}{\partial h-}, \frac{\partial \F}{\partial h+} \right],
\end{equation}
where $\LIM$ denotes the set of accumulation points, and $\tfrac{\partial}{\partial h \pm}$ denote directional derivatives with respect to $h$.
\end{corollary}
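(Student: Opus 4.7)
The plan is a standard convexity argument: at each finite volume, the free energy $f^h_{\Gamma_L,\bc}(\zeta,\ul{\omega})$ is concave and real-analytic in $h$ (since $Z^h_{\Gamma_L,\bc}$ is a finite-dimensional trace of an exponential whose exponent is affine in $h$), and its $h$-derivative coincides with the finite-volume average $m_L(h) := |\Gamma_L|^{-1} \sum_{x\in\Gamma_L} \state{\lop_x}^h_{\Gamma_L,\bc}(\zeta,\ul{\omega})$. Concavity of the finite-volume $f$ in $h$, combined with the pointwise convergence $f^h_L \to \F(\beta,h)$ supplied by Theorem~\ref{Therm.Limit.Thm} on a single full-measure set $\mathcal{N}$ that works for all $h$ simultaneously, then pins the accumulation points of $m_L(h)$ between the one-sided derivatives of $\F$.

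Concretely, I would fix $(\zeta,\ul{\omega}) \in \mathcal{N}$, fix the value of $h$ at which the corollary is to be proved, and pick auxiliary parameters $h_1 < h < h_2$. Concavity of $f^{\cdot}_{\Gamma_L,\bc}(\zeta,\ul{\omega})$ gives the two-sided secant sandwich
\begin{equation*}
\frac{f^{h_2}_{\Gamma_L,\bc}(\zeta,\ul{\omega}) - f^{h}_{\Gamma_L,\bc}(\zeta,\ul{\omega})}{h_2 - h} \;\le\; m_L(h) \;\le\; \frac{f^{h}_{\Gamma_L,\bc}(\zeta,\ul{\omega}) - f^{h_1}_{\Gamma_L,\bc}(\zeta,\ul{\omega})}{h - h_1}.
\end{equation*}
Passing $L \to \infty$ along any subsequence on which $m_L(h)$ converges, Theorem~\ref{Therm.Limit.Thm} replaces the three $f^{\cdot}_{\Gamma_L,\bc}$ by $\F(\beta,\cdot)$, so the outer expressions become the corresponding secant slopes of $\F$ evaluated at $h_2,h$ and at $h,h_1$. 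Finally letting $h_1 \uparrow h$ and $h_2 \downarrow h$ collapses those secant slopes down to $\partial\F/\partial h+$ and $\partial\F/\partial h-$ respectively, yielding the stated inclusion.

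There is no real obstacle here; the argument is the quantum analogue of the classical fact that derivatives of a pointwise-convergent sequence of concave functions accumulate inside the superdifferential of the limit. The only items that deserve a second look are (i) that the $h$-derivative of $f^h_L$ exists and equals $m_L(h)$, which is immediate from differentiating $-\beta^{-1}\log \Tr e^{-\beta H^{h,\zeta,\ul{\omega}}_{\Gamma_L,\bc}}$ by Duhamel (or from analyticity of the finite trace in $h$), and (ii) that the null set from Theorem~\ref{Therm.Limit.Thm} does not grow when we demand convergence at the three values $h_1,h,h_2$ simultaneously, which is already built into the theorem's formulation. It is also worth emphasizing that only accumulation points, not a limit, are controlled; upgrading to genuine convergence of $m_L(h)$ would require differentiability of $\F$ at $h$, i.e.\ the absence of a first-order transition at that value.
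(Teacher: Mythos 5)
Your proposal is correct and is exactly the argument the paper has in mind: the corollary is stated without a separate proof, presented as following directly from the identity $\partial f^h_{\Gamma_L,\bc}/\partial h = |\Gamma_L|^{-1}\sum_{x}\state{\lop_x}^h_{\Gamma_L,\bc}$, the concavity of $\F$ in $h$, and the pointwise convergence supplied by Theorem~\ref{Therm.Limit.Thm}; your two-sided secant sandwich is precisely the standard way of extracting the conclusion from those ingredients. The only minor wrinkle is that your appeal to real-analyticity of $f^h_L$ is only literally true for $\beta<\infty$; at $\beta=\infty$ the finite-volume energy is merely concave and $m_L(h)$ is a one-sided limit lying in its superdifferential, but the secant inequalities still hold and the rest of the argument is unchanged.
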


The above statement is about the average of $\state{\lop}$ over the whole system, or in other words it is a statement about ``long long range order''.  It is also possible to make a similar statement relating to ``short long range order'':
\begin{theorem}\label{ShortLongRange}
\begin{equation}
\LIM_{L \to \infty} \LIM_{M \to \infty} \frac{1}{|\Gamma_L|} \sum_{x \in \Gamma_L} \state{\lop_x}^h_{\Gamma_M,\bc}(\zeta,\ul{\omega})
\in \left[ \frac{\partial \F}{\partial h-}, \frac{\partial \F}{\partial h+} \right].
\end{equation}
\end{theorem}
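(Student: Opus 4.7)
The plan is to sandwich $Y_{L,M}(\zeta,\ul{\omega}) := \tfrac{1}{|\Gamma_L|}\sum_{x\in\Gamma_L}\state{\lop_x}^h_{\Gamma_M,\bc}(\zeta,\ul{\omega})$ between secant slopes of the free energy and then identify the iterated thermodynamic limits of those slopes with the subdifferential of $\F(\beta,\cdot)$ at $h$. Since the Hamiltonian obtained by shifting the external field on $\Gamma_L$ from $h$ to $h+\delta$ depends linearly on $\delta$, the associated free energy $F^{h+\delta\text{ on }\Gamma_L}_{\Gamma_M,\bc}(\zeta,\ul{\omega})$ is concave in $\delta$ with derivative $|\Gamma_L|Y_{L,M}$ at $\delta=0$. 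The tangent-line inequality at $\delta=0$ gives, for every $\delta>0$,
\[
\frac{F^{h+\delta\text{ on }\Gamma_L}_{\Gamma_M} - F^h_{\Gamma_M}}{\delta|\Gamma_L|}
\;\le\; Y_{L,M} \;\le\;
\frac{F^h_{\Gamma_M} - F^{h-\delta\text{ on }\Gamma_L}_{\Gamma_M}}{\delta|\Gamma_L|},
\]
with both outer terms bounded in absolute value by $\|\lop_0\|$ via Lemma~\ref{Ruelle_lemma}.

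The heart of the argument is to show that, for every fixed $\delta>0$ and a.e.\ $(\zeta,\ul{\omega})$,
\[
\lim_{L\to\infty}\lim_{M\to\infty}\frac{F^{h+\delta\text{ on }\Gamma_L}_{\Gamma_M}(\zeta,\ul{\omega}) - F^h_{\Gamma_M}(\zeta,\ul{\omega})}{|\Gamma_L|}
\;=\; \F(\beta,h+\delta) - \F(\beta,h),
\]
and analogously for $-\delta$. Once this is in place, the sandwich together with the one-sided limits $\tfrac{\partial\F}{\partial h+}=\lim_{\delta\to 0^+}\tfrac{\F(\beta,h+\delta)-\F(\beta,h)}{\delta}$ and $\tfrac{\partial\F}{\partial h-}=\lim_{\delta\to 0^+}\tfrac{\F(\beta,h)-\F(\beta,h-\delta)}{\delta}$ puts every element of $\LIM_{L\to\infty}\LIM_{M\to\infty}Y_{L,M}$ inside the subdifferential of the concave function $\F(\beta,\cdot)$ at $h$. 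For this identification I would use a tiling argument: with periodic boundary conditions and $L\mid M$, tile $\Gamma_M$ by $K:=|\Gamma_M|/|\Gamma_L|$ disjoint translates $\Gamma_L^{(1)},\dots,\Gamma_L^{(K)}$ of $\Gamma_L$. Since the uniform field $(h+\delta/K)\mathbf{1}_{\Gamma_M}$ is the arithmetic mean of the $K$ fields $h+\delta\mathbf{1}_{\Gamma_L^{(j)}}$, joint concavity of $F$ in the field vector (inherited from concavity in each $h_x$) gives, via Jensen's inequality,
\[
\frac{f^{h+\delta/K}_{\Gamma_M} - f^h_{\Gamma_M}}{\delta/K}
\;\ge\;
\frac{1}{K}\sum_{j=1}^K \frac{F^{h+\delta\text{ on }\Gamma_L^{(j)}}_{\Gamma_M} - F^h_{\Gamma_M}}{\delta|\Gamma_L|}.
\]
As $M\to\infty$ the left side becomes, a.s.\ by Theorem~\ref{Therm.Limit.Thm} and concavity of $\F$, a secant slope of $\F(\beta,\cdot)$ which tends to $(\F(\beta,h+\delta)-\F(\beta,h))/\delta$ as $L\to\infty$; the right side is the spatial average of $K$ translates of a uniformly bounded random variable which, by translation-invariance of the i.i.d.\ field $(\eta,\ul{\upsilon})$, should converge to its common expectation. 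A reflected tiling with the field shifted by $-\delta$ controls the upper bound.

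The main obstacle is this last step: each summand $\mathcal{D}^{(j)}_{L,M}(\delta) := (F^{h+\delta\text{ on }\Gamma_L^{(j)}}_{\Gamma_M}-F^h_{\Gamma_M})/(\delta|\Gamma_L|)$ depends on the random field throughout $\Gamma_M$ rather than only in a neighborhood of $\Gamma_L^{(j)}$, so the multidimensional ergodic theorem cannot be applied to it verbatim. I would handle this by using the Lipschitz bound of Lemma~\ref{Ruelle_lemma} to show that replacing $(\eta_y,\ul{\upsilon}_y)$ by a fixed value for $y$ outside an enlarged cube of radius $L'$ around $\Gamma_L^{(j)}$ perturbs $\mathcal{D}^{(j)}_{L,M}(\delta)$ by an amount whose $L^1$ norm is uniform in $M$ and vanishes as $L'\to\infty$; the ergodic theorem then applies cleanly to the truncated, genuinely local approximation, and letting $L'\to\infty$ at the end recovers the a.s.\ identification. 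This mirrors the ergodic ingredients in~\cite{AW.CMP} while, as promised, never invoking an infinite-volume state.
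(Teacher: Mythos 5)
Your opening move---expressing $Y_{L,M}$ as a $\delta$-derivative of the free energy with the field shifted on $\Gamma_L$ and sandwiching it between secant slopes by concavity---is exactly the paper's first step. But for the crucial identification
\[
\lim_{L\to\infty}\lim_{M\to\infty}\frac{F^{h+\delta\text{ on }\Gamma_L}_{\Gamma_M}-F^{h}_{\Gamma_M}}{|\Gamma_L|}
=\F(\beta,h+\delta)-\F(\beta,h),
\]
the paper's route is much shorter and avoids the difficulties you run into. It applies Lemma~\ref{Ruelle_lemma} to \emph{decouple} $\Gamma_L$ from its complement in $\Gamma_M$: dropping the boundary interaction $V_L^{\zeta,\ul{\omega}}$ changes each free energy by at most $\|V_L^{\zeta,\ul{\omega}}\|$, and once the two regions are decoupled the $\Gamma_M\setminus\Gamma_L$ contributions cancel in the difference, leaving $F^{h+\delta}_{\Gamma_L,0}-F^{h}_{\Gamma_L,0}$ (free boundary conditions on $\Gamma_L$) plus an error $O(\|V_L^{\zeta,\ul\omega}\|)$ uniform in $M$. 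Dividing by $L^d$ and invoking both conclusions of Theorem~\ref{Therm.Limit.Thm} (convergence of the free energy density for all boundary conditions, and $\|V_L\|/L^d\to 0$) gives the claim directly---no tiling, no Jensen, no ergodic theorem.

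The tiling--ergodic route you propose has a genuine gap at precisely the point you flag as ``the main obstacle.'' Lemma~\ref{Ruelle_lemma} only gives a site-wise Lipschitz bound on each free energy separately (changing $\zeta_y$ by $\Delta$ changes $F$ by at most $|\Delta|$), with no decay in the distance between $y$ and $\Gamma_L^{(j)}$. Applied to the \emph{difference} $\mathcal{D}^{(j)}_{L,M}$, it therefore gives a perturbation bound of order $\sum_{y\notin \Lambda'}|\zeta_y-\zeta'_y|/|\Gamma_L|$, which grows like $|\Gamma_M|/|\Gamma_L|$ and is neither uniform in $M$ nor vanishing as $L'\to\infty$. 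What you actually need is to make $\mathcal{D}^{(j)}_{L,M}$ genuinely local, and the correct device for that is to cut the \emph{interaction} across $\partial\Gamma_L^{(j)}$ at cost $\|V_L\|$---which is precisely the paper's step, and which makes the tiling and the ergodic theorem unnecessary. There is also a secondary problem in the passage to the limit: after $M\to\infty$ the ratio $K=|\Gamma_M|/|\Gamma_L|$ diverges, so the left-hand side of your Jensen inequality has step size $\delta/K\to 0$ already at fixed $L$, and the subsequent ``$L\to\infty$'' does nothing; identifying that double limit with a one-sided derivative requires a separate locally-uniform-convergence argument that you have not supplied.
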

\begin{proof}
Let $F^{h,\delta,\Lambda}_{\Gamma,B}$ denote the free energy with the fixed field within $\Lambda$ changed by $\delta$, so that
\begin{equation}
\frac{1}{|\Gamma_L|} \sum_{x \in \Gamma_L} \state{\lop_x}^h_{\Gamma_M,\bc}(\zeta,\ul{\omega}) = \left. \frac{1}{|\Gamma_L|}\frac{\partial F^{h,\delta,\Lambda}_{\Gamma_M,B}}{\partial \delta} \right|_{\delta=0}
\end{equation}
Now from Lemma~\ref{Ruelle_lemma} we see that
\begin{equation}
\frac{1}{|\Gamma_L|} \left( F^{h,\delta,\Gamma_L}_{\Gamma_M,B} -  F^{h,\delta,\Gamma_L}_{\Gamma_M,B} \right)
= f^{h+\delta}_{\Gamma_L,0} - f^{h}_{\Gamma_L,0} + O \left( \frac{\|V^{\zeta,\ul{\omega}}\|}{L^d} \right)
\end{equation}
uniformly in $M$.  Then for $(\zeta,\ul{\omega}) \in \mathcal{N}$, this implies that
\begin{equation}
\lim_{L \to \infty} \frac{1}{|\Gamma_L|} \left( F^{h,\delta,\Gamma_L}_{\Gamma_M,B} -  F^{h,\delta,\Gamma_L}_{\Gamma_M,B} \right)
=\F(\beta,h+\delta)-\F(\beta,h)
\end{equation}
and the conclusion follows by standard convexity arguments.\qed
\end{proof}

Choosing a positive sequence $\delta_i \to 0$ such that $\F$ is differentiable at all $h\pm \delta_i$, we have also
\begin{equation}
\lim_{i \to \infty} \lim_{L \to \infty} \lim_{M \to \infty} \frac{1}{|\Gamma_L|} \sum_{x \in \Gamma_L} \state{\lop_x}^{h\pm \delta_i}_{\Gamma_L,\bc}(\zeta,\ul{\omega})
=\lim_{i \to \infty} \left. \frac{\partial \F}{\partial h} \right|_{h\pm \delta_i} = \frac{\partial \F}{\partial h \pm},
\end{equation}
which together with the individual ergodic theorem (applicable since the random fields are i.i.d) this implies
\begin{corollary}\label{mag_ergodicity}
\begin{equation}
\lim_{i \to \infty} \lim_{L \to \infty} \Av \state{\lop_x}^{h\pm \delta_i}_{\Gamma_L,\bc}(\zeta,\ul{\omega})
= \frac{\partial \F}{\partial h \pm},
\end{equation}
\end{corollary}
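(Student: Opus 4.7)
\textbf{Proof plan for Corollary~\ref{mag_ergodicity}.}
The strategy is to use the multidimensional Birkhoff ergodic theorem to identify the spatial average appearing in the display immediately preceding the corollary with a disorder expectation. Since $\eta$ and $\ul{\upsilon}$ are i.i.d., the product measure on $\Fields^{N+1}$ is stationary and ergodic under the $\Zd$ action by lattice translations, so that for every bounded measurable observable $F$ on the disorder space,
\begin{equation*}
\lim_{L\to\infty}\frac{1}{|\Gamma_L|}\sum_{y\in\Gamma_L} F(T_y\zeta,T_y\ul{\omega})=\Av F \qquad \text{a.s.}
\end{equation*}

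I would apply this with $F_M(\zeta,\ul{\omega}):=\state{\lop_0}^{h\pm\delta_i}_{\Gamma_M,\bc}(\zeta,\ul{\omega})$, which is bounded in absolute value by $\|\lop\|$. Translation invariance of $\inter_0$ together with the form~\eqref{HDef} of the Hamiltonian give $F_M(T_y\zeta,T_y\ul{\omega})=\state{\lop_y}^{h\pm\delta_i}_{\Gamma_M+y,\bc'}(\zeta,\ul{\omega})$, where $\bc'$ is the correspondingly translated boundary condition. Up to a correction caused by the mismatch between $\Gamma_M+y$ and the fixed $\Gamma_M$ (and between $\bc'$ and $\bc$), which Lemma~\ref{Ruelle_lemma} controls by the norm of the interaction terms linking $\Gamma_M$ and $\Gamma_M+y$, this agrees with $\state{\lop_y}^{h\pm\delta_i}_{\Gamma_M,\bc}(\zeta,\ul{\omega})$ for $y$ in the bulk of $\Gamma_L$. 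Thus, modulo a subset of $\Gamma_L$ of vanishing relative density, the spatial average in the preceding display coincides with the Birkhoff sum of $F_M$.

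Combining this identification with the preceding display, then taking $M\to\infty$, yields for each $i$
\begin{equation*}
\lim_{M\to\infty}\Av\state{\lop_0}^{h\pm\delta_i}_{\Gamma_M,\bc}(\zeta,\ul{\omega})=\left.\frac{\partial\F}{\partial h}\right|_{h\pm\delta_i}.
\end{equation*}
Renaming $M$ as $L$ (both index centered cubes) gives the statement with the distinguished site $x=0$; for any other fixed $x$ the same conclusion follows because translation covariance of the disorder average reduces $\Av\state{\lop_x}^h_{\Gamma_L,\bc}-\Av\state{\lop_0}^h_{\Gamma_L,\bc}$ to a further boundary effect that vanishes as $L\to\infty$. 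Finally, letting $i\to\infty$ and invoking the convexity of $\F$ to pass $\left.\partial\F/\partial h\right|_{h\pm\delta_i}$ to $\partial\F/\partial h\pm$ completes the argument, with bounded convergence justifying the exchange of limit and expectation throughout.

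The main obstacle is the boundary correction created by translating $\Gamma_M$ together with the disorder: unlike the classical setting, neither the site operator nor the Gibbs state are automatic functions of a shift-covariant observable, so one must verify that $\state{\lop_y}^{h\pm\delta_i}_{\Gamma_M,\bc}$ and $\state{\lop_y}^{h\pm\delta_i}_{\Gamma_M+y,\bc'}$ differ negligibly for $y$ in the bulk. This is handled exactly as in the proof of Theorem~\ref{ShortLongRange}, via Lemma~\ref{Ruelle_lemma} applied to differences of Hamiltonians that live on a shell of width $O(|y|)$ near $\partial\Gamma_M$, combined with the uniform bound $\|\state{\lop}\|\le\|\lop\|$; restricting the Birkhoff sum to $y$ at distance $\gg 1$ from $\partial\Gamma_L$ and $\ll M$ from its center suffices.
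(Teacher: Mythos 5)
Your overall plan --- pass from a spatial average to a disorder expectation via an ergodic theorem and then invoke the preceding display --- is in the right spirit, but two of the pivotal steps as written fail.

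The first is the comparison of $\state{\lop_y}^{h\pm\delta_i}_{\Gamma_M+y,\bc'}(\zeta,\ul{\omega})$ with $\state{\lop_y}^{h\pm\delta_i}_{\Gamma_M,\bc}(\zeta,\ul{\omega})$. Lemma~\ref{Ruelle_lemma} bounds differences $\left|\log\Tr e^A-\log\Tr e^B\right|$, i.e.\ free energy differences; it gives no bound on the difference of normalized Gibbs expectations $\Tr(\lop\, e^A)/\Tr e^A$ and $\Tr(\lop\, e^B)/\Tr e^B$. A small $\|A-B\|$ emphatically does \emph{not} force those expectations to be close --- that sensitivity is exactly what the whole argument is built around at a first order transition. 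The paper therefore never compares expectation values directly by Ruelle: it differentiates the free energy density in $h$, applies Lemma~\ref{Ruelle_lemma} at the free energy level, and only then extracts information about $\state{\lop_x}$ through the concavity of $\F$; see the proof of Theorem~\ref{ShortLongRange}.

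The second is an order-of-limits conflict. Birkhoff applied to the fixed observable $F_M$ requires $L\to\infty$ at fixed $M$, but your comparison restricts the Birkhoff sum to $y$ at distance $\ll M$ from the center of $\Gamma_L$; as $L\to\infty$ at fixed $M$, the fraction of $\Gamma_L$ meeting that restriction tends to zero, so the restricted sum cannot determine the Birkhoff average. The restriction you impose asymptotically empties the sum you are trying to control.

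What the paper actually uses here is lighter. For periodic boundary conditions (the case in which the corollary is later applied), torus translation invariance of $\Gamma_L$, translation invariance of $\inter_0$, and the i.i.d.\ disorder make $\Av\state{\lop_x}^{h\pm\delta_i}_{\Gamma_L,\bc}$ independent of $x\in\Gamma_L$ at every finite $L$; hence it equals the disorder average of the spatial mean $\tfrac{1}{|\Gamma_L|}\sum_{y\in\Gamma_L}\state{\lop_y}^{h\pm\delta_i}_{\Gamma_L,\bc}$. The preceding display shows that this random spatial mean converges a.s.\ to the nonrandom constant $\left.\partial\F/\partial h\right|_{h\pm\delta_i}$ (Corollary~\ref{LongLongRange} at differentiability points), and since everything is uniformly bounded by $\|\lop_0\|$, bounded convergence passes the limit through the disorder average. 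The ergodic-theoretic input is the a.s.\ nonrandomness of $\F$ from Theorem~\ref{Therm.Limit.Thm}, which underlies the preceding display, not a separate Birkhoff argument on the Gibbs expectations. Your closing move --- send $i\to\infty$ and use convexity to pass from $\left.\partial\F/\partial h\right|_{h\pm\delta_i}$ to $\partial\F/\partial h\pm$ --- is correct.
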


\section{Statement of main results}

The first main result of the following chapters is:
\begin{proposition} \label{main_prop}
In dimensions $d \le 2$, any system of the type described in in Section~\ref{defs_section}, with $\eta$ an i.i.d.\ Lyapunov random field and $\ul{\gamma}$ i.i.d, has $\F$ differentiable in $h$ for all $h$, provided any of the following hold:
\begin{itemize}
\item The system satisfies the weak FKG property with respect to $\lop$, $\beta < \infty$, and the distribution of $\eta$ is nontrivial
\item $\beta < \infty$, and the distribution of $\eta_0$ has no isolated point masses (i.e.\ there are no real numbers $x$ and $\delta>0$ such that $\Prob{|\eta_0-x| \le \delta} = \Prob{\eta_0=x}>0$)
\item The distribution of $\eta_0$ is absolutely continuous with respect to the Lebesgue measure
\end{itemize}
\end{proposition}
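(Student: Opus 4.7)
The plan is to argue by contradiction: suppose $\F$ fails to be differentiable at some $h_0$, so that $\partial \F / \partial h- < \partial \F / \partial h+$ at $h_0$. By Corollary~\ref{mag_ergodicity}, this means the averaged magnetization $\state{\lop_x}$ jumps across $h_0$, so the system undergoes a first-order transition there. Following the blueprint laid out in the introduction to this chapter, I would construct a random variable $G_L$ that measures the free-energy effect of the random field inside $\Gamma_L$ and then derive a deterministic upper bound and a stochastic lower bound on $|G_L|$ that are incompatible in $d \le 2$.

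First I would define $G_L$ to be a centered version of $F^{h_0}_{\Gamma_L,\bc}(\eta,\ul{\omega})$, for instance the difference between this free energy and its conditional expectation $\condAv{F^{h_0}_{\Gamma_L,\bc}(\eta,\ul{\omega})}{\eta_{\Gamma_L^C}, \ul{\omega}_{\Gamma_L^C}}$. Since any two specifications of $(\eta_{\Gamma_L^C}, \ul{\omega}_{\Gamma_L^C})$ produce Hamiltonians that differ only through boundary terms absorbed into $V_L^{\zeta,\ul{\omega}}$, Lemma~\ref{Ruelle_lemma} yields
\begin{equation}
|G_L| \le 2 \bigl\| V_L^{\zeta,\ul{\omega}} \bigr\| + 2\Av \bigl\| V_L^{\zeta,\ul{\omega}} \bigr\|,
\end{equation}
and Assumption~\ref{short_range_assumption} then delivers the deterministic bound $|G_L| \le C L^{d-1} + C' L^{d/2}$ almost surely (with an adjusted constant).

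Next I would apply the nonlinear central limit theorem of Chapter~\ref{CLT_section} to $G_L$ regarded as a function of the i.i.d.\ collection $(\eta_x)_{x \in \Gamma_L}$. The Lyapunov hypothesis on $\eta$ provides the required moment condition. The crucial input is a variance lower bound of the form $\mathrm{Var}(G_L) \ge c L^d$ with $c > 0$, which would give $G_L \approx \mathcal{N}(0, c L^d)$ and hence tails on scale $L^{d/2}$. Such a lower bound is exactly what the magnetization jump $\partial \F/\partial h+ - \partial \F/\partial h- > 0$ supplies: the discontinuity forces a nontrivial response of $G_L$ to shifts of a positive density of the $\eta_x$, which translates into an $\Omega(L^d)$ contribution to the variance. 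Once both bounds are in place, the conclusion is immediate in $d \le 2$: a Gaussian on scale $L^{d/2}$ exceeds any deterministic ceiling of the form $C L^{d-1} + C' L^{d/2}$ with positive probability (at $d=2$ one needs only that the Gaussian's upper tail pokes past the constant $C'$ multiplying the same power $L^{d/2}$), contradicting the first bound.

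The main obstacle will be converting the magnetization jump into the linear $\Omega(L^d)$ variance lower bound on $G_L$, and this is precisely where the three alternative hypotheses split the argument. When the distribution of $\eta_0$ is absolutely continuous, one can differentiate under the measure: a uniform-in-$L$ estimate of $\partial G_L / \partial \eta_x$ combined with the magnetization jump yields the variance directly. Without absolute continuity but with no isolated point masses and $\beta < \infty$, the smearing comes instead from a Poincar\'{e}-type inequality on the conditional distribution, exploiting local thermal fluctuations to replace differentiation with finite differences. Under the weak FKG property with a nontrivial i.i.d.\ distribution and $\beta < \infty$, monotonicity in $\eta$ converts the gap in one-sided derivatives of $\F$ into a lower bound on the covariance between $G_L$ and a monotone functional of $\eta_{\Gamma_L}$, and FKG in turn bounds this covariance below by a variance. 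I expect that each case can be handled by a variant of the same coupling argument, but verifying the variance bound cleanly under the weak FKG hypothesis, where one cannot differentiate in $\eta$ at all, is likely to be the most delicate step.
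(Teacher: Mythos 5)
Your outline has the right global shape — a deterministic surface-order upper bound against a stochastic $L^{d/2}$-scale lower bound from a nonlinear CLT — but the specific $G_L$ you propose does not do the job, and the gap is not one that can be patched in the sketch you give. The paper's $G_L$ (Equation~\eqref{GdeltaDef} followed by~\eqref{GLDef}) is a four-term combination
\begin{equation}
\condAv{F^{h+\delta}_M(\eta) - F^{h+\delta}_M(r_L\eta) - F^{h-\delta}_M(\eta) + F^{h-\delta}_M(r_L\eta)}{\eta_{\Gamma_L}=\zeta_{\Gamma_L}},
\end{equation}
conditioned on the field \emph{inside} $\Gamma_L$, defined on $\Gamma_M$ with $M \to \infty$ and then $\delta \to 0$. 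Your proposed $G_L$ — the finite-volume free energy at $h_0$ minus its conditional expectation given the field \emph{outside} $\Gamma_L$ — differs from this in every essential respect, and each difference breaks a step. First, it depends on $\eta_{\Gamma_L^C}$, so it fails hypothesis~\ref{local_hypothesis} of Proposition~\ref{CLT_prop}; the paper conditions on $\eta_{\Gamma_L}$ precisely so the result is a function of the interior field alone. Second, your claimed upper bound $|G_L| \le 2\|V_L\| + 2\Av\|V_L\|$ is not correct for your $G_L$: varying $\eta_{\Gamma_L}$ moves $F^{h_0}_{\Gamma_L,\bc}$ by $O(L^d)$ while leaving $\condAv{F}{\eta_{\Gamma_L^C}}$ unchanged, so the centered quantity has range $O(L^d)$, not $O(L^{d-1})$. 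In the paper, the cancellation down to surface order comes only from the full four-term combination: the $r_L$ subtraction lets the exterior contribution drop out exactly, and the $h\pm\delta$ subtraction leaves a remainder of size $O(\delta L^d)$ that vanishes as $\delta \to 0$ (see the chain of estimates in Section~\ref{wrapup} culminating in~\eqref{G_L_upper_bound}).

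The lower bound side has the same defect and for the same reason. You correctly guess that $\partial G_L / \partial \eta_x$ should carry the magnetization information, but for your $G_L$ this derivative is $\state{\lop_x}^{h_0}_{\Gamma_L}$, a single magnetization at $h_0$, not the jump $\frac{\partial\F}{\partial h+} - \frac{\partial\F}{\partial h-}$. A lower bound built from that derivative would be nonzero whenever the magnetization is nonzero, which would ``rule out'' smooth behavior at generic $h$ — plainly too strong. The $h\pm\delta$ structure is exactly what replaces the local magnetization with the discontinuity: as worked out in Sections~\ref{convergence.section.1}--\ref{convergence.section.2}, one obtains $\Av G_1' = \tfrac{1}{2}(\partial\F/\partial h+ - \partial\F/\partial h-)$, and this feeds into Theorem~\ref{AW_theta} (which is where the three alternative hypotheses on the distribution of $\eta_0$ actually enter; your proposed Poincar\'{e} and covariance arguments are not what the paper does, and in any case cannot rescue the wrong $G_L$). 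Finally, you pass over the existence of the limit defining $G_L$, which is genuinely delicate — the paper devotes Proposition~\ref{convergence_prop} and two sections of compactness and weak-convergence arguments to it, and these arguments are also needed to verify the consistency hypothesis~\ref{consistency_assumption} of the CLT.
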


We note that the phrasing of the result relates to the way we have arranged the Hamiltonians of the systems under consideration, so that the result has something to say only when the random field can be expressed as part of the source field 
for the order parameter, in other words when the randomness is direct in the sense used on p.~\pageref{Direct.label} above.


We also establish
\begin{proposition} \label{continuous_prop}
In dimensions $d \le 4$, any isotropic system of the type described in Section~\ref{continuous_def} satisfying Assumption~\ref{continuous_short_range_assum} has $\nabla_{\vec h} \F$ continuous at $0$, provided the distribution of $\vec{\eta}$ is isotropic and one of the following holds:
\begin{itemize}
 \item $|\vec{\eta}_0|>0$ with probability 1, or
 \item $\beta < \infty$, and the distribution of 
     $\vec{\eta}$ is not concentrated at a single point.
 \end{itemize}
\end{proposition}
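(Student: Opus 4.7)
The plan is to follow the same $G_L$-framework as in the proof of Proposition~\ref{main_prop}, constructing an $L$-dependent random variable $G_L$ that is simultaneously pinned from above by a deterministic bound of the form~\eqref{continuousUpperBound} and from below by an asymptotic normal distribution of variance of order $L^d$, produced by the nonlinear CLT of Chapter~\ref{CLT_section} whenever $\nabla_{\vec h}\F$ fails to be continuous at $\vec h = 0$. Under the isotropy hypothesis $\F(\beta,\vec h)$ depends on $\vec h$ only through $|\vec h|$, so continuity of $\nabla \F$ at the origin is equivalent to the vanishing of the one-sided directional derivative $\partial_{h^+}\F(\beta, h\hat e)\bigr|_{h=0}$ along any (equivalently every) direction $\hat e$; the two alternative hypotheses on $\vec \eta_0$ serve only to validate the non-degeneracy conditions of the CLT for the specific $G_L$ constructed here (absolute continuity of $\hat e \cdot \vec\eta_0$ at zero temperature, a non-trivial isotropic distribution at finite temperature).

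The crucial new ingredient, relative to Proposition~\ref{main_prop}, is the improved deterministic bound~\eqref{continuousUpperBound}. Following the Pfister--Mermin--Wagner strategy for continuous symmetries, I would compare the free energy of the system with random field $\vec \eta$ to that of the system in which $\vec\eta_x$ is replaced by $R^{(L)}_x \vec\eta_x$, where the rotations $R^{(L)}_x \in SO(N)$ interpolate smoothly from a fixed $R \in SO(N)$ on an inner box down to the identity outside $\Gamma_L$, with $\|R^{(L)}_x - R^{(L)}_y\| = O(|x-y|/L)$. By Lemma~\ref{Ruelle_lemma} this free-energy change is controlled by the operator norm of the difference of the two Hamiltonians; by the $SO(N)$-invariance of $\inter_0$ that difference equals a sum over terms in which each local interaction $\inter_0(X)$ is conjugated by a product of rotations close to the identity. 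A second-order Taylor expansion around the identity is then the decisive step: the first-order piece vanishes after the averaging forced by $SO(N)$-invariance, while the second-order piece contributes at most $(\diam X)^2 |X|\, \|\inter_0(X)\| / L^2$ per lattice translate of $X$ inside the shell. Summing over the $L^d$ translates and then over $X \ni 0$, the series converges by Assumption~\ref{continuous_short_range_assum} and yields the $L^{d-2}$ part of the bound; the $L^{d/2}$ part comes from the boundary random-field terms exactly as in Proposition~\ref{main_prop}.

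The remaining ingredients — measurability of $G_L$, its martingale decomposition along the filtration generated by successive blocks of $\vec\eta$, and verification of the Lyapunov and non-degeneracy conditions of the CLT — run parallel to their counterparts for Proposition~\ref{main_prop}, with sign-flips of the field replaced by rotations. In the borderline dimension $d = 4$ both $L^{d-2}$ and $L^{d/2}$ are of order $L^2$, so the contradiction between the deterministic bound and the CLT limit must be produced carefully: the Taylor expansion makes the $L^{d-2}$ coefficient proportional to $\|R - \mathbf 1\|^2$, whereas the CLT variance scales linearly in the size of the putative discontinuity, and letting $R \to \mathbf 1$ along a well-chosen family forces that discontinuity to be zero, giving continuity rather than outright differentiability. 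The hard part of this programme will be the bookkeeping in the Taylor expansion of the rotated interaction: the cancellation of the first-order contribution uses $SO(N)$-invariance in an essentially spin-wave manner, and it is there that the stronger Assumption~\ref{continuous_short_range_assum}, rather than merely Assumption~\ref{short_range_assumption}, is required.
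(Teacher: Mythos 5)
The general programme — reuse the $G_L$-machinery from Proposition~\ref{main_prop}, and upgrade the upper bound to $O(L^{d-2})$ via a slowly varying rotation whose second-order Taylor expansion is controlled by Assumption~\ref{continuous_short_range_assum} — is the right one, and the counting of the second-order contribution ($L^d$ translates, each contributing $(\diam X)^2|X|\,\|\inter_0(X)\|/L^2$) is essentially what Lemma~\ref{little_g_bound_lemma} does. However, there is a genuine gap in the mechanism you give for killing the first-order terms of the Taylor expansion. You attribute the cancellation to ``$SO(N)$-invariance'' and ``averaging''; but the first-order contribution is the commutator $\sum_{y\in X}(\theta_x-\theta_y)[\rho_y,\inter_0(X)]$, which arises entirely from the nonrandom interaction and is not touched by isotropy of $\vec\eta$. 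Naively bounding the free-energy change by $\|\Delta H_\theta\|$ via Lemma~\ref{Ruelle_lemma} would leave a sum of $O(L^d)$ terms each of size $O(\diam X/L)$, i.e.\ $O(L^{d-1})$, which is no improvement over Assumption~\ref{short_range_assumption}. The paper's key device, which your write-up omits, is to split $F_M^{-\delta\hat e}$ into two halves, apply rotations in opposite directions ($\theta$ and $-\theta$) to the two halves, and use a combination of Cauchy--Schwarz, Golden--Thompson, and Lemma~\ref{Ruelle_lemma} to derive $\log\Tr e^A - \tfrac12\log\Tr e^B - \tfrac12\log\Tr e^C \le \|A-\tfrac12(B+C)\|$. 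Only then are the first-order terms, being odd in $\theta$, cancelled in $\Delta H_\theta + \Delta H_{-\theta}$, so that $\|\Delta H_\theta + \Delta H_{-\theta}\| = O(L^{d-2}) + O(\delta L^d)$. The lower bound on $g$ follows by the antisymmetry $g_{L,M}^{\delta\hat e} = -g_{L,M}^{-\delta\hat e}$. Without this two-sided rotation and trace-inequality step, the claimed $L^{d-2}$ bound does not follow.

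Two further points. First, your rotation interpolates from a nontrivial $R$ inside down to the identity outside $\Gamma_L$; the paper does the reverse (Equation~\eqref{angles_theta_def}: $\theta_x=0$ on $\Gamma_L$, $\theta_x=\pi$ far outside), and this matters. The random field inside $\Gamma_L$ is conditioned on (fixed at $\vec\zeta$) and the rotation must act trivially there; the random field outside is averaged over, and only there can isotropy of the distribution absorb the rotation of $\vec\eta_x$. Second, your proposed special handling of the critical dimension $d=4$ by letting $R\to\mathbf 1$ along a well-chosen family is not needed and does not match how the critical dimension is handled. The rotation must run from $0$ to $\pi$ to convert $-\delta\hat e$ to $+\delta\hat e$; there is no small parameter to send to zero. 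The borderline case $L^{d-2}\asymp L^{d/2}$ is resolved exactly as in Section~\ref{wrapup}: the moment generating function of a normal with $b\ne 0$ grows like $e^{t^2b^2/2}$, which for large $t$ exceeds the $e^{tA}$ bound obtained from $|G_L|\le A L^{d/2}$, giving a contradiction uniformly including $d=4$.
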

We note that an apparently weaker condition on the distribution of $\vec{\eta}$ is adequate because the what will ultimately be important is the distribution of a particular component.  With the assumption of an isotropic distribution for the vector, the components satisfy the stronger conditions used in Proposition~\ref{main_prop} or Theorem~\ref{AW_theta}, as discussed above.

\chapter{A nonlinear central limit theorem}\label{CLT_section}
\section{Background}
The ``classical'' central limit theorem, long one of the central elements of probability theory, states that a sum of independent random variables with finite variance converges, in distribution and on an appropriate rescaling, to a normally distributed random variable.  There are a number of generalizations, the best-known due to Lindeberg~\cite{Lindeberg}, which generalize this notion by replacing the i.i.d assumption with a weaker assumption, including the possibility that the distribution of the variables, as well as their cardinality, changes as the limit is taken.

We will present a clarified version of a result due to Aizenman and Wehr~\cite{AW.CMP} which builds on results of that kind to replace the customary sum with a member of a much larger class of functions, which however have certain properties (a partial symmetry with respect to permutations of arguments, and a fairly strong continuity) in common with it.  This exposition also incorporates a necessary correction pointed out by Bovier~\cite{Bovier}.

We should note that the statement that a certain sequence of random variables, described as a related collection of Lipschitz continuous functions of a family of independent random variables, converges in distribution to a normal random variable, is closely related to the concentration of measure phenomenon~\cite{Ledoux2001concentration}.  Among its many other facets, this involves upper bounds on the probability with which certain classes of random variables described as functions of a family of $N$ independent random variables.  A central limit theorem involves an estimate of a similar form.  The result we will discuss is stronger than a concentration estimate in that it provides a lower bound as well as an upper bound (the latter would not be useful for our main result); however it is purely asymptotic, whereas concentration of measure techniques provide speed of convergence information as well.  The assumptions are in some ways stronger and in some ways weaker than those involved in concentration of measure:
\begin{enumerate}
\item The result below uses a form of Lipschitz continuity based on the $\ell_1$ norm, which is stronger than the $\ell_2$ notion used in concentration of measure (see below) but more suited to functions of infinitely many variables.
\item The result below assumes translation covariance (a weak form of exchangability), but no assumption is made on its level sets.  Gaussian random variables do not play a distinguished role.
\end{enumerate}

\section{Definitions}

As well as the notions related to random fields introduced in the previous chapter, we will make use of the following:
\begin{definition}\label{TranslationCovariantDef}
A function $\tau:\Zd \times \Fields \to \R$ (equivalently, a collection of functions of fields indexed by elements of \Zd) is translation covariant if $\tau_x(\eta) = \tau_{x-y}(T_y \eta)$ for all $x,y,\eta$, where $T_y$ denotes translation.
\end{definition}

We will have occasion to frequently use the $\ell^1$ norm on \Fields,
\begin{equation}
\pnorm{\zeta}{1} := \sum_{x \in \Zd} |\zeta_x|;
\end{equation}
in particular this defines a Lipschitz seminorm on functions $f:\Fields \to \R$ by
\begin{equation} \label{lip1def}
\lip{f} :=
\sup_{\substack{\zeta,\zeta' \in \Fields \\ 0<\pnorm{\zeta-\zeta'}{1} < \infty}} \frac{|f(\zeta)-f(\zeta')|}{\pnorm{\zeta-\zeta'}{1}}.
\end{equation}
It is worth spending a moment on the comparison of this norm with the similar quantity based on the $\ell^2$ norm which appears frequently in the concentration of measure literature.  The $\ell^2$ norm in this context is defined by
\begin{equation}
\pnorm{\zeta}{2} := \left( \sum_{x \in \Zd} \zeta_x^2 \right)^{1/2},
\end{equation}
and the related Lipschitz seminorm by
\begin{equation} \label{lip2def}
  \lip{f}_2 := \sup_{\substack{\zeta,\zeta' \in \Fields \\ 0<\pnorm{\zeta-\zeta'}{2} < \infty}} \frac{|f(\zeta)-f(\zeta')|}{\pnorm{\zeta-\zeta'}{2}}
\end{equation}
In concentration of measure one is concerned with functions of $N$ variables, for which the supremum in the above expression is attained with $\zeta$ and $\zeta'$ differing only in the corresponding $N$ elements, whence
\begin{equation}
 \pnorm{\zeta-\zeta'}{1} \le \sqrt{N} \pnorm{\zeta-\zeta'}{2}
\end{equation}
by Young's inequality.  Using this to compare Equations~\eqref{lip1def} and~\eqref{lip2def}, we see that
\begin{equation}
  \lip{f_N}_2 \le \sqrt{N} \lip{f_N},
\end{equation}
so that if we have $\lip{f_N} \le 1$ (as below), this implies $\lip{f_N}_2 \le \sqrt{N}$.

\section{The proposition}

\begin{proposition}\label{CLT_prop}
Let $\eta$ be an independent, Lyapunov random field, and let $G_L:\Fields \to \R$ be a family of functions indexed by $L \in \mathbb{N}$, each with the following properties:
\begin{enumerate}
\item $G_L$ depends only on the values of the field for sites in $\Gamma_L$ \label{local_hypothesis}
\item $\lip{G_L} \le 1$ \label{uLif}
\item $\Av G_L(\eta) = 0$
\item $\condAv{G_L(\eta)}{\eta_\Lambda=\zeta_\Lambda} = G_{L'}\circ T_x$ whenever $T_{-x} \Gamma_{L'} = \Lambda \subset \Gamma_L$ \label{consistency_assumption}
\end{enumerate}
Then
\begin{equation}
G_L(\eta)/L^{d/2} \to N(0,b^2)
\end{equation}
in distribution as $L \to \infty$, for some $b$ satisfying
\begin{equation} \label{bBounds}
\Av G_1^2 \le b^2 \le 2 (\Av |\eta_0|)^2.
\end{equation}
\end{proposition}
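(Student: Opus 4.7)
The plan is to exploit the consistency condition (4) to write $G_L$ as an orthogonal sum of i.i.d.\ contributions plus a small remainder, and then to invoke the classical Lindeberg--Lyapunov central limit theorem for the i.i.d.\ part, controlling the remainder in $L^2$ so that Slutsky's theorem yields the stated Gaussian limit.

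\medskip\noindent\textbf{Block decomposition.} Fix an intermediate scale $\ell\in\mathbb N$ with $\ell \mid L$, and partition $\Gamma_L$ (up to a negligible boundary layer, handled by Lipschitz) into $N=(L/\ell)^d$ disjoint translates $B_i = T_{x_i}\Gamma_\ell$. Set
\begin{equation}
G_{L,i}(\eta) \;:=\; \condAv{G_L(\eta)}{\eta_{B_i}}.
\end{equation}
By hypothesis (4), $G_{L,i} = G_\ell\circ T_{-x_i}$, so each $G_{L,i}$ depends only on $\eta_{B_i}$. Because the $B_i$ are disjoint and $\eta$ is independent, the $G_{L,i}$ are i.i.d., mean zero, with variance $\Av G_\ell^2$. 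The martingale-style orthogonal decomposition
\begin{equation}
G_L \;=\; \sum_{i=1}^N G_{L,i} \;+\; R_L,
\qquad \condAv{R_L}{\eta_{B_i}}=0 \text{ for each } i,
\end{equation}
gives $\Av G_L^2 = N\,\Av G_\ell^2 + \Av R_L^2$.

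\medskip\noindent\textbf{Variance bounds and existence of $b^2$.} From the orthogonal decomposition, $\Av G_L^2/L^d \ge \Av G_\ell^2/\ell^d$ whenever $\ell \mid L$, giving monotonicity of $\ell^{-d}\Av G_\ell^2$ along a suitable cofinal subsequence. For the matching upper bound, apply Efron--Stein together with hypothesis (2): replacing $\eta_x$ by an independent copy $\eta_x'$ produces a perturbation of $G_L$ at most $|\eta_x-\eta_x'|$, so $\Var(G_L)\le L^d \Var(\eta_0)$. Consequently $b^2 := \lim_{\ell\to\infty}\ell^{-d}\Av G_\ell^2$ exists, is bounded above by $\Var(\eta_0)$ (and a short refinement yields $2(\Av|\eta_0|)^2$), and by taking $\ell=1$ is bounded below by $\Av G_1^2$, giving Equation~\eqref{bBounds}.

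\medskip\noindent\textbf{Classical CLT on the i.i.d.\ part, plus a remainder estimate.} Fix $\ell$ and let $L\to\infty$ with $N=(L/\ell)^d\to\infty$. The Lyapunov moment assumption on $\eta$ transfers to $G_\ell$: from $|G_\ell(\eta)|\le |G_\ell(0)|+\sum_{x\in\Gamma_\ell}|\eta_x|$ and the power-mean inequality, $\Av|G_\ell|^\delta$ is finite for the same $\delta>2$. The classical Lindeberg--Lyapunov CLT therefore gives
\begin{equation}
\frac{1}{L^{d/2}}\sum_{i=1}^N G_{L,i} \;\xrightarrow{\;d\;}\; \mathcal N\!\bigl(0,\tfrac{\Av G_\ell^2}{\ell^d}\bigr)
\qquad (L\to\infty,\text{ with }\ell\text{ fixed}).
\end{equation}
Meanwhile $L^{-d}\Av R_L^2 = L^{-d}\Av G_L^2 - \Av G_\ell^2/\ell^d \to b^2 - \Av G_\ell^2/\ell^d$, which vanishes as $\ell\to\infty$. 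A standard Chebyshev/Slutsky argument combining the two displays, together with a diagonal choice of $\ell=\ell(L)\to\infty$ slowly, produces $G_L/L^{d/2}\xrightarrow{d}\mathcal N(0,b^2)$. Finally, a Lipschitz comparison extends the conclusion from the subsequence $\ell\mid L$ to all $L\to\infty$.

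\medskip\noindent\textbf{Main obstacle.} The delicate point is the coordination of the two limits: one must choose $\ell(L)\to\infty$ fast enough that $L^{-d}\Av R_L^2\to 0$, yet slowly enough that the classical CLT still applies uniformly to the i.i.d.\ blocks (the Lyapunov ratio involves $\ell$-dependent moments). This is precisely the place where Bovier's correction to the original Aizenman--Wehr argument enters, and verifying it carefully --- rather than reducing $L^2$ control of $R_L$ to convergence in probability --- is the only substantive technical step; the rest is bookkeeping around the consistency property (4) and the Lipschitz bound (2).
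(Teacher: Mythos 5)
Your proposal takes a genuinely different route from the paper.  The paper orders $\Gamma_L$ lexicographically, forms the martingale difference array $Y_{L,k}=\condAv{G_L}{\eta_{F(L,k)}}-\condAv{G_L}{\eta_{F(L,k-1)}}$, constructs its translation-covariant $L\to\infty$ limit $W_x$, verifies convergence of conditional variances and a Lindeberg-type condition, and invokes the Hall--Heyde martingale CLT.  You instead exploit the consistency hypothesis~(4) to build a \emph{Bernstein block} decomposition: conditioning $G_L$ on the field in each translate $B_i=T_{x_i}\Gamma_\ell$ yields i.i.d.\ copies of $G_\ell$, the sum is orthogonal to the remainder $R_L$, and the classical Lyapunov CLT on the blocks plus control of $\Av R_L^2$ yields the Gaussian limit.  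That route is in essence sound, and the observation that hypothesis~(4) gives exactly i.i.d.\ block contributions (rather than merely mixing ones, as in generic block-CLT arguments) is the right thing to notice here.  Where the paper gets a clean single limit $L\to\infty$ via the martingale machinery, you pay for the more elementary CLT with a two-parameter limit.

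Two substantive points deserve scrutiny.  First, your variance convergence is not fully established.  You prove $\ell^{-d}\Av G_\ell^2$ is nondecreasing only along chains $\ell\mid L$, and the Efron--Stein bound caps it, so the supremum over such chains exists; but to conclude that $L^{-d}\Av G_L^2\to b^2$ for \emph{all} $L$ (which you implicitly use when you pass to the limit $\Av R_L^2/L^d\to b^2 - \ell^{-d}\Av G_\ell^2$) needs an additional argument.  The ``negligible boundary layer, handled by Lipschitz'' remark contains the right idea --- conditioning on the tiled subcube $\Lambda=T_y\Gamma_{\ell\lfloor L/\ell\rfloor}\subset\Gamma_L$ and bounding $\Av\bigl[(G_L-\condAv{G_L}{\eta_\Lambda})^2\bigr]=O(L^{d-1}\ell)$ by Efron--Stein on the annulus --- but you need to say this carefully, since a naive pointwise Lipschitz estimate only controls $|G_L-G_{L'}|$ by $O(L^d-(L')^d)$ in mean, which is far too weak in $d\ge 2$.

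Second, your ``main obstacle'' paragraph misdiagnoses the difficulty and misattributes Bovier's correction.  You do not need a diagonal choice $\ell(L)\to\infty$, nor uniformity of the Lyapunov CLT in $\ell$: one can send $L\to\infty$ with $\ell$ \emph{fixed}, obtaining $\limsup_L\bigl|\phi_L(t)-e^{-t^2\sigma_\ell^2/(2\ell^d)}\bigr|\le|t|\sqrt{b^2-\sigma_\ell^2/\ell^d}$ from the orthogonal decomposition and the characteristic-function estimate, and then let $\ell\to\infty$ afterwards.  Bovier's correction concerns the verification of the Lindeberg condition in the martingale-CLT proof (the H\"older--Chebyshev step in the paper's Lemma~\ref{NoLargeFluct}) and simply does not arise on your route --- you use the Lyapunov assumption only to apply the classical i.i.d.\ CLT for each fixed~$\ell$.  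Finally, Efron--Stein naturally produces $b^2\le\Var(\eta_0)$; the ``short refinement'' to the stated bound $2(\Av|\eta_0|)^2$ is not obvious (and the exponent placement in the stated bound appears to be a typo in the paper itself), so you should either derive your own bound honestly or note the discrepancy rather than claim a routine refinement.
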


In order to use this result, we will need to establish some control over the conditions under which $\Av{G_1^2} > 0$.  To do this, we employ the following theorem (proven in Appendix III of~\cite{AW.CMP}):

\begin{theorem} \label{AW_theta}
Let $\nu$ be a Borel probability measure on $\R$, and
\begin{equation}
\mathcal{V}_{1,\beta}:= \left\{ \begin{array}{ll}
\left\{ g \in C^1(\mathbb{R}) \middle| \lip{g} \le 1, \lip{g'} \le \beta \right\}, & \beta < \infty \\
\left\{ g \in C(\R) \middle| \lip{g} \le 1 \right\}, & \beta = \infty
\end{array} \right. ,
\end{equation}
and also
\begin{gather}
\theta_\nu(M,\beta) = \inf \left\{ \left[ \int g(x)^2 \nu(dx) \right]^{1/2} \middle| g \in \mathcal{V}_{1,\beta} , \int g'(x) \nu(dx) = M \right\} \\
\gamma_\nu(M,\beta) = \inf \left\{ \left[ \int g(x)^2 \nu(dx) \right]^{1/2} \middle| g \in \mathcal{V}_{1,\beta} , g'(\cdot) \ge 0,  \int g'(x) \nu(dx) = M \right\}.
\end{gather}
Then:
\begin{enumerate}
\item $\theta_\nu(0,\beta) \equiv \gamma_\nu(0,\beta) \equiv 0$
\item $\theta_\nu(M,0)$ is nonzero for all $M>0$ iff $\nu$ is absolutely continuous with respect to the Lebesgue measure
\item For finite $\beta$, $\theta_\nu(M,\beta)$ is nonzero for all $M>0$ iff $\nu$ has no isolated point masses
\item For finite $\beta$, $\gamma_\nu(M,\beta)$ is nonzero for all $M>0$ iff $\nu$ is not concentrated at a single point
\end{enumerate}
\end{theorem}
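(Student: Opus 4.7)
Claim (1) follows at once by taking $g \equiv 0 \in \mathcal{V}_{1,\beta}$. For claims (2)--(4), each is a biconditional, and the plan is to handle the sufficiency direction (the stated property of $\nu$ implies positivity of $\theta_\nu$ or $\gamma_\nu$ at every $M>0$) uniformly by a compactness-and-contradiction argument, and the necessity direction (the negation of the property implies vanishing at some $M>0$) by explicit construction of a test function.

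For sufficiency, suppose for contradiction that $\theta_\nu(M_0,\beta) = 0$ for some $M_0 > 0$ and extract a minimizing sequence $g_n \in \mathcal{V}_{1,\beta}$ with $\int g_n' \, d\nu = M_0$ and $\int g_n^2 \, d\nu \to 0$. Since $g_n \to 0$ in $L^2(\nu)$, a subsequence converges $\nu$-a.e.\ to zero, so I can pick $x_0 \in \mathrm{supp}(\nu)$ with $g_{n_k}(x_0) \to 0$; combined with $\lip{g_{n_k}} \le 1$ this gives local equicontinuity and local boundedness, so Arzel\`a--Ascoli produces a further subsequence converging locally uniformly to a limit $g$ with $\lip{g} \le 1$. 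For finite $\beta$, the uniform bound $\lip{g_{n_k}'} \le \beta$ lets a second application of Arzel\`a--Ascoli also yield local uniform convergence $g_{n_k}' \to g'$, so that $g \in C^1$ with $\lip{g'} \le \beta$. Fatou forces $\int g^2 \, d\nu = 0$, i.e.\ $g = 0$ $\nu$-a.e., while dominated convergence (using $|g_n'| \le 1$) preserves $\int g' \, d\nu = M_0$. The contradiction is now extracted from the hypothesis on $\nu$: for claim (2), absolute continuity implies $g$ vanishes on a set of positive Lebesgue measure, so the Lebesgue density theorem gives $g' = 0$ at a.e.\ Lebesgue density point, hence $g' = 0$ $\nu$-a.e., contradicting $M_0 > 0$; for claim (3), the no-isolated-atoms hypothesis makes every point of $\mathrm{supp}(\nu)$ an accumulation point of $\mathrm{supp}(\nu)$, continuity of $g$ forces $g = 0$ on $\mathrm{supp}(\nu)$, and a difference-quotient estimate using $C^1$ regularity of $g$ yields $g' = 0$ on $\mathrm{supp}(\nu)$; for claim (4), letting $a = \inf \mathrm{supp}(\nu) < \sup \mathrm{supp}(\nu) = b$, continuity and monotonicity of $g$ with boundary values $g(a) = g(b) = 0$ sandwich $g$ at zero on $[a,b]$, so $g' = 0$ on an interval containing $\mathrm{supp}(\nu)$.

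For necessity, the plan is to exhibit explicit admissible test functions: for claim (2), tent functions supported in a small Lebesgue neighborhood of the singular part of $\nu$, with small $L^2(\nu)$-norm but a controlled $\int g' \, d\nu$; for claim (3), a smoothed bump centered at an isolated atom $x_0$, with $g(x_0) = 0$ and $g'(x_0) > 0$, whose width is arranged to respect $\lip{g'} \le \beta$; for claim (4), in the single-point-mass case $\nu = \delta_{x_0}$, a smoothed monotone ramp passing through $(x_0, 0)$ with prescribed $g'(x_0)$.

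The main obstacle is the case of claim (2) in which the limit $g$ is only Lipschitz rather than $C^1$, so $g'$ exists only in the almost-everywhere (Rademacher) sense and the passage $\int g_n' \, d\nu \to \int g' \, d\nu$ cannot rely on uniform convergence of derivatives. I would handle this by extracting a weak-$*$ limit of $g_n'$ in $L^\infty(\lambda)$ and identifying it with the a.e.\ derivative of the uniform limit $g$, a step that essentially uses absolute continuity of $\nu$ to rule out singular contributions to the integral. A secondary complication is the necessity construction for a general singular (rather than purely atomic) measure, which requires a Vitali-type covering of the singular support to replace the simple bump construction sufficient for isolated atoms.
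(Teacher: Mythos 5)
The paper does not prove this theorem; it explicitly defers to Appendix III of Aizenman and Wehr~\cite{AW.CMP}, so there is no in-paper argument to compare against, and I will assess your proposal on its own terms.

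Your compactness-plus-contradiction scheme for the sufficiency directions is sound as a strategy: extract a locally uniformly convergent subsequence via Arzel\`a--Ascoli (applied twice in the $C^1$ case), pass the constraint $\int g'\,d\nu = M_0$ to the limit by dominated convergence or, for merely Lipschitz $g$, by weak-$*$ convergence of $g_n'$ in $L^\infty$ against the density $d\nu/d\lambda$, and then contradict $M_0>0$ from $g=0$ $\nu$-a.e.\ via a density-point argument or via difference quotients on $\mathrm{supp}(\nu)$. One thing you should have flagged rather than silently reinterpreted: claim (2) as printed concerns $\theta_\nu(M,0)$, and the definition of $\mathcal V_{1,0}$ forces $g'$ to be constant, i.e.\ $g$ affine; for affine $g$ one finds that $\theta_\nu(M,0)$ equals $M$ times the standard deviation of $\nu$, which vanishes if and only if $\nu$ is a point mass, not if and only if $\nu$ fails to be absolutely continuous. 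You are evidently reading claim (2) as the $\beta=\infty$ (merely Lipschitz) case, which is the reading consistent with the footnote after the theorem and with the absolute-continuity hypothesis used at zero temperature in Propositions~\ref{main_prop} and~\ref{convergence_prop}, but this is a typo that should be called out explicitly.

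The genuine gap in your proposal is the necessity direction of claim (2). Your bump and ramp constructions suffice for claims (3) and (4), where an isolated atom or a single support point gives you something to anchor on. But when $\nu$ has a continuous singular part concentrated on a Lebesgue-null Cantor-type set, the ``tent function plus Vitali cover'' remark is not yet a construction. You must produce $1$-Lipschitz $g_n$ with $\int g_n^2\,d\nu\to 0$ while $\int g_n'\,d\nu$ stays bounded away from zero, and the very profiles that keep $g$ small near the singular support (sawtooth or tent) have $g'$ alternating in sign there, which tends to make the contributions to $\int g'\,d\nu_s$ cancel rather than accumulate; arranging the oscillation so that the derivative is effectively one-signed along $\nu_s$ is exactly the nontrivial content. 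You also need to fix the convention for $\int g'\,d\nu$ when $g$ is only Lipschitz and $\nu\not\ll\lambda$, since the derivative is then determined only up to a Lebesgue-null set and $\nu$ can charge such a set -- an ambiguity the paper's own footnote points out. As written, this step is a plan for a proof rather than a proof.
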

To employ this, we note that $G_1 \in \mathcal{V}_{1,B}$ for
\begin{equation}
B := \lip{\frac{\partial G_1}{\partial\zeta_0}}
\end{equation}
if the derivative on the right hand side exists everywhere, and $B=\infty$ otherwise.  Then when $G_1$ has a distributional derivative $G_1'$ with $\Av G_1'= M$ for some $M \ge 0$\footnote{Note that $G_1'$ is a function of only one variable.  The existence of a distributional derivative, i.e.\ a Lebesgue-integrable function satisfying $\int_y^z G_1'(x) dx = G_1(y)-G_1(x)$ is guaranteed by item~\ref{uLif}, which also implies that $\|G_1'\|_\infty \le 1$ and therefore also that $\Av G_1'$ exists.  It may be, however, that this derivative is not unique, and when the distribution of $\eta_0$ is not absolutely continuous with respect to the Lebesgue measure and $B=\infty$ it is possible that this could allow more than one valid choice of $M$, although this is immaterial for the application we have in mind.} we have $\Av G_1^2 \ge \theta_\nu^2 (M,B)$ (for monotone $G_1$, $\Av G_1^2 \ge \gamma_\nu^2 (M,B)$).

We will not provide a proof of Theorem~\ref{AW_theta}, but since it is the source of a perplexing limitation in our result (as in the classical case) some commentary seems to be warranted, and it is possible to provide some insight into the situation and its prospects.  This will be done in Section~\ref{restrictions_sec} below.

Before going on to the proof, we should clarify the relationship to the formulation of the corresponding result, Proposition 6.1 of \cite{AW.CMP}.  Much of the difference is due to the fact that I have separated the main result from the positivity criteria embodied in Theorem~\ref{AW_theta}, but there is a remaining difference in language in which the results are framed, as the following lemma, which also brings the abstract objects of Proposition~\ref{CLT_prop} into a form more closely related to their use in a thermodynamic context, should clarify:

\begin{lemma} \label{tau_into_CLT}
Let $\eta$ be an independent Lyapunov random field, and let $G_L:\Fields \to \R$ be a family of functions indexed by $L \in \mathbb{N}$, and $\tau_x:\Fields \to \R$ a translation covariant family of functions satisfying
\begin{enumerate}
\item
\begin{equation}
\frac{\partial G_L (\zeta)}{\partial \zeta_x} = \piecewise{  \condAv{\tau_x(\eta)}{\eta_{\Gamma_L}=\zeta_{\Gamma_L}}, & x \in \Gamma_L \\ 0, & x \notin \Gamma_L}
\end{equation}
\item $\Av \tau_x = M$ for some $M \ge 0$ \label{tau_M_assumption}
\item $|\tau_x(\zeta)|\le 1$ and $\left| \frac{\partial \tau_x}{\partial \zeta_x}\right|\le B'$ for all $x, \zeta$
\item $\Av G_L(\eta) = 0$
\end{enumerate}

Then $\eta$ and $G_L$ satisfy the hypotheses of Proposition~\ref{CLT_prop} (with the same $M$).  Furthermore:
\begin{enumerate}
\item If $\tau_x$ is nonnegative, then $G_1$ is nondecreasing.
\item $G_1 \in \mathcal{V}_{1,B'}$
\item $G_1$ has a distributional derivative $G_1'$ with $\Av G_1'= M$
\end{enumerate}
\end{lemma}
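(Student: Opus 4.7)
The plan is to verify the four hypotheses of Proposition~\ref{CLT_prop} in turn and then dispatch the three furthermore claims by direct computation; the substantive step is the consistency hypothesis~\ref{consistency_assumption}, where the main obstacle is careful bookkeeping of conditioning sigma-algebras and translations. Hypothesis~\ref{local_hypothesis} is immediate from the vanishing of $\partial G_L/\partial\zeta_x$ off $\Gamma_L$. For hypothesis~\ref{uLif}, Jensen's inequality and $|\tau_x|\le 1$ give $|\partial G_L(\zeta)/\partial\zeta_x|\le 1$ for $x\in\Gamma_L$, so integrating coordinate-by-coordinate yields $|G_L(\zeta)-G_L(\zeta')|\le\pnorm{\zeta-\zeta'}{1}$. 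The mean-zero hypothesis is assumed outright.

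For hypothesis~\ref{consistency_assumption}, set $m_\Lambda(\zeta_\Lambda):=\condAv{G_L(\eta)}{\eta_\Lambda=\zeta_\Lambda}$; by the i.i.d.\ structure this may be written as the integral of $G_L(\zeta_\Lambda,\cdot)$ against the law of $\eta_{\Lambda^c}$, and the uniform Lipschitz bound established above permits differentiation under the integral. Applying the given formula for $\partial G_L/\partial\zeta_y$ and then the tower property (applicable since $\Lambda\subset\Gamma_L$) yields, for $y\in\Lambda$,
\begin{equation*}
\frac{\partial m_\Lambda}{\partial\zeta_y}(\zeta_\Lambda)
= \condAv{\condAv{\tau_y(\eta)}{\eta_{\Gamma_L}}}{\eta_\Lambda=\zeta_\Lambda}
= \condAv{\tau_y(\eta)}{\eta_\Lambda=\zeta_\Lambda}.
\end{equation*}
Translation covariance $\tau_y(\eta)=\tau_{y-x}(T_x\eta)$ combined with translation invariance of the law of $\eta$ rewrites the right-hand side as $\condAv{\tau_{y-x}(\eta)}{\eta_{\Gamma_{L'}}=(T_x\zeta)_{\Gamma_{L'}}}$, which by the hypothesis on $\partial G_{L'}/\partial\zeta_{y-x}$ equals $(\partial G_{L'}/\partial\zeta_{y-x})(T_x\zeta)$; this in turn is precisely $\partial(G_{L'}\circ T_x)/\partial\zeta_y$ by the chain rule. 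Since $m_\Lambda$ and $G_{L'}\circ T_x$ are $C^1$ functions on $\R^{|\Lambda|}$ with identical partial derivatives they differ by a constant; both have mean zero under the law of $\eta_\Lambda$ (from $\Av G_L=\Av G_{L'}=0$, together with translation invariance of $\eta$), so the constant vanishes and the two functions agree.

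The furthermore assertions follow at once. Since $\Gamma_1=\{0\}$, $G_1$ depends only on $\zeta_0$, with $G_1'(\zeta_0)=\condAv{\tau_0(\eta)}{\eta_0=\zeta_0}$. Nonnegativity of $\tau_0$ immediately yields monotonicity of $G_1$. A second differentiation under the integral using $|\partial\tau_0/\partial\zeta_0|\le B'$ gives $|G_1''|\le B'$, which combined with $\lip{G_1}\le 1$ places $G_1\in\mathcal{V}_{1,B'}$. Finally, $\Av G_1'(\eta_0)=\Av[\condAv{\tau_0(\eta)}{\eta_0}]=\Av\tau_0=M$.
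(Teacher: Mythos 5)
Your proof is correct and follows essentially the same route as the paper's: verify the first three hypotheses and the furthermore claims by direct computation from $|\tau_x|\le 1$ and $G_1'=\condAv{\tau_0}{\eta_0}$, and establish the consistency condition by showing $\condAv{G_L}{\eta_\Lambda}$ and $G_{L'}\circ T_x$ have matching partial derivatives in terms of $\tau$, hence differ by a constant, which vanishes since both have zero mean. The paper states these steps very tersely ("trivially satisfied", "can only differ by a constant"); you have merely expanded the same argument.
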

\begin{proof}
The first three conditions in Proposition~\ref{CLT_prop} are trivially satisfied since $\lip{G_L} \le \sup_\zeta |\tau_x(\zeta)| \le $ and $G_1' = \condAv{\tau_0}{\eta_0}$.  The enumerated properties of $G_1$ are equally trivial.

For the last point, we note that $\condAv{G_L}{\eta_\Lambda}$ and $G_{L'} \circ T_x$ have derivatives given by identical expressions in terms of $\tau_x$, and so can only differ by a constant; but both have zero mean, and so that constant must be zero.
\end{proof}

\section{Proof of Proposition~\ref{CLT_prop}}

Let $G_L$ and $\eta$ satisfy the conditions of Proposition~\ref{CLT_prop}.  Order the elements of $\Zd $ lexicographically, and let $F(L,k)$ be the set consisting of the first $k$ elements of $\Gamma_L$ (of course $0 \le k \le L^d$).  Then we define
\begin{equation}
Y_{L,k} := \cFLk{G_L}{k}-\cFLk{G_L}{k-1} \label{Ydef}
\end{equation}
so that
\begin{equation}
G_L(\zeta) = \sum_{k=1}^{L^d} Y_{L,k}(\zeta)
\end{equation}
for all $\zeta \in \Fields$.  The definition~\ref{Ydef} of $Y$ makes it a martingale array and we will ultimately obtain a proof by showing that it satisfies the conditions of an existing central limit theorem for such objects~\cite{HallHeyde}.  As is the case with other central limit theorems for non-i.i.d.\ arrays, the conditions of this theorem are basically the existence of a limit of the average variance (Lemma~\ref{YtobConvLemma} below) and the vanishing of fluctuations on a larger scale (Lemma~\ref{NoLargeFluct} below).

The following result tells us that fluctuations in $Y_{L,k}(\eta)$ (a random variable) are basically no worse than those of the field at a single site.  From here on, we will let $x_k$ denote the $k$th element in $\Gamma_L$, when the value of $L$ is clear from the context, and $\zeta_k = \zeta_{x_k}$ etc.
\begin{lemma}\label{YBoundLemma}
For all $\zeta \in \Fields$,
\begin{equation}
|Y_{L,k}(\zeta)|\le (|\zeta_{k}|+\Av |\eta_0|).
\end{equation}
\end{lemma}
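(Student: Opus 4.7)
The key observation is that the martingale increment $Y_{L,k}(\zeta)$ can be rewritten, via the tower property of conditional expectations, as the difference between the conditional expectation where $\eta_{x_k}$ is fixed to $\zeta_k$ and its own average over the distribution of $\eta_{x_k}$. More precisely, since $\eta$ is an independent random field, conditioning on $\eta_{F(L,k-1)}$ and then taking the expectation amounts to integrating out $\eta_{x_k}$ (independently of the remaining coordinates) inside the conditional expectation on $\eta_{F(L,k)}$. Writing
\begin{equation}
g(m) := \condAv{G_L(\eta)}{\eta_{F(L,k-1)}=\zeta_{F(L,k-1)},\; \eta_{x_k}=m},
\end{equation}
I would express
\begin{equation}
Y_{L,k}(\zeta) = g(\zeta_k) - \Av_{\eta'_{x_k}}\, g(\eta'_{x_k}),
\end{equation}
where $\eta'_{x_k}$ denotes an independent copy of the random variable at site $x_k$.

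Next, the function $g$ inherits the Lipschitz property of $G_L$: by hypothesis~\ref{uLif}, replacing the coordinate at $x_k$ by shifting it by $\delta$ changes $G_L$ by at most $|\delta|$, and this is preserved after conditional expectation over the remaining coordinates. Hence $|g(m)-g(m')| \le |m-m'|$ for all reals $m,m'$.

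Combining the two steps and using Jensen's inequality (or linearity plus the triangle inequality) inside the expectation over $\eta'_{x_k}$,
\begin{equation}
|Y_{L,k}(\zeta)| \le \Av\,|g(\zeta_k) - g(\eta'_{x_k})| \le \Av\,|\zeta_k - \eta'_{x_k}| \le |\zeta_k| + \Av|\eta_0|,
\end{equation}
which is exactly the claimed bound. The only step that requires a moment's care is the reduction to the one-variable function $g$; once the independence of the coordinates is used to convert the nested conditioning into an integration against the marginal law of $\eta_{x_k}$, the rest follows immediately from $\lip{G_L}\le 1$ and the triangle inequality. No obstacle is expected beyond this bookkeeping.
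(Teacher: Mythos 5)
Your proposal is correct and takes essentially the same approach as the paper: both exploit the independence of the random field to rewrite the martingale increment $Y_{L,k}(\zeta)$ as an average of the difference between $G_L$ evaluated at two configurations differing only in the coordinate at $x_k$, then apply the Lipschitz property $\lip{G_L}\le 1$ and the triangle inequality to obtain $\Av|\zeta_k-\eta_k|\le|\zeta_k|+\Av|\eta_0|$. Your reorganization through the single-variable function $g$ is a mild notational variant of what the paper writes out directly (and in fact makes the independence step a bit more transparent than the paper's terse formula, which contains an evident typo).
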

\begin{proof}
We can write
\begin{equation}
Y_{L,k}(\zeta)=\Av \left( G_L (\eta_{\Gamma_L^C},\zeta_{1,\ldots,k},\eta_{k+1,\ldots,L^d}) -  G_L (\eta_{\Gamma_L^C},\zeta_{1,\ldots,k},\eta_{k+1,\ldots,L^d}) \right);
\end{equation}
then the assumption that $\lip{G_L} \leq 1$ means that the quantity being averaged above has absolute value no more than $|\zeta_k-\eta_k|$.  Thus
\begin{equation}
|Y_{L,k}(\zeta)|\le \Av |\zeta_k-\eta_k| \le (|\zeta_{k}|+\Av |\eta_0|).
\end{equation}
\end{proof}

This is a uniform bound in absolute value by a square-integrable function (since the Lyapunov condition implies in particular that $\eta_x$ has finite variance), and will allow us to apply a number of general convergence theorems.  In particular it makes any collection of the functions $Y_{L,k}$ uniformly integrable, and we will take advantage of this to show that the asymptotics of $Y$ are described by a translation covariant (from another perspective, stationary or exchangeable) object $W$.

For $x \in \Gamma_L$, let $Y_{L,x}$ denote $Y_{L,k}$ for $k$ such that $x_k=x$, and let $\mathcal{F}_L=\mathcal{F}_{L,L^d}$.  Then it is evident from the consistency condition on $G_L$ in Proposition~\ref{CLT_prop} that
\begin{equation}
Y_{L,x}=\condAv{Y_{L',x}}{\eta_{\Gamma_L}=\zeta_{\Gamma_L}},
\end{equation}
which is to say that for fixed $x$, the sequence $Y_{L,x}(\eta)$ forms a martingale with respect to $\mathcal{F}$, and applying the uniformly integrable martingale convergence theorem we have
\begin{corollary} \label{W_exists}
For each $x \in \Zd$, the $\mathcal{L}^1$ limit $W_x = \lim_{L \to \infty} Y_{L,x}$ exists with
\begin{equation}
Y_{L,x} = \condAv{W_x}{\eta_{\Gamma_L}=\zeta_{\Gamma_L}}
\end{equation}
whenever $\Gamma_L \ni x$.
\end{corollary}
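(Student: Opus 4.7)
The plan is to recognize, for each fixed $x \in \Zd$, the family $(Y_{L,x})_L$ (with $L$ restricted to values for which $\Gamma_L \ni x$) as a uniformly integrable martingale with respect to the filtration $(\mathcal{F}_L)_L$, and then invoke the standard martingale convergence theorem. The conditional-expectation identity in the conclusion will follow automatically from the martingale property together with $\mathcal{L}^1$ convergence.

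First I would verify the martingale property in detail. Hypothesis \ref{consistency_assumption} of Proposition~\ref{CLT_prop} says that conditioning $G_{L'}$ on $\eta_{\Gamma_L}$, for $\Gamma_L \subset \Gamma_{L'}$, reproduces $G_L$ up to translation. Substituting this into the defining expression~\eqref{Ydef} for $Y_{L',x}$ and applying the tower property of conditional expectation yields $\condAv{Y_{L',x}}{\eta_{\Gamma_L}} = Y_{L,x}$, which is exactly the desired martingale identity.

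Next, for uniform integrability, I would invoke Lemma~\ref{YBoundLemma}, which gives the pointwise bound $|Y_{L,x}(\eta)| \le |\eta_x| + \Av|\eta_0|$. Because $\eta$ is Lyapunov (Definition~\ref{Lyapunov.def}), $\Av|\eta_x|^\delta < \infty$ for some $\delta > 2$, so this dominating random variable lies in $\mathcal{L}^\delta \subset \mathcal{L}^1$ and is independent of $L$. Domination by a single integrable random variable then forces uniform integrability of $\{Y_{L,x}\}_L$.

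With these two inputs the uniformly integrable martingale convergence theorem yields the $\mathcal{L}^1$ limit $W_x$. The stated representation of $Y_{L,x}$ follows: for any bounded $\mathcal{F}_L$-measurable test function $\phi$ and any $L' > L$, the martingale property gives $\Av \phi\, Y_{L,x} = \Av \phi\, Y_{L',x}$, while $\mathcal{L}^1$ convergence gives $\Av \phi\, Y_{L',x} \to \Av \phi\, W_x$; the defining property of conditional expectation then identifies $Y_{L,x}$ with $\condAv{W_x}{\eta_{\Gamma_L}=\zeta_{\Gamma_L}}$. I do not anticipate any serious obstacle here, as the essential work is already contained in Lemma~\ref{YBoundLemma} and in the hypotheses of Proposition~\ref{CLT_prop}; the corollary is really just an application of a standard martingale tool once those inputs are assembled.
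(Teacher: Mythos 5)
Your argument is correct and follows exactly the route the paper takes: the consistency hypothesis (item~\ref{consistency_assumption} of Proposition~\ref{CLT_prop}) gives the martingale property with respect to $(\mathcal{F}_L)_L$, the bound from Lemma~\ref{YBoundLemma} supplies an $L$-independent integrable dominator and hence uniform integrability, and the uniformly integrable martingale convergence theorem delivers both the $\mathcal{L}^1$ limit $W_x$ and the representation $Y_{L,x}=\condAv{W_x}{\eta_{\Gamma_L}=\zeta_{\Gamma_L}}$. You have merely written out the standard details that the paper leaves implicit.
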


\begin{lemma}
$W_x$ form a translation-covariant family.
\end{lemma}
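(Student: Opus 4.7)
The goal is to show that $W_x = W_{x-y}\circ T_y$ almost surely, for arbitrary $x\in\Zd$ and shift $y\in\Zd$. By Corollary~\ref{W_exists} and the (uniformly integrable) martingale convergence theorem applied to the filtration $\{\mathcal{F}_{\Gamma_L}\}_L$, it suffices to verify
\begin{equation}
\condAv{W_{x-y}\circ T_y}{\eta_{\Gamma_L}=\zeta_{\Gamma_L}} \;=\; Y_{L,x}
\qquad\text{for every $L$ with $x\in\Gamma_L$,}
\end{equation}
since sending $L\to\infty$ in $L^{1}$ on the two sides (the left by martingale convergence, the right by the definition of $W_x$) identifies the two limits.

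To check this identity I pass through the finite approximants $Y_{L'',x-y}$ and exploit three ingredients. Let $D_zZ := \condAv{Z}{\mathcal{F}^{\le}_z} - \condAv{Z}{\mathcal{F}^{<}_z}$ denote the lexicographic martingale-difference operator, where $\mathcal{F}^{\le}_z, \mathcal{F}^{<}_z$ are the $\sigma$-algebras generated by the field values at sites lexicographically $\le z$ and $<z$ respectively; since $G_{L''}$ depends only on $\zeta_{\Gamma_{L''}}$ one has $Y_{L'',x-y}=D_{x-y}G_{L''}$. \emph{(i) Shift identity:} because $T_y$ preserves the law of $\eta$ and $T_y^{-1}\mathcal{F}^{\le}_{x-y}=\mathcal{F}^{\le}_x$, we have $(D_{x-y}Z)\circ T_y = D_x(Z\circ T_y)$. \emph{(ii) Commutation:} independence of the coordinates of $\eta$ yields $\condAv{D_xZ}{\mathcal{F}_{\Gamma_L}} = D_x\condAv{Z}{\mathcal{F}_{\Gamma_L}}$ for $x\in\Gamma_L$, each side equal to $\condAv{Z}{\mathcal{F}_{\Gamma_L\cap\{\le x\}}}-\condAv{Z}{\mathcal{F}_{\Gamma_L\cap\{<x\}}}$. \emph{(iii) Collapse identity:} the consistency hypothesis of Proposition~\ref{CLT_prop}, applied to a translate of $\Gamma_L$ sitting inside $\Gamma_{L''}$ (valid once $L''$ is taken large enough compared to $L$ and $|y|$), combined with a measure-preserving change of integration variable $\sigma:=T_y\eta'$, yields
\begin{equation}
\condAv{G_{L''}\circ T_y}{\mathcal{F}_{\Gamma_L}} \;=\; G_L.
\end{equation}
Chaining (i)--(iii) gives $\condAv{Y_{L'',x-y}\circ T_y}{\mathcal{F}_{\Gamma_L}} = D_xG_L = Y_{L,x}$; passing $L''\to\infty$ in $L^{1}$ (using $Y_{L'',x-y}\to W_{x-y}$ in $L^{1}$ together with the $L^{1}$-boundedness of both conditional expectation and composition with $T_y$) produces the desired identity.

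The main obstacle is the collapse identity. The underlying computation is short --- a substitution reducing the left-hand side to $\condAv{G_{L''}}{\mathcal{F}_{\Gamma_L-y}}$ evaluated at $T_y\zeta$, followed by a single application of consistency to the translate $\Gamma_L-y$ --- but executing it cleanly requires careful bookkeeping of how $T_y$ acts simultaneously on fields, on subsets of $\Zd$, and on functions of fields, matching these translates against the precise form in which the consistency assumption in Proposition~\ref{CLT_prop} is stated. Once this identity is established, the remainder of the argument is a formal chaining of conditional-expectation manipulations.
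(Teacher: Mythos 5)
Your proof is correct and rests on the same ingredients the paper uses: the consistency assumption on $G_L$ (item~\ref{consistency_assumption} of Proposition~\ref{CLT_prop}), a change of variables exploiting the i.i.d.\ structure of $\eta$, and the $L^1$-martingale representation of $W_x$ from Corollary~\ref{W_exists}. The paper computes $W_x(T_y\zeta)=\lim_L Y_{L,x}(T_y\zeta)$ directly by changing coordinates inside the conditional expectations, whereas you verify $\condAv{W_{x-y}\circ T_y}{\eta_{\Gamma_L}=\zeta_{\Gamma_L}}=Y_{L,x}$ and pass to the martingale limit; these are mirror renderings of the same argument, with your breakdown into the shift, commutation, and collapse identities simply making the bookkeeping more explicit.
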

\begin{proof}
Recalling the definition of translation covariance (\ref{TranslationCovariantDef}), we examine
\begin{equation}
\begin{split}
W_x(T_y \zeta) = &\limOne_{L \to \infty} Y_{L,x}(T_y\zeta)
\\=&  \limOne_{L \to \infty} \left( \condAv{G_{L+\pnorm{y}{\infty}}\circ T_y}{\etaCond{T_{-y}F(L,k)}}\right.\\ &- \left. \condAv{G_{L+\pnorm{y}{\infty}}\circ T_y}{\etaCond{T_{-y}F(L,k-1)}} \right),
\end{split}
\end{equation}
where we have obtained the right-hand side by writing out the definition of $Y_{L,k}$ and some elementary properties of the conditional expectation.  We then rewrite the right hand side again using the consistency assumption on $G_L$, Assumption~\ref{consistency_assumption} of Proposition~\ref{CLT_prop}, to change coordinates, and obtain
\begin{equation}
\begin{split}
W_x(T_y \zeta)=& \limOne_{L \to \infty}  \condAv{Y_{L+\pnorm{y}{\infty},x-y}}{\etaCond{T_{-y}\Gamma_L}}
\\=& \limOne_{L \to \infty}  \condAv{W_{x-y}}{\etaCond{T_{-y}\Gamma_L}}
=W_{x-y}(\zeta)
\end{split}
\end{equation}
\end{proof}

We are now ready to prove
\begin{lemma}\label{YtobConvLemma}
Let $b^2=\Av W_0^2$; then
\begin{equation}
\left| \frac{1}{L^d} \sum_{k=1}^{L^d} \cFLk{Y_{L,k}^2}{k-1}-b^2\right| \to 0 \label{YtobConv}
\end{equation}
in measure (and therefore also in distribution) as $L \to \infty$.
\end{lemma}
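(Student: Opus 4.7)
The plan is to prove $\mathcal{L}^1$ convergence (which implies convergence in measure) by matching the sum with the ergodic average of a translation-covariant stationary field having mean $b^2$.

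\emph{Step 1 (Limit object and ergodic theorem).} Let $\mathcal{G}_x$ denote the $\sigma$-algebra generated by $\{\eta_y : y < x\}$ in the lexicographic order on $\mathbb{Z}^d$, and set $V_x := \mathbb{E}[W_x^2 | \mathcal{G}_x]$. Since the lex order is translation-invariant ($y < x \iff y+z < x+z$) and $W$ is translation-covariant (already shown), $\{V_x\}$ is translation-covariant with $\mathbb{E}V_0 = \mathbb{E}W_0^2 = b^2$. The i.i.d.\ assumption makes the shift action on $\eta$ ergodic, so the multidimensional ergodic theorem gives $L^{-d}\sum_{x \in \Gamma_L} V_x \to b^2$ almost surely. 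In addition, Lemma~\ref{YBoundLemma} combined with the Lyapunov hypothesis ($\mathbb{E}|\eta_0|^\delta < \infty$ for some $\delta > 2$) upgrades the $\mathcal{L}^1$ convergence $Y_{L,x} \to W_x$ of Corollary~\ref{W_exists} to $\mathcal{L}^2$ convergence for each fixed $x$, via uniform integrability of $(Y_{L,x} - W_x)^2$.

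\emph{Step 2 (Uniform error control).} Set $R_{L,x} := \mathbb{E}[Y_{L,x}^2 | \mathcal{F}_{L,<x}] - V_x$; it suffices to show $L^{-d}\sum_x \mathbb{E}|R_{L,x}| \to 0$. Split
\begin{equation*}
\mathbb{E}|R_{L,x}| \le \mathbb{E}|Y_{L,x}^2 - W_x^2| + \mathbb{E}\bigl|\mathbb{E}[W_x^2 | \mathcal{F}_{L,<x}] - V_x\bigr|.
\end{equation*}
The crucial claim is that for $x$ with $\operatorname{dist}(x,\partial\Gamma_L) \ge r$, both terms admit uniform bounds in terms of quantities that depend only on $r$ and vanish as $r \to \infty$. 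This uses translation covariance: the identity $Y_{L,x}(T_{-x}\eta) = \mathbb{E}[W_0 | \eta_{\Gamma_L - x}]$, derived from hypothesis~\ref{consistency_assumption} of Proposition~\ref{CLT_prop}, bounds the first term via Cauchy--Schwarz and the uniform $\mathcal{L}^2$ control on $|Y_{L,x} + W_x|$ by $C\,\mathbb{E}|Y_{r,0} - W_0|^2$, which vanishes by Step 1. For the second term, observe that $V_0$ is bounded ($V_0 \le \mathbb{E}(|\eta_0| + \mathbb{E}|\eta_0|)^2$ since $\eta_0 \perp \mathcal{G}_0$), and the tower property gives $\mathbb{E}[W_x^2 | \mathcal{F}_{L,<x}] - V_x = \mathbb{E}[V_x | \mathcal{F}_{L,<x}] - V_x$, whose $\mathcal{L}^2$ distance is controlled by Pythagoras in terms of the corresponding quantity on the smaller filtration $\sigma(\eta_y : y<0,\,\|y\|_\infty \le r)$ after translation, vanishing by L\'evy's upward theorem. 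Partitioning $\Gamma_L$ by depth $r$, the boundary shell contributes $O(r/L)$ via the uniform $L^1$ bound $\mathbb{E}|R_{L,x}| \le C'$ (itself from Lemma~\ref{YBoundLemma}), while the bulk contributes $o_r(1)$; sending $L \to \infty$ first and then $r \to \infty$ closes the estimate.

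\emph{Main obstacle.} The delicate point is the uniformity of convergence over $x \in \Gamma_L$: pointwise convergence $\mathbb{E}[Y_{L,x}^2 | \mathcal{F}_{L,<x}] \to V_x$ at a single site is routine, but naive pointwise bounds allow $\mathbb{E}|R_{L,x}|$ to remain of order $1$ for $x$ near $\partial\Gamma_L$, which is fatal given the $L^d$-fold average. The key enabler is consistency axiom~\ref{consistency_assumption}, which lets translation of the site $x$ to the origin simultaneously shift the box $\Gamma_L$, so every interior estimate reduces to a single reference computation at the origin whose rate depends only on how deep $x$ lies inside $\Gamma_L$.
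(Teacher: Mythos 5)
Your proof is correct and follows essentially the same route as the paper's: you decompose $\mathbb{E}[Y_{L,k}^2\mid\mathcal{F}_{L,k-1}]$ into the stationary term $V_{x_k}=\mathbb{E}[W_{x_k}^2\mid\eta_{<x_k}]$ plus two error terms, bound the errors in $\mathcal{L}^2$/$\mathcal{L}^1$ via Cauchy--Schwarz and the Pythagorean contraction of conditional expectation (each bound depending only on the depth $R$ of $x_k$ in $\Gamma_L$), and finish with the ergodic theorem for the stationary term plus a bulk/boundary-shell averaging argument. Your proposal is somewhat more explicit than the paper about the uniformity-in-$x$ step, but the decomposition, the role of translation covariance and the consistency hypothesis, and the $L\to\infty$ then $r\to\infty$ limit structure all match the paper's proof.
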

\begin{proof}
 We will do this essentially by showing that $Y_{L,x}$ can be replaced by $W_x$, apart from a boundary term which vanishes in the limit $L \to \infty$.  We can of course write the summand above as
\begin{equation}
\begin{split}
\cFLk{Y^2_{L,k}}{k-1}=&\condAv{W^2_{x_k}}{\eta_{<x_k}=\zeta_{<x_k}} \\&+ \cFLk{W_{x_k}^2}{k-1} \\&- \condAv{W^2_{x_k}}{\eta_{<x_k}=\zeta_{<x_k}} \\&+ \cFLk{Y_{L,k}^2-W_{x_k}^2}{k-1},
\end{split}
\end{equation}
where by $\eta_{<x_k}$ and similar expressions we mean $\eta_x$ for $x<x_k$ in the lexicographic order;
we then deal with the different terms separately.  Letting $f(\eta)=\condAv{W^2_0}{\eta_{<0}=\zeta_{<0}}$, we use translation covariance and the fact that the conditional expectation is a projection in $\mathcal{L}^2$ to obtain 
\begin{equation}
\begin{split}
&\left\| \cFLk{W_{x_k}^2}{k-1} - \condAv{W^2_{x_k}}{\eta_{<x_k}} \right\|_2
\\ & \: \: \leq \left\| f - \condAv{f}{\eta_{\Gamma_R}=\zeta_{\Gamma_R}}\right\|_2 =: a_1(R) \label{a1def}
\end{split}
\end{equation}
where $R$ is the largest integer for which $T_{x_k}\Gamma_R \subset \Gamma_L$.  Employing H\"older's inequality followed by a similar step, we have
\begin{equation}
\begin{split}
&\left\| \cFLk{Y_{L,k}^2-W_{x_k}^2}{k-1} \right\|_1
\\ & \:\: \leq \left\|Y_{L,k}+W_{x_k}\right\|_2 \left\| \cFLk{Y_{L,k}-W_{x_k}}{k-1} \right\|_2 \\ & \:\: \leq 2\left\|W_{x_k}\right\|_2 \left\| W_0-\condAv{W_0}{\eta_{\Gamma_R}=\zeta_{\Gamma_R}} \right\|_2 =:a_2(R). \label{a2def}
\end{split}
\end{equation}
The $\mathcal{L}^2$-norm expressions used to define $a_1$ and $a_2$ must vanish as $R \to \infty$ and depend on $L$ only through $R$ since $f$ and $W_0$ are square-integrable.  Since the relevant terms in Equation~\eqref{YtobConv} are an average over $k$ in which the proportion of the terms with arbitrary large $R$ increases without bound as $L$ increases, these terms go to zero in measure.  The proof will be complete if we can show that
\begin{equation}
\frac{1}{L^d} \sum_{k=1}^{L^d} \condAv{W_{x_k}}{\eta_{<x_k}=\zeta_{<x_k}} \to \Av W_0^2
\end{equation}
in measure; which, given the translation covariance of $W_x$ and the fact that $\eta$ is i.i.d., follows immediately from the $\mathcal{L}^2$-ergodic theorem.
\end{proof}

To obtain inequality~\eqref{bBounds}, we note that Lemma~\ref{YBoundLemma} implies a similar bound on $|W_0|$, and therefore that $\Av W_0^2 \leq 2 \Av |\eta_0|^2$, and that
\begin{equation}
\Av W_0^2 \geq \Av \left( \condAv{W_0}{\eta_0=\zeta_0}^2 \right).
\end{equation}
%
By dominated convergence of conditional expectations (applicable by Lemma~\ref{YBoundLemma}) and the definition of $Y$ in Equation~\eqref{Ydef},
\begin{equation}
\condAv{W_0}{\eta_0=\zeta_0} = \lim_{L \to \infty} \condAv{Y_{L,0}}{\eta_0=\zeta_0}=G_1,
\end{equation}
and so
\begin{equation}\label{W0Bounds}
\Av G_1^2 \le \Av W_0^2 \le 2 (\Av |\eta_0|)^2.
\end{equation}
%

All that remains is to show that we have a sufficiently strong control on the large fluctuations of $Y_{L,k}(\eta)$.

\begin{lemma}\label{NoLargeFluct}
For any $a>0$,
\begin{equation}
 \frac{1}{L^d} \sum_{k=1}^{L^d} \cFLk{Y^2_{L,k} I\left[|Y_{L,k}| > a L^{d/2} \right] }{k-1} \to 0
\end{equation}
in probability as $L \to \infty$.
\end{lemma}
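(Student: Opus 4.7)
The plan is to reduce the statement to a single-variable tail estimate for the i.i.d.\ field $\eta$, using Lemma~\ref{YBoundLemma} to transfer the tail behaviour of the martingale increment $Y_{L,k}$ onto the corresponding field variable $\eta_{x_k}$. Since the summand displayed in the lemma is nonnegative, convergence in probability will follow from convergence in $\mathcal{L}^1$; taking total expectation and applying the tower property to eliminate the conditional expectation, it suffices to show
\begin{equation*}
\frac{1}{L^d} \sum_{k=1}^{L^d} \Av\left[Y_{L,k}^2\, I[|Y_{L,k}| > a L^{d/2}]\right] \longrightarrow 0.
\end{equation*}

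Next, Lemma~\ref{YBoundLemma} gives the pointwise bound $|Y_{L,k}(\eta)| \le |\eta_{x_k}| + \Av|\eta_0|$, so $Y_{L,k}^2 \le (|\eta_{x_k}| + \Av|\eta_0|)^2$ and the event $\{|Y_{L,k}| > a L^{d/2}\}$ is contained in $\{|\eta_{x_k}| + \Av|\eta_0| > a L^{d/2}\}$. Hence each summand above is bounded by the expectation $\Av[(|\eta_{x_k}| + \Av|\eta_0|)^2 I[|\eta_{x_k}| + \Av|\eta_0| > a L^{d/2}]]$, which by identical distribution equals the $k$-independent quantity $\Av[(|\eta_0| + \Av|\eta_0|)^2 I[|\eta_0| + \Av|\eta_0| > a L^{d/2}]]$, so the average over $k$ simply collapses to this single expectation.

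This expectation tends to zero as $L \to \infty$ by dominated convergence: the integrand is bounded above by the integrable function $(|\eta_0| + \Av|\eta_0|)^2$ (integrability is immediate from the finite variance supplied by the Lyapunov hypothesis) and it converges pointwise to zero since $a L^{d/2} \to \infty$. No real obstacle arises: the substance of the argument is the observation that, via Lemma~\ref{YBoundLemma}, the martingale increment $Y_{L,k}$ inherits all of its tail behaviour from the single field variable $\eta_{x_k}$, while the truncation threshold $a L^{d/2}$ grows without bound regardless of dimension. If desired, invoking the Lyapunov exponent $\delta > 2$ together with Markov's inequality would furnish the explicit polynomial rate $O(L^{d(2-\delta)/2})$, but this is not required for the stated convergence in probability.
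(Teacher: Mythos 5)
Your proof is correct and takes a genuinely simpler route than the paper's. The paper (following Bovier's correction) also starts by taking total expectation, but then applies H\"older's inequality to split $\Av\bigl[Y_{L,k}^2\, I[|Y_{L,k}| > aL^{d/2}]\bigr]$ into factors $\bigl(\Av Y_{L,k}^{2q}\bigr)^{1/q}$ and $\bigl(P[|Y_{L,k}| > aL^{d/2}]\bigr)^{1/p}$, controlling the latter by Chebyshev; finiteness of $\Av Y_{L,k}^{2q}$ for some $q>1$ is precisely where the Lyapunov exponent $\delta = 2q > 2$ enters, and the paper even footnotes that this is the only place the full Lyapunov hypothesis is invoked. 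You sidestep H\"older entirely by exploiting the stronger structural content of Lemma~\ref{YBoundLemma}: the pointwise domination $|Y_{L,k}| \le |\eta_{x_k}| + \Av|\eta_0|$ transfers both the truncation event and the integrand onto the single i.i.d.\ variable $\eta_{x_k}$, after which dominated convergence dispatches the resulting $k$-independent tail using nothing beyond finite variance. The upshot is that this particular lemma does not in fact need the full Lyapunov strengthening, only a second moment --- a genuine, if modest, sharpening of the published argument --- and your reduction from $L^1$ convergence to convergence in probability for nonnegative variables via Markov's inequality is also more direct than the paper's appeal to uniform integrability.
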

The proof is due to Bovier~\cite{Bovier}, and provides a correction of a mistake in~\cite{AW.CMP}.
\begin{proof}
Note that the average of the left hand side above is
\begin{equation}
\begin{split}
&\Av \left[ \frac{1}{L^d} \sum_{k=1}^{L^d} \cFLk{Y^2_{L,k} I\left[|Y_{L,k}| > a L^{d/2} \right] }{k-1} \right]
\\ &\;\; = \frac{1}{L^d} \sum_{k=1}^{L^d} \Av \left( Y^2_{L,k} I\left[|Y_{L,k}| > a L^{d/2} \right] \right)
\\ &\;\; \le \frac{1}{L^d} \sum_{k=1}^{L^d} \left( \Av Y^{2q}_{L,k} \right)^{1/q} \left(P\left[|Y_{L,k}| > a L^{d/2} \right] \right)^{1/p} \label{BovierBound}
\end{split}
\end{equation}
for any $1/p+1/q=1$ by H\"older's inequality.  Chebyshev's inequality (noting $\Av Y_{L,k}=0$) 
 gives
\begin{equation}
P\left[|Y_{L,k}| > a L^{d/2} \right] \le \frac{\Av Y_{L,k}^2}{a^2 L^d}
\end{equation}
and Lemma~\ref{YBoundLemma} together with the fact that $\eta_x$ has finite variance gives a uniform (in $k$) upper bound on $\Av Y_{L,k}^2$, so the right hand side goes to zero; we could conclude that the right hand side of~\eqref{BovierBound} is zero if $\Av Y^{2q}_{L,k}$ is finite for some finite $p$, i.e.\ for $q > 1$.  Lemma~\ref{YBoundLemma} implies that $\Av Y^{2q}_{L,k}$ is finite if $\eta_{x_k}$ has a finite $2q$~moment, and the Lyapunov condition is precisely the fact that this is true for some $q>1$.\footnote{This is the only place where our results require the full Lyapunov condition, and not merely existence of 2 moments.}  We then have
\begin{equation}
\lim_{L \to \infty} \Av \left[ \frac{1}{L^d} \sum_{k=1}^{L^d} \condAv{Y^2_{L,k} I\left[|Y_{L,k}| > a L^{d/2} \right] }{\etaCond{F(L,k-1)}} \right] = 0,
\end{equation}
which, since the quantities inside the average are a uniformly integrable family of nonnegative functions, can only be true if those functions converge in probability to 0.
\end{proof}

We can now apply Theorem~3.2 of~\cite{HallHeyde} and conclude that
\begin{equation}
G_L(\eta)/L^{d/2} \to N(0,b^2),
\end{equation}
and the proof of Proposition~\ref{CLT_prop} is complete.

%


\section{Bounds on $b^2$}\label{restrictions_sec}

Theorem~\ref{AW_theta} is the source of a perplexing limitation remaining in our result, the exclusion of distributions with isolated point masses (in particular of discrete distributions) from most of the result.  Since the proof of this theorem is rather opaque, it seems worthwhile to give a heuristic discussion which may make the result appear less arbitrary, and clarify some of the issues involved in attempting to obtain a more powerful result.

Consider the simplest nontrivial discrete measure: let $\nu = \tfrac{1}{2} \delta_{1} + \tfrac{1}{2} \delta_{-1}$.  Given $M$ and $\beta$, is there a function $g$ satisfying the apparently relevant properties of $G_1$ ($\lip{g}\le 1$, $\int g' d\nu = M$, $\lip{g'}\le \beta$) with $\int g^2 d\nu=0$?  Quite often the answer is yes.  For example whenever $M \le \min(\beta/3,1)$, the function
\begin{equation}
g_1(x) = \left\{ \begin{array}{ll}
\frac{M}{2} (x^3-x), & -1 \le x \le 1 \\
M(x-1), & x>1 \\
M(x+1), & x < -1
\end{array}\right.
\end{equation}
clearly fits these requirements.  This is of course not a monotone function, and indeed it is more or less obvious that (as Theorem~\ref{AW_theta} states) no monotone, continuously differentiable function will do: if $g'(\pm 1)> 0$, then for such a function $g(\mp 1) \neq 0$.

This would appear to leave little room for improvement in Proposition~\ref{CLT_prop}, but this is not quite the case.  Proposition~\ref{CLT_prop} stipulates that $G_L(\zeta)$ should be monotone in $\zeta_0$ for all $L$ and $\zeta$, but all that is needed is that $G_1$ be monotone, which should be a weaker requirement.  It is, however, not immediately clear that it follows from something as well-known as the FKG inequalities which imply the special case.

\chapter{Free energy fluctuations}\label{G_section}\label{End.proof}

\section{Definition of $G_L$}

In order to prove Proposition~\ref{main_prop}, we need to construct a sequence of functions $G_L$ which represent the effect of the random field $\eta$ on the free energy difference between states with the largest and smallest permissible values of the order parameter.  Since our most robust way of accessing these states is by taking a limit in the uniform field $h$ which couples to $\eta$, and since we are concerned with the thermodynamic limit, it should be plausible that one candidate is described by the formal expression
\begin{equation}\label{formalG}
\lim_{\delta \to 0^+}\lim_{M \to \infty} \frac{1}{2} \condAv{F^{h+\delta}_M (\eta) - F^{h+\delta}_M(r_L(\eta)) - F^{h-\delta}_M (\eta) + F^{h-\delta}_M(r_L(\eta))}{\eta_{\Gamma_L}=\zeta_{\Gamma_L}}
\end{equation}
where $r_L$ is the function which sets the field to $0$ inside $\Gamma_L$; $F_M$ is the free energy of the system on the finite domain $\Gamma_L$ with periodic boundary conditions (defined in Equation~\eqref{Fdef}), and for brevity we have omitted the argument $\upsilon$, or in other words we let $F(\zeta) = \condAv{F(\eta,\ul{\upsilon})}{\eta=\zeta}$; and $\eta_{\Gamma_L}$ is the collection $\eta_x$ where $T_x A_0 \subset \Gamma_L$ and likewise for similar expressions.

 However it is hardly clear that the expression~\eqref{formalG} is well-defined.  We will show that a quite similar quantity is, but first let us turn to a few observations which should help motivate this choice.

For convenience, we let
\begin{equation}\label{GdeltaDef}
\gf{L,M}{\delta}(\zeta) := \frac{1}{2} \condAv{F^{h+\delta}_M (\eta) - F^{h+\delta}_M(r_L(\eta)) - F^{h-\delta}_M (\eta) + F^{h-\delta}_M(r_L(\eta))}{\eta_{\Gamma_L} = \zeta_{\Gamma_L}}
\end{equation}
for $M \ge L$.
Then
\begin{equation}
\begin{split}
\frac{\partial \gf{L,M}{\delta}}{\partial \zeta_x} &= \frac{1}{2} \condAv{\frac{\partial F^{h+\delta}_M}{\partial\eta_x} - \frac{\partial F^{h-\delta}_M}{\partial \eta_x}}{\eta_{\Gamma_L} = \zeta_{\Gamma_L}} \\ &= \frac{1}{2} \condAv{\state{\lop_x}^{h+\delta}_M(\eta) - \state{\lop_x}^{h-\delta}_M(\eta)}{\eta_{\Gamma_L} = \zeta_{\Gamma_L}}.
\end{split}
\end{equation}
This means that $\lip{ \gf{L,M}{\delta} } \le 1$ uniformly in all parameters, which will carry over in the limit $M \to \infty$ to assumption~\ref{uLif} in Proposition~\ref{CLT_prop}.  It also means that
\begin{equation}
\Av \frac{\partial \gf{L,M}{\delta}}{\partial \zeta_x} = \frac{1}{2} \Av \left( \state{\lop_x}^{h+\delta}_M(\eta) - \state{\lop_x}^{h-\delta}_M(\eta)\right)
\end{equation}
which in light of Corollary~\ref{LongLongRange} should mean
\begin{equation}
\Av G_1' = \frac{1}{2} \left( \frac{\partial \F}{\partial h +} - \frac{\partial \F }{\partial h -} \right),
\end{equation}
allowing the desired control on $b$.
The use of a conditional expectation in the definition should take care of assumptions~\ref{local_hypothesis} and~\ref{consistency_assumption}, and we can arrange for the remaining assumption (mean zero) by simply subtracting the mean value.

Let us return to this more carefully:
\begin{proposition}\label{convergence_prop}
Let at least one of the following hold:
\begin{enumerate}
\item $\beta < \infty$
\item The distribution of $\eta_x$ is absolutely continuous with respect to the Lebesgue measure
\end{enumerate}

Then there is a decreasing sequence $\delta_i \to 0$ and an increasing sequence of integers $M_j \to \infty$ such that
\begin{equation} \label{GLDef}
G_L(\zeta) := \lim_{i \to \infty} \lim_{j \to \infty} \left( \gf{L,M_j}{\delta_i} - \Av \left[ \gf{L,M_j}{\delta_i} \right] \right)
\end{equation}
exists for all $\zeta \in \Fields$ and all $L \in \mathbb{N}$.  Furthermore, the family $G_L$ satisfies the hypotheses of Proposition~\ref{CLT_prop}, and $G_1$ has a distributional derivative $G_1'$ satisfying
\begin{equation}
\Av G_1' = \frac{1}{2} \left( \frac{\partial \F}{\partial h +} - \frac{\partial \F }{\partial h -} \right).
\end{equation}
\end{proposition}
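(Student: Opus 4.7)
The plan is to first extract subsequences $M_j \to \infty$ and $\delta_i \to 0^+$ by a compactness argument, then to verify that the four hypotheses of Proposition~\ref{CLT_prop} survive the iterated limit, and finally to compute the average of the distributional derivative of $G_1$ using Corollary~\ref{mag_ergodicity}.

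The key uniform estimate is $\lip{\gf{L,M}{\delta}} \le 1$, already recorded in the discussion preceding this proposition and ultimately a consequence of Lemma~\ref{Ruelle_lemma} applied to~\eqref{GdeltaDef}. Since $\gf{L,M}{\delta}$ depends on $\zeta$ only through the finitely many coordinates $\zeta_{\Gamma_L}$, the centered functions $\gf{L,M}{\delta} - \Av\gf{L,M}{\delta}$ are 1-Lipschitz real functions on $\R^{|\Gamma_L|}$ with vanishing mean under the product law of $\eta_{\Gamma_L}$; these two facts together force a uniform bound on every compact subset of $\R^{|\Gamma_L|}$, since Chebyshev prevents the values from escaping to $\pm\infty$. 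Arzel\`{a}--Ascoli then produces a sequence $M_j \to \infty$ along which the centered function converges locally uniformly, and a diagonal extraction (first in $L$, then in $\delta$) gives a single $M_j$ which works for every integer $L$ and every $\delta$ in a decreasing sequence $\delta_i \to 0^+$ chosen within the co-countable set of $\delta > 0$ at which $\F(\beta, h \pm \delta)$ is differentiable in $\delta$ (the co-countability following from concavity in $h$).

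Three of the four hypotheses of Proposition~\ref{CLT_prop} pass to the limit by inspection: locality, the Lipschitz bound, and zero mean (the last by the subtraction in~\eqref{GLDef}). The consistency hypothesis~\eqref{consistency_assumption} is the main technical step. Given $\Lambda = T_{-x}\Gamma_{L'} \subset \Gamma_L$, the tower property reduces $\condAv{\gf{L,M}{\delta}(\eta)}{\eta_\Lambda = \zeta_\Lambda}$ to the same expression but with conditioning on $\eta_\Lambda$ in place of $\eta_{\Gamma_L}$. Writing each $F^{h\pm\delta}_M(\eta)$ as $[F^{h\pm\delta}_M(\eta) - F^{h\pm\delta}_M(\eta')] + F^{h\pm\delta}_M(\eta')$, where $\eta'$ is obtained from $\eta$ by zeroing the field on $\Lambda$, everything involving fields already nullified on $\Lambda$ (the $r_L$ terms and the new remainder pieces) becomes independent of $\zeta_\Lambda$ after the conditional expectation, while the residual piece equals $\gf{L',M}{\delta}(T_x\zeta)$ by translation invariance of $\inter_0$ and of the law of $\eta$. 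Passing to the iterated limit and using that both sides have mean zero absorbs the $\zeta_\Lambda$-independent additive constant.

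For the derivative, averaging the identity $\partial\gf{1,M}{\delta}/\partial\zeta_0 = \tfrac{1}{2}\condAv{\state{\lop_0}^{h+\delta}_M - \state{\lop_0}^{h-\delta}_M}{\eta_0=\zeta_0}$ against the distribution of $\eta_0$ yields $\tfrac{1}{2}\Av[\state{\lop_0}^{h+\delta}_M - \state{\lop_0}^{h-\delta}_M]$, which converges to $\tfrac{1}{2}(\partial\F/\partial h+ - \partial\F/\partial h-)$ along the chosen subsequences by Corollary~\ref{mag_ergodicity}. To identify this limit with $\Av G_1'$ for a genuine distributional derivative $G_1'$ of $G_1$, one uses that $(\gf{1,M}{\delta})'$ is uniformly bounded in $L^\infty$ and therefore admits, along a further subsequence, a weak-$\ast$ limit $g$; $g$ serves as a distributional derivative of $G_1$ because the antiderivatives converge pointwise, and pairing with the law of $\eta_0$ exchanges the limit with the average. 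The main obstacle is precisely this last step: coordinating the two nested limits with the weak-$\ast$ convergence of the derivatives so that the average of the derivative really equals the limit of the averages, especially at $\beta = \infty$ where only absolute continuity of the law of $\eta_0$ is available to substitute for the analyticity of the finite-volume free energies.
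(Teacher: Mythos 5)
Your overall architecture matches the paper's: compactness via Lipschitz bounds plus diagonal extraction to define $G_L$, verification of Proposition~\ref{CLT_prop}'s hypotheses, and then Corollary~\ref{mag_ergodicity} to identify $\Av G_1'$. The consistency argument you give (tower property, zeroing the field on $\Lambda$, absorbing the residual constant using zero means on both sides) is essentially the same observation the paper makes inside Lemma~\ref{tau_into_CLT}, reached by a slightly different route.

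There is, however, a genuine gap in the last step. You invoke a weak-$\ast$ (in $\Lp{\infty}{\R}$ against $\Lp{1}{\R}$) subsequential limit $g$ of $(\gf{1,M_j}{\delta_i})'$, and you then want to exchange this limit with averaging against the law of $\eta_0$. That exchange is only justified when the law of $\eta_0$ is absolutely continuous with respect to Lebesgue measure, so that the pairing is against an $\Lp{1}{}$ density; you acknowledge this yourself. But Proposition~\ref{convergence_prop} also covers the case $\beta < \infty$ with a possibly singular law of $\eta_0$, and your argument as written does not reach it: if the law charges a Lebesgue-null set, weak-$\ast$ convergence of the derivatives says nothing about the values there, and $\lim \Av (\gf{1,M_j}{\delta_i})' = \Av g$ can fail. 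The paper's finite-temperature argument avoids this by strengthening the mode of convergence, not by invoking analyticity: one uses the explicit bound
\begin{equation}
\left| \frac{\partial^2 \gf{L,M}{\delta}}{\partial \zeta_x \, \partial \zeta_y} \right| \le 2\beta,
\end{equation}
which makes the conditional derivatives $\phi^x_{L,M,\delta}$ a uniformly equicontinuous family. Arzel\`{a}--Ascoli (in the form of Lemma~\ref{compactness_lemma_2}) then yields \emph{locally uniform}, not merely weak-$\ast$, convergence of the derivatives along the extracted subsequences. Locally uniform convergence is enough to exchange the limit with $\Av$ against \emph{any} probability law on $\R$, in particular a discrete one. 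You should therefore run the two cases separately: for $\beta < \infty$ use the $2\beta$ bound to get uniform convergence of derivatives; for the absolutely continuous case, your weak-$\ast$ argument (which is essentially the paper's Theorem~\ref{weak_conv_derivatives_thm} and Corollary~\ref{weak_conv_corr}) is correct. As a minor secondary point, once you have the $\tau_x := \lim_L \psi^x_L$ objects with $\Av\tau_x = M$ and the conditional-expectation intertwining, it is cleaner to pass them through Lemma~\ref{tau_into_CLT} rather than re-verifying the hypotheses of Proposition~\ref{CLT_prop} by hand; that is how the paper packages the finite-temperature case.
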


I wish to point out that the proof will not assume that $\eta$ and $\ul \upsilon$ are mutually independent, but only that the different $\eta_x$ remain independent when conditioned on $\ul \upsilon$ - this will be important for systems with continuous symmetries, where $\eta_x$ and $\ul \upsilon_x$ will represent different components of a random vector, and may therefore be correlated.

\section{Proof of proposition~\ref{convergence_prop} - finite temperature}\label{convergence.section.1}

In this situation we will proceed by constructing functions $\tau$ which satisfy the conditions of Theorem~\ref{tau_into_CLT}.

The following convergence argument will be used frequently in what follows.
\begin{lemma} \label{compactness_lemma_1}
Let $f_{ij}:\R^N \to \R$ be a family of functions labeled by $i,j \in \N$, each satisfying $\lip{f_{ij}} \le 1$ and $f_{ij}(0) = 0$.  Then there are subsequences $i_k,j_l$ such that
\begin{equation}
f(z) = \lim_{k \to \infty} \lim_{l \to \infty} f_{i_k j_l}(z)
\end{equation}
exists for all $z \in \R^N$.  Furthermore the convergence is uniform on any compact $\Xi \subset \R^N$.
\end{lemma}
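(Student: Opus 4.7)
The plan is to apply the Arzelà--Ascoli theorem twice, once for each level of the iterated limit, with a diagonal extraction argument in between. The essential inputs are (i) the Lipschitz bound $\lip{f_{ij}}\le 1$ combined with $f_{ij}(0)=0$ gives $|f_{ij}(z)|\le \|z\|$, so the family is uniformly bounded on each compact set, and (ii) the common Lipschitz constant makes the family equicontinuous. Both are hypotheses of Arzelà--Ascoli.

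First I would fix an exhaustion of $\R^N$ by compact sets, for instance the closed balls $B_n$ of radius $n$ about the origin, and handle the inner limit. For each fixed $i$ the family $\{f_{ij}\}_{j\in\N}$ is equicontinuous and uniformly bounded on every $B_n$, so by Arzelà--Ascoli plus a standard diagonal extraction across $n$, one can find a subsequence $(j^{(i)}_l)_l$ along which $f_{i,j^{(i)}_l}$ converges uniformly on compacts to some continuous $g_i:\R^N\to\R$. To make the choice of $j_l$ independent of $i$, I iterate this construction: choose $(j^{(1)}_l)$ for $i=1$, then a subsequence $(j^{(2)}_l)\subset (j^{(1)}_l)$ that also works for $i=2$, and so on, and diagonalize by setting $j_l := j^{(l)}_l$. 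Then for every fixed $i$, $(j_l)_{l\ge i}$ is a subsequence of $(j^{(i)}_l)$, so $f_{i,j_l}\to g_i$ uniformly on compacts as $l\to\infty$.

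Next I would apply Arzelà--Ascoli to the limit functions $\{g_i\}_i$ themselves. Since each $f_{i,j_l}$ is $1$-Lipschitz and vanishes at the origin, the same is inherited by the pointwise limit $g_i$: we have $\lip{g_i}\le 1$ and $g_i(0)=0$. Hence $\{g_i\}$ is again equicontinuous and locally uniformly bounded, so a further diagonal extraction yields a subsequence $(i_k)$ such that $g_{i_k}\to f$ uniformly on compacts for some continuous $f:\R^N\to\R$. This $f$, together with the subsequences $(i_k)$ and $(j_l)$, is the claimed limit. The uniform convergence on a compact $\Xi\subset\R^N$ in both stages is automatic from the Arzelà--Ascoli argument.

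There is no genuine obstacle here: the statement is exactly the diagonal compactness content of Arzelà--Ascoli. The only thing to be careful about is the order of the diagonalizations, so that the subsequence $(j_l)$ works simultaneously for every $i_k$; this is ensured by choosing $(j_l)$ \emph{before} $(i_k)$ and making it a diagonal of a nested family of subsequences, one for each $i\in\N$.
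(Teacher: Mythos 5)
Your proof is correct and takes essentially the same approach as the paper: two applications of Arzelà--Ascoli (using the Lipschitz bound and normalization at the origin for equicontinuity and local uniform boundedness), with nested diagonal extractions first over $j$ and then over $i$, and over a compact exhaustion of $\R^N$. The paper organizes the diagonalization over the index and the compact set into a single nested extraction rather than the two-stage version you describe, but this is a cosmetic difference.
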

\begin{proof}
Note that the condition $\lip{f_{ij}} \le 1$ implies uniform equicontinuity.  On the compact domain $\Xi_n := [-n,n]^N$ we have the uniform bound $|f_{xy}| \le Nn$, and so by the Arzel\`{a}-Ascoli theorem, any infinite collection of these functions has a subsequence which converges uniformly on $\Xi_n$.

We then apply the diagonal subsequence trick as follows: there is a sequence $j^1_l$ so that $f_{1,j^1_l}$ converges uniformly on $\Xi_1$, which has a subsequence $j^2_l$ so that $f_{2,j^2_l}$ and $f_{1,j^1_l}$ converge uniformly on $\Xi_2$ and so on.  Then the diagonal subsequence $j_l=j^l_l$ has the property that for any $k,n$ $f_{n,j_l^l}$ converges uniformly on $\Xi_n$, with limits $f_k:\R^N \to \R$ with the same properties we have used above.  By the same argument, we can now choose a sequence $i^1_k$ so that $f_{i^1_k}$ converges uniformly on $\Xi_1$, a subsequence $i^2_k$ so that $f_{i^2_k}$ converges on $\Xi_2$, etc.  Then with $i_k := i^k_k$, $j_l:=j^l_l$, we have the desired result.
\end{proof}

The same argument also gives
\begin{lemma}\label{compactness_lemma_2}
Let $f_{ij}:\R^N \to \R$ be a family of functions labeled by $i,j \in \N$, each satisfying $\lip{f_{ij}} \le 1$ and $|f_{ij}(z)| \le c < \infty$ for all $z \in \R^N$.  Then there are subsequences $i_k,j_l$ such that
\begin{equation}
f(z) = \lim_{k \to \infty} \lim_{l \to \infty} f_{i_k j_l}(z)
\end{equation}
exists for all $z \in \R^N$.  Furthermore the convergence is uniform on any compact $\Xi \subset \R^N$.
\end{lemma}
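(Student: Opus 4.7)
The plan is to mimic the proof of Lemma~\ref{compactness_lemma_1} essentially verbatim, since the only hypothesis that has changed is how uniform boundedness on compacta is obtained: previously it came from $f_{ij}(0)=0$ combined with $\lip{f_{ij}}\le 1$, which yielded the bound $|f_{ij}(z)|\le Nn$ on $\Xi_n=[-n,n]^N$; now we have the even stronger global bound $|f_{ij}|\le c$ directly from the hypothesis. Equicontinuity is still provided by the uniform Lipschitz bound $\lip{f_{ij}}\le 1$, which is exactly what Arzel\`a--Ascoli needs.

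Concretely, first I would fix $n$ and restrict attention to $\Xi_n$, where the family $\{f_{ij}\}$ is uniformly bounded (by $c$) and uniformly equicontinuous (in fact $1$-Lipschitz). By Arzel\`a--Ascoli, any subsequence has a further subsequence converging uniformly on $\Xi_n$. Second, I would run the same diagonal-subsequence extraction as in Lemma~\ref{compactness_lemma_1}: pick $j^1_l$ so that $f_{1,j^1_l}$ converges uniformly on $\Xi_1$; recursively refine to $j^n_l$ so that $f_{k,j^n_l}$ converges uniformly on $\Xi_n$ for all $k\le n$; then take $j_l:=j^l_l$. This yields limiting functions $f_k:\R^N\to \R$ with $f_{k,j_l}\to f_k$ uniformly on every $\Xi_n$. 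Third, I would repeat the same construction in the $i$-index on the family $\{f_k\}_k$, producing a subsequence $i_k$ so that $f_{i_k}$ converges uniformly on each $\Xi_n$, with pointwise limit $f:\R^N\to\R$.

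Finally, since every compact $\Xi\subset \R^N$ is contained in some $\Xi_n$, uniform convergence on each $\Xi_n$ implies uniform convergence on any compact set. No step presents a real obstacle here; the only thing to note is that the order of limits matters (one really does need to extract the $j$-subsequence first, for each fixed $i$, and only then extract the $i$-subsequence), but this is already handled transparently by the diagonal construction used in Lemma~\ref{compactness_lemma_1}.
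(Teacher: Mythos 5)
Your proposal is correct and matches the paper, which disposes of this lemma with the single remark ``The same argument also gives,'' i.e.\ by pointing to the proof of Lemma~\ref{compactness_lemma_1}. You correctly identify that the only change is the source of local uniform boundedness (now the global bound $|f_{ij}|\le c$ instead of $f_{ij}(0)=0$ plus the Lipschitz bound), after which the Arzel\`a--Ascoli plus diagonal-subsequence argument carries over verbatim.
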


Now let
\begin{equation}\label{gf_derivs}
\theta^x_{M,\delta}(\zeta) := \frac{\partial \gf{L,M}{\delta}}{\partial \zeta_x} = \frac{1}{2} \left( \Av \state{\lop_x}^{h+\delta}_M(\zeta,\ul \upsilon) - \Av \state{\lop_x}^{h-\delta}_M(\zeta,\ul \upsilon)\right);
\end{equation}
note that the quantity defined does not depend on $L$, and that $\zeta$ is fixed in the averages which are taken over the other fields $\ul{\upsilon}$.  To simplify the similar expressions appearing below we will write $\state{\cdot}^h_M(\zeta) := \Av\state{\cdot}^h_M(\zeta,\ul \upsilon)$.  Also, evidently
\begin{equation} \label{thetaBound}
| \theta^x_{M,\delta}(\zeta) | \le \| \lop_x \| = 1
\end{equation}
and
\begin{equation}
\begin{split}
\left| \frac{\partial \theta^x_{M,\delta}}{\partial \zeta_y} \right| = \frac{\beta}{2} & \left| \state{\lop_x \lop_y}^{h+\delta}_M(\zeta) - \state{\lop_x \lop_y}^{h-\delta}_M(\zeta) \right. \\
& \left. - \state{\lop_x}^{h+\delta}_M(\zeta)  \state{\lop_y}^{h+\delta}_M(\zeta) + \state{\lop_x}^{h-\delta}_M(\zeta)  \state{\lop_y}^{h-\delta}_M(\zeta)
\right|\leq 2\beta;
\end{split}
\end{equation}
then
\begin{equation} \label{phiDef}
\phi^x_{L,M,\delta}(\zeta) := \condAv{\frac{\partial \gf{L,M}{\delta}}{\partial \eta_x}}{\eta_{\Gamma_L}} = \condAv{\theta^x_{M,\delta}(\eta)}{\eta_{\Gamma_L}}
\end{equation}
obeys the same bounds,
\begin{gather}
| \phi^x_{L,M,\delta}(\zeta)| \le 1 \\
\left| \frac{\partial \theta^x_{M,\delta}}{\partial \zeta_y} \right| \le 2 \beta
\end{gather}

For $\beta < \infty$, this means that for each $L$, $\phi^x_{L,M,\delta}$ is a uniformly equicontinuous family of functions of the $L^d$ variables $\zeta_{\Gamma_L}$.  We can apply Lemma~\ref{compactness_lemma_2} to find a decreasing sequence $\delta_i \to 0$ and an increasing sequence $M_j \to \infty$ (by applying the diagonal subsequence trick, we can choose them to be independent of $x$ and $L$) so that
\begin{equation}
\psi^x_L (\zeta) := \lim_{i \to \infty} \lim_{j \to \infty} \phi^x_{L,M_j,\delta_i}(\zeta)
\end{equation}
exists, and by uniformity of convergence
\begin{equation} \label{GL_exists_as_lim}
G_L(\zeta) := \lim_{i \to \infty} \lim_{j \to \infty} \left( \gf{L,M_j}{\delta_i}(\eta) - \Av \gf{L,M_j}{\delta_i} \right)
\end{equation}
also exists with
\begin{equation}
\frac{\partial G_L}{\partial \zeta_x} =  \psi^x_L(\zeta)
\end{equation}
for all $L$, $\zeta$, $x \in \Gamma_L$.


Note that Equation~\eqref{phiDef} implies that, for any $K<L\le M$,
\begin{equation} \label{phiMartin}
\phi^x_{K,M,\delta} = \condAv{\phi^x_{L,M,\delta}}{\eta_{\Gamma_K}},
\end{equation}
and by the conditional form of the dominated convergence theorem this implies that
\begin{equation}
\psi^x_{K} = \condAv{\psi^x_L}{\eta_K},
\end{equation}
which makes $\psi^x_{K}$ a martingale; $|\psi^x_{K}|\le 1$ makes it a uniformly integrable one, and applying the relevant martingale convergence theorem we see that
\begin{equation}
\tau_x := \lim_{L \to \infty} \psi^x_L
\end{equation}
exists as an $\mathcal{L}^1$ limit, with
\begin{equation}
\condAv{\tau_x}{\eta_{\Gamma_L}} = \psi^x_L = \frac{\partial G_L}{\partial \zeta_x}.
\end{equation}
Following through the various limits, we see that $|\tau_x (\zeta)| \le 1$ and $\left| \frac{ \partial \tau_x}{\partial \zeta_y} \right| \le 2 \beta$.

We have defined $\gf{L,M}{\delta}$ in terms of periodic boundary conditions, so Equation~\eqref{phiMartin} also implies
\begin{equation}
\phi^x_{K,M,\delta}\circ T_y = \condAv{\phi^{x-y}_{L,M,\delta}}{\eta_{\Gamma_K}=T_y\zeta_{T_{-y} \Gamma_K}},
\end{equation}
which is translation covariance.  Following through the limits used to define $\psi$ and $\tau$ (thanks to the fact that the sequences involved are independent of $L$ and $x$), we see that this implies
\begin{equation}
\tau_{x+y}(T_y \zeta) = \tau_x(\zeta).
\end{equation}

Finally,
\begin{equation}
\begin{split}
\Av \tau_x = \Av \tau_0 = \Av \psi^0_1 = \lim_{i \to \infty} \lim_{j \to \infty} \Av \phi^0_{1,M_j,\delta_i} = \lim_{i \to \infty} \lim_{j \to \infty} \Av \theta^0_{M_j,\delta_i}
\end{split}
\end{equation}
and applying Theoroms~\ref{LongLongRange} and~\ref{mag_ergodicity} we have
\begin{equation}
\Av \tau_x = \lim_{i \to \infty} \lim_{j \to \infty} \frac{1}{2} \Av \left( \state{\lop_x}^{h+\delta}_M(\eta) - \state{\lop_x}^{h-\delta}_M(\eta)\right) =  \frac{1}{2} \left( \frac{\partial \F}{\partial h +} - \frac{\partial \F }{\partial h -} \right).
\end{equation}

\section{Proof of proposition~\ref{convergence_prop} for absolutely continuous distributions of $\eta$}\label{convergence.section.2}


First of all, note that it is obvious from Equation~\ref{GdeltaDef} that $\gf{L,M}{\delta}(0)=0$; Inequality~\eqref{thetaBound} implies $\lip{\gf{L,M}{\delta}} \le 1$, so we can apply Lemma~\ref{compactness_lemma_1} to obtain~\eqref{GL_exists_as_lim}, with $\| G_L\|\le 1$.  $\Av G_L =0$ is obvious.  We can apply the diagonal subsequence trick to obtain sequences independent of $L$, which implies that the consistency condition~\ref{consistency_assumption} of Proposition~\ref{CLT_prop} is satisfied.

The hard part is to show $\Av G_1'=M$.  Without uniform equicontinuity of the derivatives, we have no reason to expect that an object like $\phi$ of the previous section will converge uniformly, and without that we have no reason to expect that a pointwise limit will still be a derivative.  However the following theorem allows us to find a particular kind of weak limit which will do the trick:
\begin{theorem}\label{weak_conv_derivatives_thm}
Let $g$ be a measurable function and $g_n$ a sequence of measurable functions such that $\|g_n\|\le 1$, $\|g\|\le 1$, and
\begin{equation}
\lim_{n \to \infty} \int_a^b g_n(x) dx = \int_a^b g(x) dx
\end{equation}
for all $a,b \in \R$.  Then for any signed measure $\nu$ with finite total variation which is absolutely continuous with respect to the Lebesgue measure,
\begin{equation} \label{weak_conv_conclusion}
\lim_{n \to \infty} \int g_n d\nu = \int g d\nu.
\end{equation}
\end{theorem}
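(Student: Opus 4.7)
The plan is to reduce the statement to a density argument in $L^1(\R)$, exploiting the uniform $L^\infty$ bound on the sequence $g_n$. First I would invoke the Radon--Nikodym theorem to write $d\nu = \rho\,dx$ for some $\rho \in L^1(\R)$ with $\|\rho\|_1$ equal to the total variation of $\nu$. The conclusion~\eqref{weak_conv_conclusion} then becomes the statement
\begin{equation}
\lim_{n\to\infty} \int g_n\,\rho\,dx = \int g\,\rho\,dx
\quad \text{for every } \rho \in L^1(\R).
\end{equation}

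Next I would enlarge the class of test functions against which convergence is already known. By linearity, the hypothesis immediately extends from indicators $\mathbf{1}_{[a,b]}$ to any finite real linear combination $s = \sum_{i=1}^{k} c_i \mathbf{1}_{[a_i,b_i]}$, i.e.\ to every integrable step function. Thus $\int g_n\, s\, dx \to \int g\, s\, dx$ for every such $s$.

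The main step is then a standard three-$\epsilon$ argument. Given $\rho \in L^1(\R)$ and $\epsilon>0$, choose a step function $s$ with $\|\rho-s\|_1 < \epsilon/3$, which is possible because step functions are dense in $L^1(\R)$. Using $\|g_n\|_\infty \le 1$ and $\|g\|_\infty \le 1$, I would estimate
\begin{equation}
\left| \int g_n\,\rho\,dx - \int g\,\rho\,dx \right|
\le \|\rho - s\|_1 + \left| \int (g_n - g)\, s\, dx \right| + \|s - \rho\|_1
< \tfrac{2\epsilon}{3} + \left| \int (g_n - g)\, s\, dx \right|,
\end{equation}
and the remaining term tends to $0$ as $n\to\infty$ by the extended hypothesis applied to the finitely many intervals making up $s$. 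Hence the left-hand side is below $\epsilon$ for all sufficiently large $n$, and since $\epsilon$ was arbitrary the theorem follows.

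There is no real obstacle here; the only point that requires any care is ensuring that the step-function approximation is performed in the $L^1$ norm (so that the uniform $L^\infty$ bound on $g_n$ and $g$ can absorb the error), rather than pointwise or in some weaker sense. The absolute continuity of $\nu$ is used exactly once, to produce the density $\rho \in L^1$; the finite total variation hypothesis ensures $\rho$ is genuinely integrable and the approximation argument makes sense.
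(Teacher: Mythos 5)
Your proof is correct, and it takes a more streamlined route than the paper's. The paper first reduces to positive measures via the Hahn decomposition, then extends the hypothesis from intervals to arbitrary bounded Borel sets by approximating a Borel set by finite unions of intervals, then approximates the density $d\nu/d\lambda$ by a monotone increasing sequence of simple functions of bounded support, and finally justifies the exchange of limits using the Beppo Levi theorem together with an estimate (coming from $\|g_n\|_\infty \le 1$) showing that the $m\to\infty$ convergence is uniform in $n$. You skip all of that machinery: you observe that linear combinations of interval indicators already satisfy the hypothesis by linearity, that such step functions are dense in $L^1(\R)$, and that the uniform $L^\infty$ bound on $g_n$ and $g$ absorbs the $L^1$ approximation error in a standard three-$\epsilon$ estimate. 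Because the $L^1$ density argument works equally well for signed $\rho$, you also avoid the Hahn decomposition entirely. Both arguments rest on the same two ingredients --- $L^1$ approximation of the Radon--Nikodym derivative, and the uniform $L^\infty$ bound --- but your organization is shorter and, I think, easier to read. The one thing worth flagging is that you should make explicit that the step functions you use are compactly supported (i.e.\ genuinely finite linear combinations of bounded intervals), so that $\int (g_n - g)\,s\,dx$ is a finite sum of terms each covered by the hypothesis; this is of course consistent with $L^1$ density, but a careless reader might wonder.
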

\begin{proof}
Without loss of generality we can consider only positive measures (which are all that is necessary for the present work anyway), thanks to Hahn's decomposition theorem~\cite{Billingsley_probability}.  As a preliminary, we see that for any Borel set $A$ contained in a bounded interval $I$
\begin{equation}\label{weak_conv_intervals}
\lim_{n \to \infty} \int_A g_n(x) dx = \int_A g(x) dx
\end{equation}
since for any $\epsilon$ there is a set $E_\epsilon$ which is a finite union of intervals which approximates $A$ in the sense that $\lambda(E_\epsilon \Delta A) \le \epsilon$.  Then also
\begin{equation}
\left|\int_{E_\epsilon} g_n d\lambda - \int_A g_n d\lambda \right| \le \epsilon
\end{equation}
uniformly in $n$, so we can take the limit $\epsilon \to 0$ and exchange the order of the limits to obtain Equation~\eqref{weak_conv_intervals}.

There is a nondecreasing sequence $\sigma_m$ of simple functions so that $\sigma_m \to \frac{d \nu}{d \lambda}$ pointwise, and each $\sigma_m$ has bounded support.  By Equation~\eqref{weak_conv_intervals},
\begin{equation}
\lim_{n \to \infty} \int g_n(x) \sigma_m dx = \int g(x) \sigma_m dx
\end{equation}
for all $m$.  In fact since $\sigma_m$ are a nondecreasing sequence, we can apply the Beppo Levi theorem~\cite{Billingsley_probability} to obtain
\begin{equation}
\lim_{m \to \infty} \int g^\pm_n \sigma_m d\lambda = \int g^\pm \frac{d \nu}{d \lambda} d \lambda = \int g^\pm_n d \nu
\end{equation}
and thus
\begin{equation} \label{weak_conv_1}
\lim_{m \to \infty} \int g_n \sigma_m d\lambda = \int g_n d \nu.
\end{equation}
Furthermore,
\begin{equation}
\begin{split}
\left| \int g_n \sigma_m d\lambda - \int g_n d\nu \right| &=
\left| \int g_n \left(\sigma_m - \frac{d \nu}{d \lambda}\right) d\lambda \right|
\\ &\le\int |g_n| \left(\frac{d \nu}{d \lambda} - \sigma_m \right) d\lambda \le
\int \left(\frac{d \nu}{d \lambda} - \sigma_m \right) d\lambda
\end{split}
\end{equation}
and since this last bound is independent of $n$, the convergence in Equation~\eqref{weak_conv_1} is uniform in $n$.  Taking the limit $n \to \infty$ and exchanging the limits on the left hand side gives Equation~\eqref{weak_conv_conclusion}.
\end{proof}

The set of absolutely continuous finite signed measures is isomorphic to $\Lp{1}{\R}$, the predual of $\Lp{\infty}{\R}$, so the substance of Equation~\eqref{weak_conv_conclusion} is also expressed by saying that $g_n \to g$ in the weak-* topology of $\Lp{\infty}{\R}$.  This is a convenient way of phrasing the following:\footnote{A comparable statement appears in the proof of Rademacher's theorem in~\cite{Heinonen}; thus the proof here is more logically circuitous than necessary, but we hope it is the most intelligible way to convey things to our readers.}
\begin{corollary}\label{weak_conv_corr}
Let $f_n$ be a sequence of functions $\R \to \R$ such that $f_n \to f$ pointwise, and $\lip{f_n} \le 1$.  Then their distributional derivatives converge to the distributional derivative of $f$ ($f_n' \to f'$) in the weak-* topology of $\Lp{\infty}{\R}$.
\end{corollary}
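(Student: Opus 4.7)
The plan is to derive this corollary directly from Theorem~\ref{weak_conv_derivatives_thm}, using the fact that a Lipschitz function is absolutely continuous and therefore recoverable from its distributional derivative by integration.

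First I would record three standard facts about Lipschitz functions with $\lip{\cdot} \le 1$. Each such function is absolutely continuous, so by the Lebesgue differentiation / fundamental theorem of calculus for absolutely continuous functions, it has a distributional derivative which is an $\Lp{\infty}{\R}$ function of norm at most $1$, and
\begin{equation}
f_n(b) - f_n(a) = \int_a^b f_n'(x)\,dx
\end{equation}
for every $a<b$. The pointwise limit $f$ inherits $\lip{f}\le 1$ from the uniform Lipschitz bound, so the same representation holds for $f$ and $f'$.

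Next, I would use the hypothesized pointwise convergence $f_n \to f$ to obtain, for every $a<b$,
\begin{equation}
\lim_{n\to\infty}\int_a^b f_n'(x)\,dx = \lim_{n\to\infty}\bigl(f_n(b)-f_n(a)\bigr) = f(b)-f(a) = \int_a^b f'(x)\,dx.
\end{equation}
This places the sequence $g_n := f_n'$ and limit $g := f'$ exactly in the setting of Theorem~\ref{weak_conv_derivatives_thm}: each $g_n$ and $g$ is bounded by $1$ in absolute value, and the integrals over every bounded interval converge.

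Then Theorem~\ref{weak_conv_derivatives_thm} applies and yields $\int f_n'\,d\nu \to \int f'\,d\nu$ for every finite signed measure $\nu$ absolutely continuous with respect to Lebesgue measure. By the Radon--Nikodym theorem, such measures are parametrized by their densities in $\Lp{1}{\R}$, and the above convergence is by definition the weak-$*$ convergence of $f_n'$ to $f'$ in $\Lp{\infty}{\R}$ regarded as the dual of $\Lp{1}{\R}$. I expect no real obstacle here beyond carefully invoking the duality; the only substantive content, conversion of pointwise convergence into convergence of integrals over intervals, is handed to us for free by the fundamental theorem of calculus applied to absolutely continuous functions.
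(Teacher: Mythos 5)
Your proposal is correct and follows essentially the same route as the paper's own proof: establish that each $f_n$ (and the limit $f$, which inherits $\lip{f}\le 1$) is absolutely continuous so that the distributional derivative exists with $\|f_n'\|_\infty \le 1$ and the interval representation $\int_a^b f_n' = f_n(b)-f_n(a)$ holds, use pointwise convergence to conclude $\int_a^b f_n' \to \int_a^b f'$, and then invoke Theorem~\ref{weak_conv_derivatives_thm}. The only cosmetic difference is that you appeal to absolute continuity and the fundamental theorem of calculus where the paper cites Rademacher's theorem; for functions $\R\to\R$ these amount to the same thing.
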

\begin{proof}
Thanks to Rademacher's theorem~\cite{Heinonen}, Lipschitz continuity guarantees that the distributional derivatives, i.e.\ functions satisfying
\begin{equation}
\int_a^b f_n' dx = f_n(b)-f_n(a)
\end{equation}
for any $a,b \in \R$, exist with $\|f_n'\| = \lip{f_n} \le 1$, and the convergence $f_n \to f$ then implies
\begin{equation}
\lim_{n \to \infty} \int_a^b f_n' dx = \int_a^b f' dx.
\end{equation}
This allows us to apply Theorem~\ref{weak_conv_derivatives_thm} and the result follows immediately.
\end{proof}

Applying Corollary~\ref{weak_conv_corr} twice to $\gf{1,M_j}{\delta_i}(\eta) - \Av \gf{1,M_j}{\delta_i}$ and $G_1$, we obtain
\begin{equation}
\Av G_1' = \lim_{i \to \infty} \lim_{j \to \infty} \Av \left( \gf{1,M_j}{\delta_i}\right)'(\eta)
\end{equation}
and by Equation~\eqref{gf_derivs} and Theorems~\ref{LongLongRange} and~\ref{mag_ergodicity} we obtain
\begin{equation}
\Av G_1' =  \frac{1}{2} \left( \frac{\partial \F}{\partial h +} - \frac{\partial \F }{\partial h -} \right)
\end{equation}
and the proof of Proposition~\ref{convergence_prop} is complete.

\section{Proof of Proposition~\ref{main_prop}} \label{wrapup}

We now turn to the boundary estimate~\eqref{simpleUpperBound}.  Let $\Lambda_L$ be the smallest subset of $\Zd$ so that $T_x A_0 \subset \Lambda_L$ (i.e.\ $\lop_x \in \alg_{\Lambda_L}$) for all $x \in \Gamma_L$, and for $\Gamma_M \supset \Lambda_L$ let
\begin{equation}
F^h_{M | L} (\zeta) := - \Av \frac{1}{\beta} \log \Tr \exp \left( - \beta H_{\Lambda_L,0}^{h,\zeta,\ul \upsilon} -\beta H_{\Gamma_M \setminus \Gamma_L *}^{h,\zeta, \ul \upsilon}\right),
\end{equation}
where the subscript $0$ refers to free boundary conditions, and the subscript $*$ refers to periodic boundary conditions on the edge of $\Gamma_M$ and free boundary conditions on the edge of $\Lambda_L$; this lets us write
\begin{equation}
H_{\Gamma_M}^{h,\zeta, \ul \omega} = H_{\Lambda_L,0}^{h,\zeta,\ul \omega} + P_{\Lambda_L} \left( V_{\Lambda_L}^{\zeta,\ul \omega} \right) + H_{\Gamma_M \setminus \Gamma_L *}^{h,\zeta, \ul \omega},
\end{equation}
whence, by Lemma~\ref{Ruelle_lemma}
\begin{equation}
\left| F^h_{M | L} (\zeta) - F^h_{M} (\zeta) \right| \le \Av \left\| V_{\Lambda_L}^{\zeta,\ul \upsilon} \right\|;
\end{equation}
then
\begin{equation}
\begin{split}
& \left| F^{h+\delta}_M (\zeta) - F^{h+\delta}_M(r_L(\zeta)) - F^{h-\delta}_M (\zeta) + F^{h-\delta}_M(r_L(\zeta)) \right| \\ \ \ &
\le \left| F^{h+\delta}_{M | L} (\zeta) - F^{h+\delta}_{M | L}(r_L(\zeta)) - F^{h-\delta}_{M | L} (\zeta) + F^{h-\delta}_{M | L}(r_L(\zeta)) \right| + 4 \Av \left\| V_{\Lambda_L}^{\zeta,\ul \upsilon} \right\|.
\end{split}\label{Boundary_estimate_1}
\end{equation}

Since $H_{\Lambda_L0}^{h,\zeta,\ul{\omega}} $ and $H_{\Gamma_M \setminus \Gamma_L *}^{h,\zeta,\ul{\omega}} $ act on disjoint subsets of the lattice, they commute, and
\begin{equation}
F^h_{M | L} (\zeta) = F^h_{\Lambda_L,0} (\zeta) + F^h_{M \setminus L*} (\zeta),
\end{equation}
where
\begin{equation}
F^h_{M \setminus L} (\zeta) := - \Av \frac{1}{\beta} \log \Tr \exp \left( - \beta H_{\Gamma_M \setminus \Lambda_L *}^{h,\zeta, \ul \upsilon}\right).
\end{equation}
When we use this to expand the right hand side of~\eqref{Boundary_estimate_1}, the $\Gamma_M \setminus \Lambda_L$ terms cancel:
\begin{equation}
\begin{split}
F^{h+\delta}_{M | L} (\eta) - F^{h+\delta}_{M | L}(r_L(\eta)) - F^{h-\delta}_{M | L} (\eta) + F^{h-\delta}_{M | L}(r_L(\eta))
\\= F^{h+\delta}_{\Lambda_L,0}(\eta) - F^{h-\delta}_{\Lambda_L,0}(\eta) - F^{h+\delta}_{\Lambda_L,0}(0) + F^{h-\delta}_{\Lambda_L,0}(0)
\end{split}
\end{equation}
and we can apply Lemma~\ref{Ruelle_lemma} again to bound this quantity, obtaining
\begin{equation}
\left| F^{h+\delta}_{M | L} (\eta) - F^{h+\delta}_{M | L}(r_L(\eta)) - F^{h-\delta}_{M | L} (\eta) + F^{h-\delta}_{M | L}(r_L(\eta)) \right| \le 2 \delta |\Lambda_L| = O(\delta L^d),
\end{equation}
where the last term is the effect of the constant field inside $\Lambda_L$.

Plugging this back into Inequality~\ref{Boundary_estimate_1} gives
\begin{equation}
\left| F^{h+\delta}_M (\eta) - F^{h+\delta}_M(r_L(\eta)) - F^{h-\delta}_M (\eta) + F^{h-\delta}_M(r_L(\eta)) \right|
\le 4 \Av \left\| V_{\Lambda_L}^{\zeta,\ul \upsilon} \right\| + O( \delta L^d),
\end{equation}
which gives
\begin{equation}
|\gf{L,M}{\delta}(\zeta)| \le 2\Av \left\| V_{\Lambda_L}^{\zeta,\ul \upsilon} \right\| + O(\delta L^d)
\end{equation}
and since the $\delta$ term is uniform in $M$,
\begin{equation}\label{G_L_upper_bound}
|G_L(\zeta)| \le 2 \Av \left\| V_{\Lambda_L}^{\zeta,\ul \upsilon} \right\|,
\end{equation}
which we have assumed (Assumption~\ref{short_range_assumption}) to be $O(L^{d-1})+O(L^{d/2})$.

Now we need to show that this is in contradiction with Proposition~\ref{CLT_prop} unless $b=0$.  To demonstrate this absolutely clearly, we will convert these to statements about the moment generating functions of $G_L$, $\Av e^{t G_L}$.  The distribution of a random variable is uniquely characterized by its moment generating function provided this is finite on a sufficient region~\cite{Billingsley_probability}, which in this context is guaranteed by~\eqref{G_L_upper_bound}; and then convergence in distribution is equivalent to pointwise convergence of moment generating functions.  Then the conclusion of Proposition~\ref{CLT_prop} can be restated
\begin{equation}\label{mgf_CLT}
\lim_{L \to \infty} \Av \exp \left( t G_L / L^{d/2} \right) = \exp( t^2 b^2 / 2).
\end{equation}
At the same time, if $|G_L| \le A L^{d/2}$ then
\begin{equation}\label{mgf_bound}
\Av e^{t G_L/L^{d/2}} \le e^{tA}
\end{equation}
for all positive $t$; clearly if $b \ne 0$, this will be incompatible with~\eqref{mgf_CLT} for sufficiently large $t$.  Finally we note that Theorem~\ref{AW_theta} states that $b=0$ implies $M=0$ under any of the cases listed in Proposition~\ref{main_prop}, and the proof is complete.

\section{Systems with continuous symmetry: Proof of Proposition~\ref{continuous_prop}} \label{continuous_wrapup}

As noted above, the main requirement of the proof of Proposition~\ref{continuous_prop} is based on the improved bound
\begin{equation}
|G_L(\zeta)| \le K L^{d-2}
\end{equation}
which should hold at $\vec{h} = 0$  We first note that Proposition~\ref{convergence_prop} holds for the vector case with the following definitions, corresponding to Equations~\eqref{GdeltaDef} and~\eqref{GLDef}:
\begin{gather}
\gf{L,M}{\delta\hat{e}}(\zeta) = \frac{1}{2} \condAv{
F^{\delta\hat{e}}_M(\vec{\eta}) - F^{\delta\hat{e}}_M(r_L(\vec{\eta}))
- F^{-\delta\hat{e}}_M(\vec{\eta})
+ F^{-\delta\hat{e}}_M(r_L(\vec{\eta}))
}{\hat{e} \cdot \vec{\eta}_L = \zeta_L}
\\
G^{\hat{e}}_L(\zeta) = \lim_{i \to \infty} \lim_{j \to \infty} \left( \gf{L,M_j}{\delta \hat{e}}(\zeta)- \Av \left[  \gf{L,M_j}{\delta \hat{e}}(\hat{e} \cdot \vec{\eta}) \right] \right)
\end{gather}
where $\hat{e}$, an arbitrary unit vector, defines the component of the order parameter being examined.

We can obtain the desired bound by focusing on
\begin{equation} \label{g_mini_def}
g_{L,M}^{\delta \hat{e}} (\zeta) := \condAv{F_M^{\delta \hat{e}}(\vec{\eta}) - F_M^{-\delta \hat{e}}(\vec\eta) }{\vecEtaCond{L}};
\end{equation}
since 
\begin{equation}\label{G_and_g}
G^{\hat{e}}_L(\zeta) = \frac{1}{2} \lim_{i \to \infty} \lim_{j \to \infty} \left( g_{L,M}^{\delta \hat{e}} (\zeta) - \Av g_{L,M}^{\delta \hat{e}} (\hat{e} \cdot \vec\eta) \right),
\end{equation}
it is easy to turn uniform bounds on $|g_{L,M}^{\delta \hat{e}} (\vec\zeta)|$ into similar bounds on $|G_L|$.

\begin{lemma}\label{little_g_bound_lemma}
With $g$ defined above, Assumption~\ref{continuous_short_range_assum} implies
\begin{equation}
|g_{L,M}^{\delta \hat{e}} (\vec\zeta)| = O(L^{d-2})
\end{equation}
\end{lemma}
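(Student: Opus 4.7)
The strategy is the quantum analogue of the Pfister spin-wave argument: exploit the continuous $SO(N)$-symmetry of $\inter_0$ and the isotropy of $\vec{\eta}$ to absorb the difference between $F^{\delta\hat{e}}$ and $F^{-\delta\hat{e}}$ into a gradient-energy cost, which Assumption~\ref{continuous_short_range_assum} forces to be $O(L^{d-2})$.

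First, I would symmetrize using a constant rotation. The invariance of $\inter_0$ together with the transformation rule~\eqref{vector_operator_transformation} gives, by unitary conjugation with $U_R=\prod_x R_x$, the identity $F^{\vec{h}}_M(\vec\eta) = F^{R\vec{h}}_M(R\vec\eta)$ for any constant $R\in SO(N)$. Picking a unit vector $\hat{f}\perp\hat{e}$ and the rotation $R = R^{\hat{f}}_\pi$, which sends $\hat{e}\mapsto-\hat{e}$, this yields $F^{\delta\hat{e}}_M(\vec\eta) = F^{-\delta\hat{e}}_M(R\vec\eta)$, so that
\begin{equation*}
g^{\delta\hat{e}}_{L,M}(\zeta) = \condAv{F^{-\delta\hat{e}}_M(R\vec\eta) - F^{-\delta\hat{e}}_M(\vec\eta)}{\hat{e}\cdot\vec\eta_L = \zeta_L}.
\end{equation*}

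Next I would apply a spatially varying rotation. Choose $\theta:\Zd\to[0,\pi]$ with $\theta\equiv\pi$ on $\Gamma_L$, $\theta\equiv 0$ outside $\Gamma_{2L}$, and $|\theta(x)-\theta(y)|\le C|x-y|/L$. Set $R(x):=R^{\hat{f}}_{\theta(x)}$ and $U:=\prod_x R(x)_x$. Trace cyclicity gives $F^{-\delta\hat{e}}_M(R\vec\eta)=F[UH^{-\delta\hat{e}}_M(R\vec\eta)U^{-1}]$, and a direct computation using \eqref{vector_operator_transformation} shows
\begin{equation*}
U H^{-\delta\hat{e}}_M(R\vec\eta) U^{-1} = UH_0U^{-1} + \sum_x \bigl[R(x)(-\delta\hat{e}) + R(x)R\,\vec\eta_x\bigr]\cdot\vec\lop_x.
\end{equation*}
Inside $\Gamma_L$, $R(x)=R$ and $R^2=I$, so the field collapses to $\delta\hat{e}+\vec\eta_x$, matching $H^{\delta\hat{e}}(\vec\eta)$. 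Outside $\Gamma_{2L}$, $R(x)=I$, leaving the field $-\delta\hat{e}+R\vec\eta_x$; since for $x\notin\Gamma_L$ the variable $\vec\eta_x$ is unconditioned and isotropically distributed, I can relabel $\vec\eta_x\mapsto R^{-1}\vec\eta_x$ outside $\Gamma_L$ without changing either the conditional law of $\vec\eta_{\Gamma_L}$ or that of $\vec\eta_{\Gamma_L^c}$, and the field there becomes $-\delta\hat{e}+\vec\eta_x$, matching $H^{-\delta\hat{e}}(\vec\eta)$. All discrepancies are therefore confined to the annulus $\Gamma_{2L}\setminus\Gamma_L$ and to the operator $UH_0U^{-1}-H_0$.

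The crucial step, and the one I expect to be the main obstacle, is the gradient-energy bound
\begin{equation*}
\|UH_0U^{-1} - H_0\| \le \sum_{X\cap(\Gamma_{2L}\setminus\Gamma_L)\ne\emptyset}\bigl\|U\inter_0(X)U^{-1}-\inter_0(X)\bigr\| = O(L^{d-2}).
\end{equation*}
The per-term estimate is the heart of the Pfister--Mermin--Wagner argument adapted to operators: because $\prod_{x\in X}R^{c}_x\,\inter_0(X)\prod_{x\in X}(R^{c}_x)^{-1}=\inter_0(X)$ for constant $R^c$, one may expand $R(x)$ around $R(\bar x)$ for $\bar x$ the centroid of $X$, and the first-order term in $\theta(x)-\theta(\bar x)$ is killed after averaging $\bar x$ over $X$ against the invariance relation. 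What survives is quadratic in $\nabla\theta$, giving $\|U\inter_0(X)U^{-1}-\inter_0(X)\|\le C\|\inter_0(X)\|\,|X|(\diam X)^2/L^2$. Summing by translation invariance over the $O(L^d)$ sites of the annulus and exploiting precisely the finiteness of $\sum_{X\ni 0}(\diam X)^2|X|\|\inter_0(X)\|$ from Assumption~\ref{continuous_short_range_assum} gives $O(L^{d-2})$. The annular contribution of the field terms is bounded by the same mechanism using $|R(x)-I|=O(\theta(x))$ combined with a comparable cancellation argument after conditioning. Combining these via Lemma~\ref{Ruelle_lemma} (applied to the pair $UH^{-\delta\hat{e}}(R\vec\eta)U^{-1}$ and $H^{-\delta\hat{e}}(\vec\eta)$, after the isotropic relabeling) yields $|g^{\delta\hat{e}}_{L,M}(\vec\zeta)|=O(L^{d-2})$.
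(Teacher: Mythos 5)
Your overall decomposition (constant rotation reversing $\hat e$, then a slowly varying twist $\theta$, with the random-field rotation outside $\Gamma_L$ absorbed by isotropy) matches the structure of the paper's argument. But the crucial step — the claim that the first-order terms in $\theta$ vanish ``after averaging $\bar x$ over $X$ against the invariance relation'' — is not correct, and this gap is exactly where the paper has to work. Expanding to first order, the contribution from a term $\inter_0(X)$ is $\sum_{y\in X}(\theta_{\bar x}-\theta_y)[\rho_y,\inter_0(X)]$. Because rotation invariance gives $\sum_{y\in X}[\rho_y,\inter_0(X)]=0$, the $\theta_{\bar x}$ piece drops out \emph{for every choice of $\bar x$}, so averaging $\bar x$ over $X$ changes nothing; what remains is $-\sum_{y\in X}\theta_y[\rho_y,\inter_0(X)]$, which is a genuine first-order operator of norm $O\bigl((\diam X/L)\,|X|\,\|\inter_0(X)\|\bigr)$, not second order. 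Summing this over $X$ and over $O(L^d)$ sites in the annulus gives $O(L^{d-1})$ — the Ising-type surface bound of Assumption~\ref{short_range_assumption}, not the $O(L^{d-2})$ spin-wave bound you need. Consequently, applying Lemma~\ref{Ruelle_lemma} directly to the single pair of Hamiltonians cannot do better than $O(L^{d-1})$.

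The mechanism the paper uses to kill the first-order term is a symmetrization in the sign of the twist: $F_M^{-\delta\hat e}(\vec\zeta)$ is split in half and the two halves are rewritten using opposite rotations $\theta$ and $-\theta$. Since the first-order term is odd in $\theta$, it cancels in $\Delta H_\theta + \Delta H_{-\theta}$, which is then genuinely $O(L^{d-2})$ under Assumption~\ref{continuous_short_range_assum}. To control the free energies of the two twisted halves against the original one cannot use the symmetric Ruelle bound alone; the paper combines Cauchy--Schwarz, the Golden--Thompson inequality, and Lemma~\ref{Ruelle_lemma} to obtain the one-sided estimate
\begin{equation*}
\log\Tr e^A - \tfrac12\log\Tr e^{B} - \tfrac12\log\Tr e^{C} \le \bigl\| A - \tfrac12(B+C)\bigr\|,
\end{equation*}
giving an upper bound on $g_{L,M}^{\delta\hat e}$, and the matching lower bound comes from the antisymmetry $g_{L,M}^{\delta\hat e}=-g_{L,M}^{-\delta\hat e}$. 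Without this pairing of opposite twists, your argument stalls at $O(L^{d-1})$.
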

\begin{proof}
Let $\rho$ be the generator (in $so(N)$) of a rotation in a plane containing $\hat{e}$, and for each $x \in \Zd$ let $\rho_x$ be the generator of the corresponding rotation in the single-site algebra $\alg_x$.\footnote{Unlike in~\cite{QIMLetter}, we will use the ``mathematician's'' convention that rotations are given by $e^{\theta \rho}$, so $\rho$ is an antisymmetric matrix and $\rho_x$ is an antihermitian operator.}  We introduce the slowly varying angles
\begin{equation}\label{angles_theta_def}
\theta_x := \piecewise{
0, & x \in \Gamma_L \\
\frac{\|x\|_1-L}{L}\pi, & 0 < d_L(x) < L  \\
\pi, & \|x \|_1 d_L(x) \ge L},
\end{equation}
where $d_L(x)$ is the distance from $x$ to $\Gamma_L$ in the largest-component metric, i.e.\ \begin{equation}
 d_L(x):= \min_{y \in \Gamma_L} \pnorm{x-y}{\infty}.
\end{equation}
We also introduce the associated rotations on fields and on $\alg$ defined by
\begin{gather}
R_x := e^{\theta_x \rho} \\
\left(R_\theta(\vec \zeta)\right)_x \equiv R_x \vec \zeta_x \\
\hat{R}_\theta = \bigotimes_{x \in \Zd} e^{\theta_x \rho_x}.
\end{gather}

$\hat{R}_\theta$ is unitary, and so we can rewrite the free energy $F_M^{-\delta \hat{e}}(\vec\eta) $ appearing in~\eqref{g_mini_def} as
\begin{equation}
F_M^{-\delta \hat{e}}(\vec\zeta) = -\frac{1}{\beta} \log \Tr \exp\left( -\beta \hat{R}_\theta^{-1} H_{\Gamma}^{-\delta \hat{e},\vec{\zeta},\ul{\vec{\omega}}} \hat{R}_\theta  \right);
\end{equation}
we wish to use this to obtain something of the form
\begin{equation}
\begin{split}
&\condAv{F_M^{-\delta \hat{e}}(\vec\eta)}{\vecEtaCond{L}}
\\& \; \; \;= -\frac{1}{\beta} \condAv{\log \Tr \exp\left( -\beta [ H_{\Gamma}^{\delta \hat{e},\vec{\zeta},\ul{\vec{\omega}}} + \Delta H_\theta ] \right)}{\vecEtaCond{L}},
\end{split}
\end{equation}
which by Lemma~\ref{Ruelle_lemma} implies
\begin{equation}
|g_{L,M}^{\delta \hat{e}} (\vec\zeta)| \le \|\Delta H_\theta\|;
\end{equation}
however this will not quite be sufficient, since we are not able to establish suitable control over $\|\Delta H_\theta\|$.  Instead, we will split $F_M^{-\delta \hat{e}}(\vec\zeta)$ in half and rewrite each part separately by applying an opposite rotation, to obtain
\begin{equation}\label{F_bounds_for_g}
\begin{split}
\condAv{F_M^{-\delta \hat{e}}(\vec\eta)}{\vecEtaCond{L}} =  -\frac{1}{2 \beta} \Av & \left[ \log \Tr \exp\left( -\beta [ H_{\Gamma}^{\delta \hat{e},\vec{\zeta},\ul{\vec{\omega}}} + \Delta H_\theta ] \right) \right.
\\ & \left. + \log \Tr \exp\left( -\beta [ H_{\Gamma}^{\delta \hat{e},\vec{\zeta},\ul{\vec{\omega}}} + \Delta H_{-\theta} ] \right) \middle| \vecEtaCond{L}\right].
\end{split}
\end{equation}

Combining the Cauchy-Schwarz inequality, the Golden-Thompson inequality, and Lemma~\ref{Ruelle_lemma}, we quickly derive the general inequality
\begin{equation}
\begin{split}
\log \Tr e^A & - \log \Tr e^{B/2} - \log \Tr e^{C/2}
= \log \left( \frac{ \Tr e^A}{\Tr e^{B/2} \Tr e^{C/2}} \right)
\\ & \le \log \left( \frac{ \Tr e^A}{\Tr e^{B/2} e^{C/2}} \right)
\le \log \Tr e^{A-(B+C)/2} \le \left\| A- \frac{B+C}{2} \right\|
\end{split}
\end{equation}
for arbitrary Hermitian matrices $A,B,C$.  Applying this to Inequality~\eqref{F_bounds_for_g} gives
\begin{equation}\label{g_upper_bound}
g_{L,M}^{\delta \hat{e}} (\vec\zeta) \le \frac{1}{2} \left\| \Delta H_\theta + \Delta H_{-\theta} \right\|.
\end{equation}

Now recall Equation~\eqref{vecHam}:
\begin{equation}
\hat{R}_\theta^-1 H_{\Gamma}^{h,\vec{\zeta},\ul{\vec{\omega}}} \hat{R}_\theta = \hat{R}_\theta^{-1} \left( \sum_X P_\Gamma(\inter_0(X)) + \sum_{x \in \Gamma} (-\delta \hat{e}+\vec\zeta_x) \cdot P_\Gamma(\vec{\lop}_x) + 
\right) \hat{R}_\theta
\end{equation}
Since $\vec{\lop}$
are vector operators (recall Equation~\eqref{vector_operator_transformation}),
\begin{equation}
\vec \zeta_x \cdot \left( \hat{R}_\theta^{-1} \vec \lop_x \hat{R}_\theta \right) = \vec \zeta_x \cdot R_\theta(\vec \lop)_x = \left[R^{-1}_\theta(\vec \zeta)_x\right] \cdot \vec \lop_x;
\end{equation}
now inside $\Gamma_L$ there is no rotation, and outside we are performing an average with respect to an isotropic distribution, so this term makes no contribution to $\Delta H$.

As for the fixed field terms, we have
\begin{equation}
\hat{e} \cdot \left( \hat{R}_\theta^{-1} \vec \lop_x \hat{R}_\theta \right) = \left( R_x^{-1} \hat e \right) \cdot \vec\lop_x.
\end{equation}
The choices of $\rho$ and $\theta$ were intended precisely to make $R_x \hat e = -\hat e$ for $d_L(x) > L$; and for the remaining $(3L)^d$ sites we have $\left\| \hat{e} \cdot \left( \hat{R}_\theta^{-1} \vec \lop_x \hat{R}_\theta \right) + \hat{e} \cdot \vec \lop_x \right\| \le 2$, so these terms make a contribution to $\Delta H$ which is uniformly bounded in norm by $2(3L)^d \delta$.

We are left with the terms arising from the transformation of the nonrandom interaction. 
For any $X$ and any (arbitrarily chosen) $x \in X \cap \Gamma$,
\begin{equation}
\begin{split}
&\hat R_{-\theta} P_\Gamma(\inter_0(X)) \hat R_\theta
\\ & \; \;= \left( \bigotimes_{y \in X \cap \Gamma} e^{-(\theta_y - \theta_x)\rho_y} e^{-\theta_x \rho_y} \right) P_\Gamma(\inter_0(X)) \left( \bigotimes_{z \in X \cap \Gamma} e^{-\theta_x \rho_z} e^{(\theta_z - \theta_x)\rho_z} \right)
\\ & \;\; = \left( \bigotimes_{y \in X \cap \Gamma} e^{-(\theta_y - \theta_x)\rho_y} \right) P_\Gamma(\inter_0(X)) \left( \bigotimes_{z \in X \cap \Gamma} e^{(\theta_z - \theta_x)\rho_z} \right),
\end{split}
\end{equation}
(using the rotation invariance of $\inter_0$).  Expanding the exponentials, we obtain
\begin{equation}
\begin{split}
\hat R_{-\theta} P_\Gamma(\inter_0(X)) \hat R_\theta = P_\Gamma(\inter_0(X)) &+ \sum_{y \in X \cap \Gamma} (\theta_x - \theta_y)\left(\rho_y  P_\Gamma(\inter_0(X)) - P_\Gamma(\inter_0(X)) \rho_y \right) \\ & + O\left( \frac{(\diam X)^2|X|^2}{L^2}\left\| \inter_0(X) \right\| \right),
\end{split}
\end{equation}
where the estimate of the higher order terms uses
\begin{equation}
|\theta_x - \theta_y| \le \frac{\pi \|x-y\|_\infty}{L} \le \frac{\pi \diam X}{L}
\end{equation}
and the observation that the $n$th order term in the expansion is potentially a sum of $|X|^n$ terms, as well as $\left\| P_\Gamma(\inter_0(X)) \right\| \le \left\| \inter_0(X) \right\| $.  The first order terms are odd in $\theta$, and will cancel in $\Delta H_\theta + \Delta H_{-\theta}$, with the leading term being second order.  What appears there is
\begin{equation}
\begin{split}
\sum_{X \cap \Gamma \ne \emptyset} \left(\hat R_{-\theta} P_\Gamma(\inter_0(X)) \hat R_\theta -  P_\Gamma(\inter_0(X)) \right) = O\left( L^d \sum_{X \ni 0} \frac{1}{|X|} \frac{(\diam X)^2|X|^2}{L^2} \left\| \inter_0(X) \right\| \right) \\ = O(L^{d-2}),
\end{split}
\end{equation}
where the last equality invokes Assumption~\ref{continuous_short_range_assum}.

Then we indeed have Equation~\eqref{F_bounds_for_g}, with
\begin{equation}
\|\Delta H_\theta + \Delta H_{-\theta} \| = O(L^{d-2}) + O(\delta L^d)).
\end{equation}

This provides only an upper bound on $g_{L,M}^{\delta \hat{e}} (\vec\zeta)$, rather than a bound on its absolute value.  However it is obvious from the definition~\eqref{g_mini_def} of $g$ that $g_{L,M}^{\delta \hat{e}} (\vec\zeta) = - g_{L,M}^{-\delta \hat{e}} (\vec\zeta)$, so the needed lower bound follows automatically.
\end{proof}

With Equation~\eqref{G_and_g}, Lemma~\ref{little_g_bound_lemma} means that
\begin{equation}
|G^{\hat{e}}_L(\zeta)| = O(L^{d-2})
\end{equation}
as desired.  In $d \le 4$, this means that for sufficiently large $L$ we have $|G_L| \le A L^{d/2}$, and we use the same moment generating function argument as in the previous section
, we see that $\F (h \hat{e})$ is differentiable at $h=0$ for all $\hat{e}$.

\chapter{Conclusion}

The previous sections have concluded the proof of the rounding effect for quantum lattice systems; that is, that first order phase transitions (and therefore, in light of Corollary~\ref{LongLongRange} and Theorem~\ref{ShortLongRange}, long range order) are impossible in the presence of direct randomness in low dimensions.  This has been done by establishing a unified analysis of free energy fluctuations applicable to both classical and quantum systems.

At the same time, much remains to be said about the character of the ``rounded'' phase transitions, and of the exceptional cases which have appeared in the course of this work. No simple statement is likely to encapsulate the situation in this context; certainly none can be advanced at this time.  Knowledge of this area continues to grow, and some techniques which may be used to shed further light on it are discussed in Appendix~\ref{numerics.chapter}.

\appendix

\chapter{Methods for numerical studies of random field spin systems} \label{numerics.chapter}

\section{The maximum flow representation of the Ising model ground state}

The RFIM at zero temperature has the considerable virtue that for particular finite field configurations the ground state can be computed easily and exactly thanks to a relationship with the maximum network flow problem.

A maximum flow problem is the following.  We are given an undirected graph (that is, a finite collection of vertices (points), some pairs of which are connected by edges), with two special vertices, the source $s$ and the sink $t$; each edge has a capacity, a finite nonnegative number which we can denote by a symmetric matrix $C_{ij}$ whose indices label the vertices.  A flow is an antisymmetric matrix $F_{ij}$ which does not exceed the capacities ($|F_{ij}| \le C_{ij} \forall i,j$) and which is conserved ($\sum_j F_{ij}=0$) except at the source and the sink.  A maximum flow is one which maximizes the total current from the source to the sink, which is given by $\sum_j F_{sj} \equiv \sum_j F_{jt}$.  This problem has been extensively studied by computer scientists, and there are a number of well-studied and efficient algorithms for solving it.  Most standard implementations (for example the Boost Graph Library~\cite{BGL.book}) assume that the capacities are integers, which we shall see is inconvenient for our purposes, but this can be circumvented by rescaling and using very large integers.

The relationship to the Ising model is through the related minimum cut problem: given the same objects as in the maximum flow problem, a cut is a choice of a division of the vertices into two components, one (call it $S$) containing the source and the other ($T$) the sink.  Each cut has a cost, which is the total of the capacities of all edges which connect $S$ to $T$, $\sum_{i \in S} \sum_{j \in T} C_{ij}$, and a minimum cut is a cut which minimizes this cost function.  Given a ferromagnetic Ising model with Hamiltonian
\begin{equation}
  \ham(\underline{\sigma}) = -\frac{1}{2} \sum_{i,j} \left( 1 - \sigma_i \sigma_j \right) - \sum_i \left( h_i \sigma_i - |h_i| \right),
\end{equation}
we make a graph whose vertices are the sites plus a source and sink, with edges of capacity $J_{ij}$ connecting each interacting pair of sites, an edge of capacity $h_i$ connecting each $i$ with positive field to the source and one with capacity $-h_i$ connecting each site with negative field to the sink.  Then a configuration corresponds naturally to a cut with $S$ being the sites with spin $+1$ and the source; and $\ham$ is precisely the cost of this cut, so a minimum cut corresponds to a ground state~\cite{Picard}.

The ``max cut - min flow theorem''~\cite{FF.MinCut,EFS.MinCut} provides a connection between these two problems.  It states that in a maximum flow, the saturated edges (those with $|F_{ij}|=C_{ij}$) divide the graph in such a way as to provide a minimum cut (or several, if they divide the graph into more than two connected components), and that all minimum cuts for a given problem can be obtained in this way.  The basic idea (also used in the popular push-relabel algorithm to solve the problem~\cite{Goldberg85,Goldberg86}) is that if there is a path of unsaturated edges connecting the source to the sink then it is possible to increase the total current by increasing the flow along each edge of that path.

Together with modern algorithms for solving the max flow problem, this allows the zero temperature random field Ising model to be simulated very efficiently~\cite{HartmannNowak,Sourlas99,WuMachta}, avoiding the extremely slow convergence which plagues monte carlo studies of disordered systems at low temperature.  This has been the main method used for numerical studies of the random field Ising model, although in the last few years histogram reweighting methods like the Wang-Landau algorithm~\cite{Fytas08,WuMachta} have made finite temperature simulations practical as well.

\begin{figure}
\begin{center}
\begin{picture}(300,160)
 \put(60,80){\line(1,0){200}}
 \multiput(60,80)(40,0){6}{\circle*{5}}
 \put(100,80){\line(1,-1){60}}
 \put(160,20){\line(1,1){60}}
 \put(160,20){\line(5,3){100}}
 \put(140,80){\line(1,3){20}}
 \put(160,140){\line(1,-3){20}}
 \qbezier(60,80)(60,140)(160,140)
 \qbezier(160,140)(320,140)(260,80)
 \qbezier(160,20)(320,20)(260,80)
 \put(280,40){$J$?}
 \put(280,120){$J$?}
 \put(70,130){$h$}
 \put(140,110){$h$}
 \put(180,110){$h$}
 \put(120,60){$h$}
 \put(190,60){$h$}
 \put(220,60){$h$}
 \put(160,20){\circle*{5}}
 \put(160,140){\circle*{5}}
 \put(160,150){$s$}
 \put(150,15){$t$}
 \multiput(80,85)(40,0){5}{$J$}
 \multiput(20,80)(15,0){3}{\circle*{2}}
\end{picture}
\end{center}
  \caption{Example of the maximum flow graph used to derive Equation~\eqref{ux_recursion} \label{ux.fig}}
\end{figure}
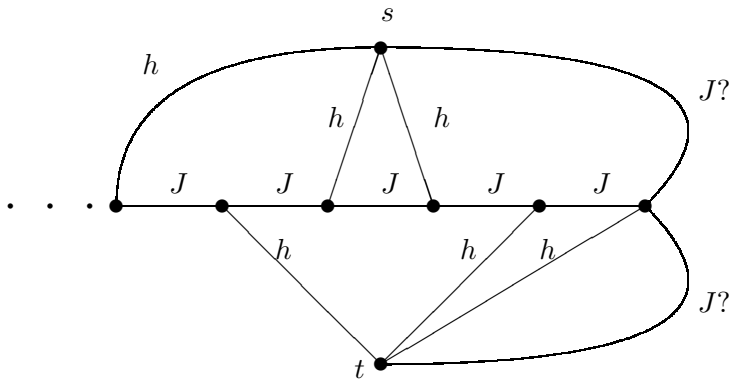

The maximum flow representation also allows the following more general derivation of the exact solution of the one dimensional random field Ising model described in Section~\ref{background.lowDIsing} above.  Construct two different graphs (as shown in Figure~\ref{ux.fig}) by beginning with the construction in the previous section, only for all sites ``before'' $x$; then add an edge with capacity $J$ from site $x-1$ to the sink to make the first graph, and instead connecting to the source to make the second graph.  If the maximum flow for the first graph has nonzero flow through the new edge, take this as $u_x$; otherwise, $u_x$ is minus the flow in the new edge of the second graph.  It is easy to see that this gives a unique set of values which satisfy Equation~\eqref{ux_recursion}, and doing the same with the other half of the system does the same for $v_x$.
%

We can now obtain the value of $\sigma_x$ as follows.  Suppose $h_x>0$; then we will have $\sigma_x=1$ if there is a maximum flow in which the corresponding edge does not saturate ($F_{sx} < h_x$).  We can obtain a maximum flow for the whole system by pasting together the flow graphs representing the different parts as used above to obtain $u_x$ and $v_x$; then graph can accommodate a flow $F_{sx}$ of up to $-u_x-v_x$, but no larger; so $h_x > -u_x-v_x$ implies $\sigma_x=1$, and $h_x < -u_x-v_x$ implies $\sigma_x=-1$.  The same follows by a similar examination of the cases $h_x <0 $ and $h_x=0$.

\section{Monte Carlo methods for XY and clock models}

\subsection{The lookup table algorithm for the Clock model}

As noted in Section~\ref{3DXY.section}, the clock model has been frequently used as a substitute for the XY model for reasons of computational efficiency. The metropolis algorithm has a high rejection rate at low temperatures.  If one tries to avoid this by using the heat bath algorithm, it appears that one has the choice of sampling the distribution of trial moves with either rejection sampling (which in effect reproduces the same problem) or through an inverse transform method (which is slowed down by the need to evaluate a large number of transcendental functions).  Heat bath updating in the clock model can be implemented much more rapidly by compiling a lookup table, since only a finite number transition probabilities need to be calculated for a given set of parameter values.  A new table must be calculated whenever one changes the random field distribution, coupling constant, or temperature, but can be reused for different configurations of the random field.

To be more precise, if we rewrite the Hamiltonian~\eqref{RFXY_ham} as
\begin{equation}
   \beta \ham = -\sum_x \left(J\sum_{|y-x|=1} \vec{\sigma}_y + \vec{h}_x \right) \cdot \vec{\sigma}_x,
\end{equation}
then range of relative energies involved in rotating a single spin are controlled by an effective field $k_x := J\sum_{|y-x|=1} \vec{\sigma}_y + \vec{h}_x$.  If we denote the number of allowed spin values by $q$, and use a random field taking $nq$ values with the same symmetry, then thanks to the symmetry in interchanging the neighboring spins it takes no more than $nq^{2d}$ values, which are related by a $q$-fold symmetry; for each of these a table of $q-1$ elements needs to be recorded to specify the transition probabilities, giving a table with
\begin{equation}
  T=nq^{2d-1}(q-1)
\end{equation}
elements, usually 32-bit integers (there is a redundancy in this description, and in fact $k_x$ takes no more than $nq\times \binom{2d+q-1}{2d} $ values, but taking full advantage of this complicates the algorithm).  For $q=12$, $d=3$, $n=2$ (the most ambitious case I know to have been implemented~\cite{Fisch10}), this gives a table with about $6.6 \times 10^6$ elements.  Once this table has been calculated, each update step involves only a small number of arithmetical or logical calculations and the generation of a single pseudorandom number.  Assuming that the table is stored in random access memory so that the time required to retrieve a specified element is independent of the table size, we can examine the computing time required by dividing the algorithm used to sample the update distribution into the following steps:
\begin{enumerate}
\item Determine which value of $k$ to use; at most $d+1$ steps of constant complexity
\item Generate a pseudorandom integer $r$
\item Successively look up the probability $p$ of the candidate configurations; if $p\ge r$, choose that configuration; otherwise set $r \to r-p$ and move on.  This is at most $q$ steps of constant complexity.
\end{enumerate}
From this, we can confidently expect that the computer time required for each monte carlo step should not grow faster than linearly in $q$ so long as the table fits in available random access memory.

\subsection{A modified Ziggurat algorithm for the XY model}

I will first describe a slightly modified Ziggurat algorithm for a random variable taking values on $[0,\pi]$ with a decreasing probability density function $p(x)$, before moving on to a further modification to accommodate the situation relevant to the XY model.  In a preparation step, one approximates the graph of $p(x)$ with a collection of $N$ boxes (see Figure~\ref{ziggurat.fig}) of height $h_i$ width $w_i$, and left coordinate $x_i$; there is an easy method for choosing these parameters so that the boxes have equal volume.  To include some boundary cases, we take $x_{N+1}=\pi$ and $h_{N+1}=p(\pi)$.
\begin{figure}
  \includegraphics[keepaspectratio=false,width=\textwidth,height=0.6\textwidth]{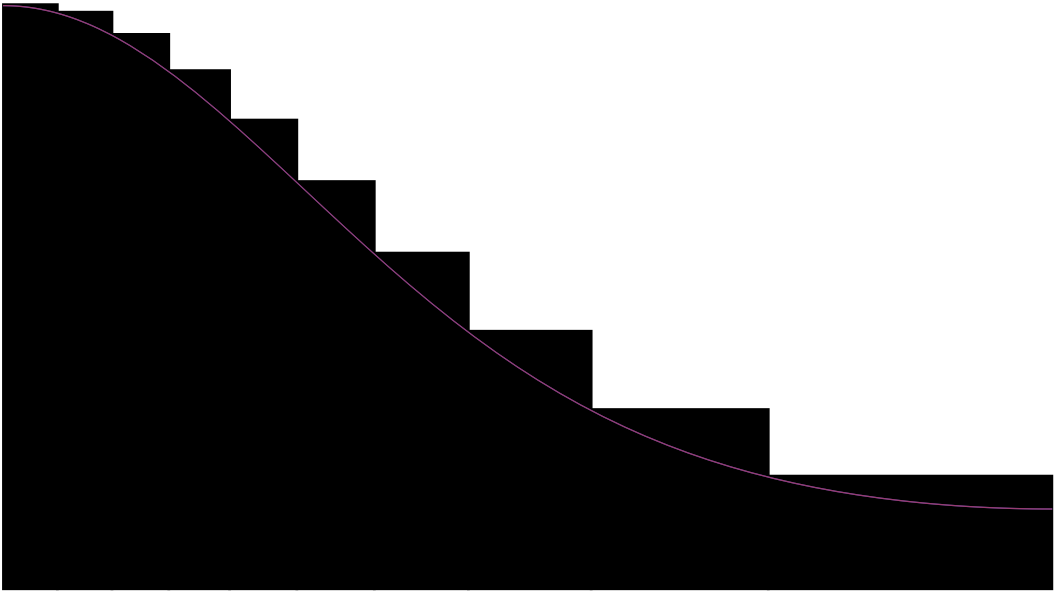}
  \caption{The ziggurat algorithm: a probability density, bounding boxes of equal volume\label{ziggurat.fig}}
\end{figure}

One can then sample the desired distribution in the following way:
\begin{enumerate}
  \item Choose one of the boxes $i$ at random with equal probability
  \item Independently and uniformly choose a random number $x$ from $[x_i,x_{i+1}]$ and $y$ from $[0,h_i]$.  If $y \le h_{i+1}$, return x.
  \item Otherwise, calculate $p(x)$.  If $y \le p(x)$ return $x$, otherwise start over with step 1.
\end{enumerate}
This is a clever way of doing rejection sampling: uniformly select a point in the union of the boxes, accept it if it is under the graph of the desired probability density, otherwise reject it.  When $N$ is reasonably large, to begin with the points generated will be accepted most of the time, and in addition they can usually be accepted without even computing $p(x)$, which can lead to sampling which is even more efficient than an inverse transform method.

In heat bath simulations of the $XY$ model, the key (unnormalized) probability distribution is \begin{equation}\label{pk_def}
  p_k(\theta) = \exp \left( k \cos \theta \right),
\end{equation}
where $k$ is the magnitude of the local field described above, and $\theta$ is the angle between that field and the new spin direction.  This is monotone on $[0,\pi]$ and one can extend it to the full range by adding a step which reflects the spin with probability $0.5$.  The problem with using the Ziggurat method is that the probability distribution depends on a continuous parameter $k$, however it can be generalized in the following manner to accommodate this situation.

\begin{figure}
  \includegraphics[keepaspectratio=false,width=\textwidth,height=0.6\textwidth]{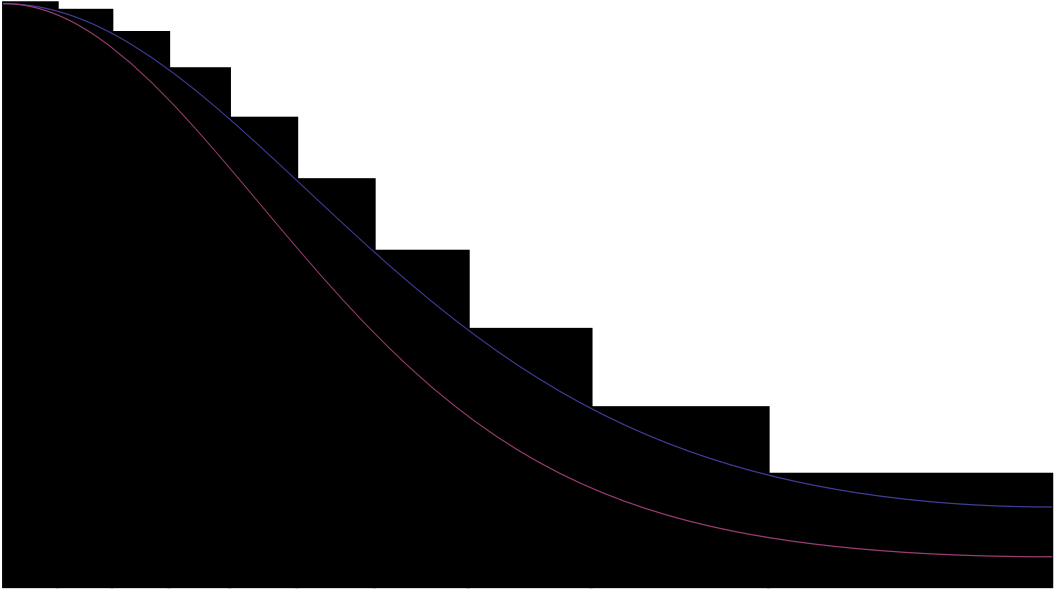}
  \caption{The modified ziggurat algorithm: two probability densities, bounding boxes of equal volume.  The probability densities shown are those of Equation~\eqref{pk_def}, with $k=1$ and $k=1.5$, rescaled for $p_k(0)=1$.\label{ziggurat.fig2}}
\end{figure}
The idea is illustrated in Figure~\ref{ziggurat.fig2}.  For a given range $[k_1,k_2]$, generate a set of boxes as in the original Ziggurat algorithm, but with $\max_{k_1 \le k \le k_2} p_k(x)$.  In addition, for each box we also need to record $t_i=\min_{k_1 \le k \le k_2} p_k(x_{i+1})$.  Then we can sample $p_k$ as follows:
\begin{enumerate}
  \item Choose one of the boxes $i$ at random with equal probability
  \item Independently and uniformly choose a random number $x$ from $[x_i,x_{i+1}]$ and $y$ from $[0,h_i]$.  If $y \le t_i$, return x.
  \item Otherwise, calculate $p_k(x)$.  If $y \le p_k(x)$ return $x$, otherwise start over with step~1.
\end{enumerate}

The efficiency of this method depends on whether we can partition the relevant range of $k$ so that the variation of $p_k$ is small enough that the rejection rate and the frequency with which $p_k$ is calculated do not increase to much.  If we denote the maximum value of the random field by $H$, then $ k$ runs from zero to $ 2 Jd +H$.  $p_k$ is monotone in $k$ (although whether it is increasing or decreasing depends on $x$ and the normalization used) which makes calculating minima and maxima with respect to $k$ very simple.  For the ranges of $k$ relevant to simulating 3-dimensional systems near the apparent critical temperature (roughly $J=2$), it is feasible on a computer with 2GB of RAM to store a table which requires calculating $p_k$ less than one time in 1000.

\chapter{Rounding of First Order Transitions in Low-Dimensional Quantum Systems with Quenched Disorder}\label{QIMLetter}
(With M. Aizenman and J. L. Lebowitz.  Published as~\cite{QIMLetter}.)
\newcommand{\co}{\ensuremath{^c}}
\newcommand{\m}[1]{\ensuremath{m_{#1}(T,h,\epsilon)}}
\newcommand{\Z}{\mathbb Z}
\newcommand{\G}{\widetilde{G}_{\Lambda}}
\newcommand{\Ga}{\widetilde{G}_{\Lambda,\alpha}}
\newcommand{\g}{G_{\Lambda}}
\renewcommand{\gf}{G^\delta_{\Lambda,\Gamma}}
\newcommand{\gfj}{G^\delta_{\Lambda,\Gamma_j}}
\newcommand{\gr}{g_{\Lambda,\Gamma}^{\delta}}
\newcommand{\grj}{g_{\Lambda,\Gamma_j}^{\vec{\delta}}}
\newcommand{\Gf}{\widetilde{G}_{\Lambda,\Gamma,\delta}}
\renewcommand{\lop}{\kappa}
\renewcommand{\Tr}{\textup{Tr }}
\newcommand{\delBy}[2]{\ensuremath{\frac{\partial #1}{\partial #2}}} 
\newcommand{\E}[1]{\textup{Av} \left[#1\right]}
\renewcommand{\state}[1]{\ensuremath{\left\langle #1 \right \rangle}}
\newcommand{\condExp}[2]{\ensuremath{\textup{Av}\left[ #1 \middle| #2 \right] }}
\renewcommand{\co}{\ensuremath{^c}}
\renewcommand{\Zd}{\ensuremath{\mathbb{Z}^d}}
\newcommand{\hstate}[2]{\state{#1}_\Gamma^{\eta,#2}}
\renewcommand{\m}[1]{\ensuremath{m_{#1}(T,h,\epsilon)}}

\section*{Abstract}
We prove that the addition of an arbitrarily small random perturbation to a quantum spin system rounds a first order phase transition in the conjugate order parameter in $d \leq 2$ dimensions, or for cases involving the breaking of a continuous symmetry in $d \leq 4$. This establishes rigorously for quantum systems the existence of the Imry-Ma phenomenon
which for classical systems was proven by Aizenman and Wehr.
\\
\\
\\
A first order phase transition, in Ehrenfest's  terminology, is one associated with a discontinuity in the density of an extensive quantity.   In  thermodynamic terms this corresponds to a discontinuity in the derivative of the free energy with respect to one of the parameters in the Hamiltonian, more specifically the one conjugate  to the order parameter, e.g.\ the magnetic field in a ferromagnetic spin system.
In what is known as the Imry-Ma phenomenon~\cite{YM,AYM}, any such discontinuity is rounded off in low dimensions when the Hamiltonian of a homogeneous system is modified through the  incorporation of an arbitrarily weak random term, corresponding to quenched local disorder,  in the field conjugate to the order parameter.

This phenomenon has been rigorously established for classical systems~\cite{AW.PRL,AW.CMP}, where it occurs  in  dimensions $d\le 2$, and $d\le 4$ when  the discontinuity is associated with the breaking of a continuous symmetry.   In this letter we prove analogous results for quantum systems at both positive and zero temperatures (ground states).

The existence of this effect was first argued for random fields by Imry and Ma  on the basis of a heuristic analysis of free energy fluctuations.  While the sufficiency of Imry and Ma's reasoning was called into question, the predicted phenomenon was established rigorously through a number of works~\cite{FFS,Imbrie.PRL,Imbrie.CMP,BK.PRL,BK.CMP,AW.PRL,AW.CMP}.  The statement was further extended to different disorder types by Hui and Berker~\cite{HuiBerker.PRL,Berker.PRB.90}.


The general existence of the Imry-Ma phenomenon in quantum systems was not addressed by these rigorous analysis, and in particular the Aizenman-Wehr~\cite{AW.PRL,AW.CMP} proof of the rounding effect applies only for classical systems.   However, as  stressed in~\cite{Goswami},  establishing whether  the Imry-Ma phenomenon extends to first order quantum phase transitions (QPT$_1$)  is an important open problem.   The results presented here answer this question.
We find that the critical dimensions for the phenomenon for  quantum systems are the same as for classical systems, including at zero temperature.

%

We consider spin systems on the $d$-dimensional lattice $\mathbb{Z}^d$, where the configuration at each site is described by a finite-dimensional Hilbert space, with a Hamiltonian of the form
\begin{equation}
\label{Hamiltonian}
\mathcal{H} = \mathcal{H}_0 - \sum_x \left( h + \epsilon \eta_{x} \right) \lop_{x} \end{equation}
where $\{\lop_x\}$ are translates of some local operator $\lop_0$,
 and $h$ and $\epsilon$ are real parameters.  The quenched disorder is represented by $\{\eta_x\}$, a family of independent, identically distributed random variables.  $\mathcal{H}_0$ may be translation invariant and nonrandom, or it can include additional random terms (although we will not discuss the latter case, our results hold there also).
For convenience we will assume that $\|\lop_x\| = 1$, which can be arranged by rescaling $h$ and $\epsilon$.
We will refer to the $\eta$s as random fields, although in general they may also be associated with some other parameters, e.g.\  random bond strengths.

An example of a system of this type (with $\lop_x=\sigma_x^{(3)}$) is the ferromagnetic transverse-field Ising model with a random longitudinal field~\cite{Senthil} (henceforth QRFIM), with
 \begin{equation} \label{thisHam}
\mathcal{H} = -\sum J_{x-y} \sigma_x^{(3)} \sigma_y^{(3)} - \sum \left[ \lambda \, \sigma_x^{(1)} + (h+ \epsilon \eta_x) \, \sigma_x^{(3)} \right] \end{equation} where $\sigma_x^{(i)}$ $(i=1,2,3)$ are single-site Pauli matrices, and $J_{x-y} > 0$.
The QRFIM~has recently been studied as a model for the behavior of ${\rm LiHo_xY_{1-x}F_4}$ with $x>0.5$ in a strong transverse magnetic field~\cite{Schechter,SchSt.PRB}.

We will examine phase transitions where the order parameter is the volume average of the expectation value of $\lop_x $ with respect to an equilibrium (KMS) state, and show that this quantity cannot be discontinuous in $h$ for low-dimensional systems.  As is well known,
this order parameter is related to the directional derivatives $(\pm)$ of the \emph{free energy density},
\begin{equation}
\m{\pm} \ := \ -\frac{\partial} {\partial h \pm} \F(T, h, \epsilon) \, \label{mDef}
\end{equation}
where, as usual,  at positive temperatures
\begin{equation} \label{Z}
\F(T,h, \epsilon) = \ \lim_{\Gamma \nearrow \Zd}
\frac{-1}{\beta |\Gamma|} \log \Tr e^{-\beta H_{\Gamma}} \, \end{equation}
(with $\beta := 1/k_B T$), and $\F(0,h,\epsilon)$ is the corresponding limit of the ground state energy.
Here $H_{\Gamma}$ is the Hamiltonian of the system restricted to the finite box $\Gamma \subset \Zd$, and $ |\Gamma|$ is the number of sites in that box. It is known under the assumptions enumerated below that for almost all $\eta$ this limit exists and is given by a non-random function of the parameters (see, e.g.~\cite{AW.CMP,vuillermot1977tqr}), which does not depend on the boundary conditions.
By general arguments which are valid for both classical and quantum systems, $\F$ is convex in $h$; therefore the directional derivatives exist, and are equal for all but countably many values of $h$ \cite{Ruelle}.

For typical realizations of the random field, the interval $[\m{-}, \m{+}]$ provides the asymptotic range of values of the order parameter for any sequence of finite volume Gibbs states
 or ground states
(the argument is similar to that found in \cite{AW.CMP} for classical systems).  At a first order phase transition $m_- < m_+$, and there are then at least two distinct infinite volume KMS states~\cite{Ruelle} with different values of the order parameter.
In the QRFIM~the $m_+$ and the $m_-$states  can be obtained through the $+$ or $-$ boundary conditions (i.e.\ the spins $\sigma^{(3)}_x$ are replaced by $\pm 1$ for all $x \notin \Gamma$).  In general, such states  are  obtained by  adding $\pm \delta$ to the uniform field $h$ and letting  $\delta \to 0$ \emph{after} taking the infinite volume limit.

Our discussion is restricted to systems satisfying:
\begin{enumerate}[A.]
\item \label{finiteRange}
The interactions are \emph{short range}, in the sense that
for any finite box $\Lambda \in \Zd$ the Hamiltonian may be decomposed as: $\ham = H_\Lambda+V_\Lambda+H_{\Lambda\co}$, with $H_\Lambda$ acting only in $\Lambda$,  $H_{\Lambda\co} $  only in the complement  $\Lambda\co$, and $V_\Lambda$ of norm bounded by the size of the boundary:
 \begin{equation}
\| V_\Lambda \| \leq C | \partial \Lambda |  \label{finiteRangeIneq} \, .
\end{equation}
\item \label{fluct} The variables $\eta_x$ have an \emph{absolutely continuous}  distribution with respect to the Lebesgue measure  (i.e.\ one  with a probability density with no delta functions), and a  finite  \emph{ $r$th moment}, for some $r>2$.
 \end{enumerate}

Our main results are summarized in the following two statements.  The first applies regardless
of whether the order parameter is related to any symmetry breaking.

\begin{theorem}\label{bigThm}
In dimensions $d\le 2$, any system of the form of \eqref{Hamiltonian} satisfying the above assumptions has $\m{+}=\m{-}$ for all $h$, and $T\ge 0$, provided $\epsilon \neq 0$.    \end{theorem}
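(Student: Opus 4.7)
The plan is to construct, for each cube $\Gamma_L$, a random variable $G_L$ encoding the free-energy fluctuation due to the random field inside $\Gamma_L$ through its coupling to the order parameter, and then to derive a contradiction from two incompatible estimates on $G_L$ whenever $\m{+} > \m{-}$ in dimension $d \le 2$.

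Following the scheme of Proposition~\ref{convergence_prop}, for $M \ge L$ I would set
\begin{equation*}
G^{\delta}_{L,M}(\zeta) := \tfrac{1}{2}\,\condAv{F^{h+\delta}_M(\eta) - F^{h+\delta}_M(r_L(\eta)) - F^{h-\delta}_M(\eta) + F^{h-\delta}_M(r_L(\eta))}{\eta_{\Gamma_L} = \zeta_{\Gamma_L}},
\end{equation*}
where $r_L$ zeros the field inside $\Gamma_L$, and then extract $G_L$ as a double subsequential limit $M \to \infty$ followed by $\delta \to 0^+$, after subtracting the mean. Two applications of Lemma~\ref{Ruelle_lemma} give $\lip{G^{\delta}_{L,M}} \le 1$ uniformly in $\delta$ and $M$, so the Arzel\`a--Ascoli and diagonal-subsequence argument of Lemma~\ref{compactness_lemma_1} produces the required subsequences. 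The limit $G_L$ depends only on $\zeta_{\Gamma_L}$, has mean zero and Lipschitz constant at most $1$, and satisfies the tower-property consistency required by Proposition~\ref{CLT_prop}; its distributional one-point derivative evaluates via Corollary~\ref{LongLongRange} and Corollary~\ref{mag_ergodicity} to $\Av G_1' = (\m{+} - \m{-})/2$. Invoking Proposition~\ref{CLT_prop} together with the positivity criterion Theorem~\ref{AW_theta}, absolute continuity of $\eta$ (Assumption~B of the letter) then guarantees $G_L/L^{d/2} \to \mathcal{N}(0,b^2)$ in distribution with $b^2 > 0$ whenever $\m{+} > \m{-}$.

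Independently, a pathwise upper bound on $|G_L|$ follows from the decomposition $H_{\Gamma_M} = H_{\Lambda_L,0} + V_{\Lambda_L} + H_{\Gamma_M \setminus \Gamma_L,*}$: once the boundary term $V_{\Lambda_L}$ is dropped, the inner and outer Hamiltonians act on disjoint subalgebras and commute, so Lemma~\ref{Ruelle_lemma} replaces $F^{h}_{\Gamma_M}$ by a sum of inner and outer free energies at cost at most $\|V_{\Lambda_L}\|$. The outer contributions then cancel identically in the $r_L$-antisymmetric combination, and a second use of Lemma~\ref{Ruelle_lemma} bounds the remaining inner difference by $O(\delta L^d)$; sending $M \to \infty$ and $\delta \to 0$ along the chosen subsequences removes the $\delta$-dependent term and leaves
\begin{equation*}
|G_L(\zeta)| \;\le\; 2\,\Av \|V_{\Lambda_L}^{\zeta,\ul{\upsilon}}\| \;\le\; C\,(1+|h|)\,L^{d-1} + C'\,L^{d/2},
\end{equation*}
the last inequality being the short-range hypothesis~\eqref{finiteRangeIneq} of the letter (equivalently Assumption~\ref{short_range_assumption}).

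In dimension $d \le 2$ both $L^{d-1}$ and $L^{d/2}$ are dominated by a multiple of $L^{d/2}$, so this reads $|G_L| \le A\,L^{d/2}$ for a deterministic constant $A$. Consequently $\Av \exp(t G_L / L^{d/2}) \le e^{tA}$ for every $t > 0$, while convergence in distribution together with the uniform pathwise bound forces the moment generating function to tend to $\exp(t^2 b^2/2)$; choosing $t > 2A/b^2$ defeats the deterministic bound unless $b^2 = 0$, which is to say unless $\m{+} = \m{-}$. The hard part is not any individual step but the construction of $G_L$ as an infinite-volume object without appealing to a quantum metastate: the subsequential limits in $M$ and $\delta$ must be shown to commute with conditional expectations and to preserve the martingale and translation-covariance structure required by Proposition~\ref{CLT_prop}. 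For $T < \infty$ this is handled in Section~\ref{convergence.section.1} via uniform equicontinuity of $\partial G^{\delta}_{L,M}/\partial \zeta_x$ with Lipschitz constant $2\beta$, which blows up as $T \to 0$; the zero-temperature case in Section~\ref{convergence.section.2} instead replaces pointwise convergence of derivatives by weak-$*$ convergence via Corollary~\ref{weak_conv_corr}, and it is precisely there that absolute continuity of the distribution of $\eta$ enters essentially.
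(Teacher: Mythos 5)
Your proposal reproduces the paper's own argument faithfully: the definition of $G^\delta_{L,M}$ and its subsequential limit (Proposition~\ref{convergence_prop}, Sections~\ref{convergence.section.1}--\ref{convergence.section.2}), the appeal to Proposition~\ref{CLT_prop} with Theorem~\ref{AW_theta} to get $G_L/L^{d/2}\to\mathcal{N}(0,b^2)$ with $b^2>0$ at a first-order transition, the pathwise bound $|G_L|\le 2\,\Av\|V_{\Lambda_L}\|$ via the commuting decomposition in Section~\ref{wrapup}, and the closing moment-generating-function contradiction are all exactly the paper's steps, in the paper's order. The only cosmetic divergence is that you obtain $\lip{G^\delta_{L,M}}\le 1$ by applying Lemma~\ref{Ruelle_lemma} rather than by computing $\partial\gf{L,M}{\delta}/\partial\zeta_x$ as a difference of expectation values as the paper does, but these give the same bound; the proposal is correct and does not deviate from the paper's route.
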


The next result is formulated for situations where the the first order phase transition would represent continuous symmetry breaking.  An example is the  $O(N)$ model
with
\begin{equation} \label{HeisHam}
 \mathcal{H}_0 \ =\ -  \sum J_{x-y} \vec{\sigma}_x \cdot \vec{\sigma}_y
\end{equation}
where $\vec{\sigma}$ are the usual quantum spin operators.    More generally, $ \mathcal{H}_0 $ is assumed to be a sum of finite range terms which are invariant under the global action of the rotation group $SO(N)$, and $\vec{\sigma}_x$ is a collection of operators of norm
one which transform as  the components of a vector under rotations.
With the random terms the Hamiltonian is
\begin{equation}
\mathcal{H} =  \mathcal{H}_0  - \sum (\vec{h} + \epsilon \vec{\eta}_x) \cdot \vec{\sigma}_x. \label{ONHam} \end{equation}

\begin{theorem}\label{bigThm2}
For the $SO(N)$-symmetric system described above, with the random fields $\vec{\eta}_x$ having a rotation-invariant distribution, the free energy
is continuously differentiable in $\vec{h}$ at $\vec{h}=0$ whenever $\epsilon \neq 0$, $d \leq 4$, and $N \geq 2$.
\end{theorem}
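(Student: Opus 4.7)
The plan is to adapt the free energy fluctuation framework used to prove Theorem \ref{bigThm} and sharpen the deterministic energy bound by exploiting the continuous rotation symmetry, gaining two powers of $L$ in the key a priori estimate. I fix a unit vector $\hat e \in \R^N$, couple the auxiliary field $\pm \delta \hat e$ to the order parameter, and build a sequence $G_L^{\hat e}$ of fluctuation functions as subsequential limits (first $M \to \infty$, then $\delta \to 0^+$) of centered scalar differences analogous to \eqref{GdeltaDef}, now conditioned on the scalar component $\hat e \cdot \vec\eta_x$. Existence, centering, translation covariance, and uniform Lipschitz continuity --- the hypotheses of Proposition \ref{CLT_prop} --- follow from the diagonal compactness argument already used in Proposition \ref{convergence_prop} (Lemmas \ref{compactness_lemma_1} and \ref{compactness_lemma_2}). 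Isotropy of $\vec\eta$ guarantees that the marginal distribution of $\hat e \cdot \vec\eta_0$ is nontrivial and (in the cases where it matters) has no isolated point masses, so Theorem \ref{AW_theta} delivers a strictly positive asymptotic variance $b^2$ as soon as $\hat e \cdot \nabla_{\vec h}\F$ has a jump at $0$.

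The nonlinear CLT then gives $G_L^{\hat e}/L^{d/2} \to N(0,b^2)$ in distribution, and the moment-generating-function comparison from Section \ref{wrapup} produces a contradiction against any deterministic upper bound of the form $|G_L^{\hat e}| \le K L^{d-2}$ (valid for $d \le 4$ since then $d-2 \le d/2$). Everything thus reduces to proving that improved estimate.

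To obtain the $L^{d-2}$ bound I would adapt the rotational interpolation used in Pfister's proof of the Mermin-Wagner theorem to the quantum setting. Introduce slowly varying angles $\theta_x \in [0,\pi]$ interpolating linearly from $\theta_x = 0$ on $\Gamma_L$ to $\theta_x = \pi$ outside $\Gamma_{3L}$, and the associated unitary $\hat R_\theta = \bigotimes_x e^{\theta_x \rho_x}$, where $\rho_x \in \alg_x$ generates the rotation in a plane containing $\hat e$. Conjugation by $\hat R_\theta$ inside the partition-function trace maps $F_M^{-\delta \hat e}$ onto a perturbation of $F_M^{\delta \hat e}$: outside $\Gamma_{3L}$ the fixed-field direction is reversed so as to agree with $F_M^{\delta \hat e}$, inside $\Gamma_L$ nothing is touched, and in the transition shell the fixed-field contribution is at most $O(\delta L^d)$ and vanishes in the limit $\delta \to 0^+$. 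The random-field coupling is distributionally invariant by isotropy of $\vec\eta$ together with \eqref{vector_operator_transformation}, so it averages out. What remains is a commutator-type perturbation $\Delta H_\theta$ of the nonrandom interactions $\inter_0$, controlled through Ruelle's inequality (Lemma \ref{Ruelle_lemma}).

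The subtle point --- and the main obstacle --- is that a single rotation yields only a one-sided free energy comparison with leading error of order $L^{d-1}$, which is insufficient in $d = 3, 4$. My plan is to symmetrize by comparing the conditional free-energy difference to a combination $\tfrac{1}{2}[F_M^{\theta} + F_M^{-\theta}]$ via a short Cauchy-Schwarz / Golden-Thompson argument, producing both upper and lower bounds at once. Expanding $\hat R_{\pm \theta}^{-1} P_\Gamma(\inter_0(X)) \hat R_{\pm \theta}$ to second order in the small angle differences $\theta_y - \theta_z = O(\diam X / L)$ for $y, z \in X$, the linear-in-$\theta$ commutator terms cancel under the symmetrization, and the quadratic remainder has size $(\diam X)^2 |X|^2 L^{-2} \|\inter_0(X)\|$ per interaction term. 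Summed over the $O(L^d)$ relevant translates, Assumption \ref{continuous_short_range_assum} is exactly what makes the total $O(L^{d-2})$. Engineering this cancellation and controlling the quadratic remainder under the assumed decay is the serious part of the argument; with that bound in hand, the rest proceeds verbatim as in the scalar case.
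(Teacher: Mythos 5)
Your proposal follows essentially the same route as the paper's: the same construction of $G^{\hat e}_L$ via the compactness arguments of Proposition~\ref{convergence_prop}, the same slow-angle rotational deformation symmetrized over $\pm\theta$ (using Cauchy-Schwarz, Golden-Thompson, and Lemma~\ref{Ruelle_lemma}) to kill the first-order terms, with Assumption~\ref{continuous_short_range_assum} controlling the quadratic remainder at $O(L^{d-2})$, and the same moment-generating-function contradiction against the normal fluctuation from Proposition~\ref{CLT_prop}. The only slight gloss is that the lower bound $g \ge -\frac{1}{2}\|\Delta H_\theta+\Delta H_{-\theta}\|$ is obtained in the paper by invoking the antisymmetry $g^{\delta\hat e}_{L,M}=-g^{-\delta\hat e}_{L,M}$ rather than directly from the Golden-Thompson estimate, but this is the same mechanism you intend by repeating the argument with roles exchanged.
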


Before describing the proof, let us comment on   the implications of the statements, and their limitations.
\newcounter{ccount}
\begin{list}{\arabic{ccount}. }{\usecounter{ccount} \setlength{\itemindent}{\leftmargin} \setlength{\listparindent}{\parindent} \setlength{\leftmargin}{0pt}}
\item While the statements establish uniqueness of the expectation value of the bulk averages of the observables $\kappa_x$, or $\vec{\sigma}_x$ (in Theorem~\ref{bigThm2}), they do not rule out the possibility of the coexistence of a number of equilibrium states, which differ from each other in some other way than the mean density of $\kappa$, which they share.
More can be said  for models for which it is known by other means   that non-uniquess of state is possible only if there is long range order in $\kappa$.   (Such is the case for  QRFIM,  through its relation to the classical ferromagnetic Ising model in $d+1$ dimensions~\cite{campanino1991lgs}.)

\item The results address only the discontinuity,  or symmetry breaking (as in the QRFIM), but they leave room for other phase transitions, or singular dependence on $h$.    For instance, for  the Ashkin-Teller spin chain for which Goswami et. al.~\cite{Goswami}  report finding the Imry-Ma phenomenon in some range of the parameters but not elsewhere,  the results presented here rule out the persistence of a first-order transition between the paramagnetic and Baxter phases in the full range in the model's parameters.   However, they do not rule out the possibility of  other phase transitions.

\item Randomness which does not couple to the order parameter of the transition need not cause a rounding effect.  For example, in the transverse-field Ising model in a random \emph{transverse} field, where the random field $\eta_x$ couples to $\sigma^{(1)}$, ferromagnetic ordering is known to persist~\cite{Fisher,campanino1991lgs}.  Presumably the same is true for the Baxter phase of the Ashkin-Teller model.  It was  even suggested that there are systems in which the introduction of randomness of this sort may even induce long range order which would not otherwise be present~\cite{Wehr.PRB,Wehr.PRL}, and our results do not contradict this.  In addition we can draw no conclusions about quasi-long-range order, that is power law decay of correlations, including of $\kappa_x$.

\end{list}

Other comments, on the technical assumptions under which the statements hold, are found after the proofs. \\

The proofs of Theorems~\ref{bigThm} and~\ref{bigThm2} are based on the analysis of the differences, between the $m_+$ and the $m_-$-states, in the free energy (at $T=0$, ground state energy) which can be ascribed   to the random field within a finite region $\Lambda$ of diameter $L$.
Putting momentarily aside  the question of existence of limits, a relevant quantity could be provided by:
\begin{equation} \label{gLim}
\G(\eta_\Lambda) := \lim_{\delta \to 0}\lim_{\Gamma \to \Z^d} \condExp{\gf}{\eta_\Lambda} -\E{\gf}
\end{equation}
where
$\gf$ is the difference of free energies
\begin{equation}
\gf(\eta):= \frac{1}{2} \left( F^{\eta,h+\delta}_{\Gamma} - F^{\eta^{(\Lambda)},h+\delta}_{\Gamma}-F^{\eta,h-\delta}_{\Gamma} + F^{\eta^{(\Lambda)},h-\delta}_{\Gamma}\right), \label{gfDef}
\end{equation}
with
\begin{equation}
F^{\eta,h}_{\Gamma}:= \frac{-1}{\beta}\log \Tr \exp (-\beta H^{\eta,h}_{\Gamma}) \, ,
\end{equation}
 $\eta^{(\Lambda)}$ is the random field configuration obtained from $\eta$ by setting it to zero within $\Lambda$, and  $\condExp{\cdot}{\eta_\Lambda}$ is a conditional expectation, i.e.\ an average over the fields outside of $\Lambda$.
 (The modification of the field $h$ by $\pm\delta$ serves to select the desired ($m_\pm$) states).

 Somewhat inconveniently, it is not obvious that  for all models the limits in \eqref{gfDef}
exist.  Nevertheless, one  can prove that for each system of the class considered here there is a sequence of volumes $\Gamma_j \nearrow \Z^d$  for which the limit exists for all $\Lambda$, with convergence uniform in $\eta_\Lambda$.    The proof of this assertion is by a compactness argument, whose details can be found elsewhere~\cite{future}\footnote{Sections~\ref{convergence.section.1} and~\ref{convergence.section.2}.}.

The essence of the proof of  Theorem~\ref{bigThm} is the  contradiction between two estimates:
\begin{enumerate}[i.]
\item
Under Assumption~\ref{finiteRange}, equation~\eqref{finiteRangeIneq}:
\begin{equation}
|\G(\eta)| \leq  4 C | \partial \Lambda | \, . \label{uBound}
\end{equation}
\item 
Whenever $m_-< m_+$,
 $\G/\sqrt{|\Lambda|}$ converges in distribution  to a normal  random variable with a positive variance
   (as one would guess by considering the difference in the random field terms between states of  different mean  magnetizations,  neglecting the states' local adjustments to the random fields).
\end{enumerate}

More explicitly,
for the upper bound we note that in the absence of the interaction terms $V_\Lambda$, the right hand side of~\eqref{gfDef} would be zero.  Using~\eqref{finiteRangeIneq}, one gets \eqref{uBound}.

To prove the normal distribution for $\gf$, we apply a theorem of 
\cite{AW.CMP} (Proposition~6.1)
 (as corrected in 
\cite{Bovier}, p. 124).  It implies that for $\Lambda \nearrow \Zd$, under Assumption~\ref{fluct},  $\G/\sqrt{|\Lambda|}$ converges in distribution  to a normal  random variable with variance of the order of
\begin{equation}
b\ =\ \E{\frac{\partial\G}{\partial \eta_x}} = m_+ - m_-.
\end{equation}

 The two statements described above contradict the assumption  that $m_-<  m_+$  in dimensions $d\le2$.   That is so even at the critical dimension,  where $L^{d/2} = L^{d-1}$.  The reason is that the lower bound
implies the existence of arbitrarily large fluctuations on that scale, whereas the upper bound is with a uniform constant.  This proves Theorem~\ref{bigThm}.\\

%

The above proof is similar to that of the classical results~\cite{AW.CMP,AW.PRL} which this work extends.  However, the discussion  of the free energy fluctuations was based there on the analysis of the Gibbs states, and more specifically of the response to the fluctuating fields of the `metastates' which were  specially constructed for that purpose.
Except for special cases, such as the QRFIM, that argument was not available for quantum systems, where  the equilibrium  expectation values are no longer given by integrals over positive measures.
 The proof of the quantum case is enabled by a more direct analysis of the free energy. \\

Theorem~\ref{bigThm2} is proven by establishing that in the presence of continuous symmetry the   upper bound  \eqref{uBound}, for $\Lambda = [-L,L]^d$,  can be replaced by:
\begin{equation}
|\G(\eta_\Lambda)| \leq K L^{d-2}\label{symUBound} \, .
\end{equation}
Here $\G$  is defined as in~\eqref{gfDef},\eqref{gLim}, but $\vec{h}=\vec{0}$, and $\delta$ is replaced by by $\vec{\delta} := \delta \hat{e}$ with $\hat{e}$ a unit vector.   This change in the upper bound raises  the critical dimension to $d=4$.

To obtain  \eqref{symUBound} we focus on 
\begin{equation}\label{grDef}
\gr(\vec{\eta}_\Lambda) := \condExp{F_\Gamma^{\vec{\eta},\delta \hat{e}} - F_\Gamma^{\vec{\eta},-\delta \hat{e}}}{\vec{\eta}_\Lambda} \, .
\end{equation}
Since
$
\G(\eta_\Lambda) = \lim_{\delta \to 0}\lim_{\Gamma \to \Z^d}
\frac{1}{2} \left( \gr(\vec{\eta}_\Lambda)  - \gr(0)    \right) \, ,
$
any uniform bound on  $|\gr|$ for given $\Lambda$ implies a similar bound on $|\G|$.  The claimed  bound may be obtained though a soft-mode deformation analysis, which we shall make explicit for the case of pair interaction (the general case can be treated by similar estimates).

The free enrgy $F_\Gamma^{\vec{\eta},-\delta \hat{e}}$ in \eqref{grDef} may be rewritten
by rotating both the spins and the field vectors  with respect to an axis perpendicular to $\hat{e}$ at the slowly varying  angles
\begin{equation}
\theta_x :=\ \left\{
\begin{array}{ll}
0, & \; \| x \| \leq L \\
\frac{\|x\|-L}{L}\pi, & L < \|x \| < 2L \\ \pi, & \|x\| \geq 2L
\end{array}
\right.  \, .
\end{equation}
The rotation aligns the external fields in the two terms ($\pm \delta \hat{e}$), except within $\Lambda$ where the effect is negligible when $\delta \to 0$.
The effect of the rotation on the random fields is absorbed by rotation invariance of the average.  In the end, the Hamiltonian of the rotated system differs from the Hamiltonian used to define the other free energy by
\begin{equation}\label{twist}
\begin{split}
\Delta H_{\underline{\theta}} := \sum_{\{x,y\} \subset \Gamma} & J_{x-y} \left[ \vec{\sigma}_x \cdot \vec{\sigma}_y \right. \\
- &\left. \vec{\sigma}_x \cdot \left( e^{i (\theta_y-\theta_x) \rho_y} \vec{\sigma}_y e^{- i (\theta_y-\theta_x) \rho_y} \right)
\right]
\end{split}\end{equation}

%

When the resulting expression for $F_\Gamma^{\vec{\eta},-\delta \hat{e}}$ in  \eqref{grDef} is expanded in powers of  $\theta_x-\theta_y \approx \pi \|x-y\|/L$,  the zeroth-order term cancels with $F_\Gamma^{\vec{\eta},-\delta \hat{e}}$, and the second and higher order terms yield the claimed bound.  The main difficulty is to eliminate the first order terms, which amount to a sum of $O(L^d)$ quantities each of order $1/L$.
However, the sign of these terms is reversed when the rotation is  in the reversed direction.
To take advantage of this, we combine two expressions for
$\gr(\vec{\eta}_\Lambda) $ with the rotations applied  in opposite directions, yielding:
\begin{equation}
\begin{split}
\gr(\vec{\eta}_\Lambda) &= \rm{Av}
\left[ \log \Tr e^{-\beta H} \right. \\
- \tfrac{1}{2} & \left. \log \Tr e^{-\beta (H + \Delta H_{\underline{\theta}})}
- \tfrac{1}{2}\log \Tr e^{-\beta (H + \Delta H_{-\underline{\theta}})} \middle|
\vec{\eta}_\Lambda \right]
\end{split}
\end{equation}
(where $H \equiv H_\Gamma^{\vec{\eta},\delta \hat{e}}$.)
By known operator inequalities \cite{Ruelle}:
\begin{equation}
\begin{split}
\log \Tr e^{-\beta H} - &\tfrac{1}{2} \log \Tr e^{-\beta (H + \Delta H_{\underline{\theta}})} - \tfrac{1}{2} \log \Tr e^{-\beta (H + \Delta H_{-\underline{\theta}})}
 \\ &\leq \tfrac{1}{2} \| \Delta H_{\underline{\theta}} + \Delta H_{-\underline{\theta}} \|
\end{split}
\end{equation}
The right hand side is zero to first order, and one is left with an \emph{upper bound} on~$\gr$ of the desired form.  Repeating this analysis with the roles of the terms exchanged we obtain an identical \emph{lower bound},  and thus  inequality~\eqref{symUBound} follows.\\


The  above argument is spelled out in detail in \cite{future}\footnote{See also Section~\ref{continuous_wrapup}}.   Let us end with few additional comments on the assumptions.
\newcounter{dcount}
\setcounter{dcount}{\value{ccount}}
\begin{list}{\arabic{ccount}. }{\usecounter{ccount} \setlength{\itemindent}{\leftmargin} \setlength{\listparindent}{\parindent} \setlength{\leftmargin}{0pt}}\setcounter{ccount}{\value{dcount}}
\item For Theorem~\ref{bigThm2}, the assumption that the interaction has a strictly finite range can be weakened to a condition somewhat similar to Assumption~\ref{finiteRange}.  For pair interactions (equation~\eqref{ONHam}) it suffices to assume: \begin{equation}
\sum_{x \in \Zd} |J_x| \ \| x \|^2 <\infty  \label{continuous_short} \, .
\end{equation}


 \item 
 The restriction to absolutely continuous distribution excludes a number of models of interest.  Such an assumption is generally necessary at zero temperature, as can be seen by the behavior of the Ising chain in a random field~\cite{Bleher} which takes only a finite number of values.  For positive temperatures it can be replaced by the requirement that the distribution has a continuous part which extends along the entire range of values.  For the QRFIM at finite temperature one need only assume that the random field has more than one possible value, and this may well be the case more generally.

\end{list}

\section*{Acknowledgments}
We thank S. Chakravarty and M. Schechter for useful discussions.   MA wishes to thank for the hospitality accorded him at the  Weizmann Institute, at the  Department of Physics of Complex Systems.
The work of R.L.G. and J.L.L. was supported in part by the
NSF Grant DMR-044-2066 and AFOSR Grant AF-FA9550-04, and the work of M.A. by the NSF Grant DMS-060-2360.

\chapter{Some concepts and results in mathematical probability}\label{prob_appendix}

In this appendix, I will review some of the probabilistic terminology and concepts used in this dissertation for the benefit of those readers who may find them obscure. Proofs and further details can be found in standard probability textbooks such as~\cite{Billingsley_probability} or~\cite{Klenke}.

\section{$\sigma$-algebras and measures}

We begin with some set $X$, for example the set of outcomes of a class of measurements.  Assigning probabilities to members of $X$ is a problem when $X$ is uncountably infinite (e.g.\ when it is a real interval), so we will instead assign probabilities to subsets of $X$ (events).  It turns out to be impossible to consistently do this for all subsets of $X$ whatsoever, but we at least want to be able to look at complements of sets (corresponding to logical not) and countable unions (corresponding to logical and), intersections (logical or) then comes for free. A collection of $\mathcal{X}$ of subsets of $X$ which includes $X$ and is closed under complement and countable union is called a \emph{$\sigma$-algebra}.  The most common example is the Borel algebra on the real line, which is the smallest $\sigma$-algebra which includes all intervals.  The combination of a set and a $\sigma$-algebra on that set is called a \emph{measurable space}, and in this context an element of the $\sigma$-algebra  is called a \emph{measurable set}.

A function $\mu$ assigning a nonnegative real number (possibly $\infty$) to each member of a $\sigma$-algebra is called a \emph{measure} if $\mu(\emptyset)=0$ and
\begin{equation}
  \mu\left( \bigcup_{i=1}^\infty A_i \right)= \sum_{i=1}^\infty \mu(A_i)
\end{equation}
for any countable collection of disjoint (nonoverlapping) sets $A_i \in \mathcal{X}$. $\mu$ is called a \emph{probability measure} if $\mu(X)=1$.  This definition immediately implies a number of the usual rules of probability: probability of mutually exclusive events is additive, the probabilities of a complete set of mutual exclusive events is 1, etc.  A measurable space with an associated measure is a \emph{measure space}, or if that measure is a probability measure it is a \emph{probability space}.

One measure we use frequently is the Lebesgue measure $\lambda$, also known as the uniform measure on the real line.  This has the property that for any finite interval $\lambda(I)$ is the length of $I$, and it is the only measure on the Borel algebra with this property.  Another important Borel measure is the Dirac measure, defined by
\begin{equation}
\delta_x(A) := \piecewise{
   1, & x \in A \\
  0, & x \notin A}.
\end{equation}

It is possible to define integration with respect to a measure, but only with respect to \emph{measurable functions}.  A function $f$ from one measurable space to another is measurable if the preimage of a measurable set is a measurable set (that is, if the set of $x$ which produces some measurable range of outcomes $f(x)$ is guaranteed to be measurable).  This means that knowing which measurable sets contain $x$ is enough information to deduce the value of $f(x)$.  Continuous functions are always measurable with respect to the Borel algebra.  Given a measure, it is possible to define integration of a function $f$ with respect to a measure $mu$, denoted
\begin{equation}
  \int f d\mu
\end{equation}
first for positive measurable functions, then for functions whose positive and negative parts give a finite integral (these are then \emph{integrable functions}).  This has many of the usual properties of an integral.  We can also define integrals over a measurable set $A$, which have the property that
\begin{equation}
  \int_A d \mu = \mu(A).
\end{equation}
We can also write this in terms of the \emph{indicator function}
\begin{equation}
  I[A](x):=\piecewise{
  1, &x \in A \\
  0, & x \notin A
  }
\end{equation}
as
\begin{equation}
  \int_A f d\mu = \int f I[A] d\mu
\end{equation}

Integration with respect to the Lebesgue measure (leaving aside some issues of divergence) coincides with the usual notion of integration on the real line:
\begin{gather}
  \int f d\lambda = \int_{-\infty}^\infty f(x) dx \\
  \int_{[a,b]} f d\lambda = \int_a^b f(x) dx
\end{gather}
whenever the relevant expressions are all well defined.  A set $A$ which is contained in a measurable set $B$ with $\mu(B)=0$ is called a \emph{null set}\footnote{This is not precisely the same as saying that $A$ has measure zero, since $A$ is not assumed to be measurable; however this distinction is rarely important.}; two functions which differ only on a null set have the same integrals.

Given a measure $\mu$ and a nonnegative measurable function $f$, we can define a measure $f \mu$ by
\begin{equation}
  f\mu(A) = \int_A f d \mu.
\end{equation}
If it is possible to write a measure $\nu$ as $ f \mu$, then we saw that $\nu$ is \emph{absolutely continuous} with respect to $\mu$; $f$ is the \emph{Radon-Nikodym derivative} or \emph{density} of $\nu$ with respect to $\mu$.  When I say that a measure is absolutely continuous without specifying another measure, this should be understood to be the Lebesgue measure.

\section{$L^p$ norms and spaces; convergence of measurable functions}

Consider some measure space $(X,\mathcal{X},\mu)$, and the following functional on the measurable functions ($1 \le p \le \infty$):
\begin{equation}
  \pnorm{f}{p} := \left( \int |f|^p d\mu \right)^{1/p}
\end{equation}
This is not quite a norm, since it is unaffected by changing the value of $f$ on a null set.  It is however a norm on the resulting equivalence classes, and so it is frequently referred to as the $L^p$ norm; the space of equivalence classes of functions with $\pnorm{f}{p}$ finite is denoted $\mathcal{L}^p(\mu)$.  This can be extended to $p=\infty$ by setting
\begin{equation}
  \pnorm{f}{\infty} = \inf_{A:\mu(X\setminus A)=0} \sup_{x \in A} |f(x)|.
\end{equation}
For any $1 \le p \le \infty$, the $L^p$ norm provides a notion of convergence like any other norm.  We say that a sequence $f_n$ converges to $f$ in $\Lp{p}{\mu}$ iff
\begin{equation}
  \lim_{n \to \infty} \pnorm{f_n-f}{p} =0.
\end{equation}

One useful property of these norms is \emph{H\:older's inequality}, which states that
\begin{equation}
  \pnorm{fg}{1} \le \pnorm{f}{p} \pnorm{g}{q}
\end{equation}
whenever $1/p + 1/q = 1$ (including $p=1$,$q=\infty$).

There are several other important notions of convergence for measurable functions.  The first is \emph{convergence in measure}: $f_n \to f$ in measure if
\begin{equation}
  \mu \left( \left\{ |f-fn| > \epsilon \right\} \cap A \right) \to 0
\end{equation}
for any $\epsilon > 0$ and any measurable set $A$ with $\mu(A) \le \infty$.

A sequence $f_n$ converges to $f$ \emph{almost everywhere} (in a probability space, \emph{almost certainly} or \emph{almost surely}) if there is a null set $N$ so that
\begin{equation}
  f_n(x) \to f(x) \forall x \notin N.
\end{equation}
Almost everywhere convergence and $L^p$ convergence both imply convergence in measure.

The following results give some relationships between convergence of functions and of integrals.

\begin{sloppypar}
\begin{theorem}[Beppo-Levi]
Let $f_n$ be a sequence of integrable functions with $\int f_n d\mu~<~\infty$, with $f_n \nearrow f$ (monotone convergence) almost everywhere, with $f$ measurable.  Then
\begin{equation}
  \lim_{n \to \infty} \int f_n d\mu = \int f d\mu.
\end{equation}
\end{theorem}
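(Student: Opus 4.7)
The plan is to recognize this as the standard monotone convergence theorem and to reduce to the nonnegative case before applying the simple-function approximation argument that is characteristic of such results. First I would replace $f_n$ by $g_n := f_n - f_1$ and $f$ by $g := f - f_1$; since $f_1$ is integrable, this leaves all integrals well defined and merely shifts the conclusion by $\int f_1\,d\mu$. After modifying each $f_n$ and $f$ on a common null set (the union of the null sets where monotonicity fails for each $n$, which remains null), I may assume $0 \le g_n(x) \le g_{n+1}(x)$ for all $x$ and $g_n(x) \to g(x)$ pointwise, with $g$ measurable and nonnegative.

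Once this reduction is in place, one direction is immediate: monotonicity of the integral for nonnegative measurable functions gives $\int g_n\,d\mu \le \int g\,d\mu$ for each $n$, so the limit $L := \lim_n \int g_n\,d\mu$ exists in $[0,\infty]$ and satisfies $L \le \int g\,d\mu$.

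The main work is the reverse inequality $L \ge \int g\,d\mu$, and this is where I would expect the real obstacle. The standard route is to invoke the fact (established separately in measure theory) that any nonnegative measurable function is the pointwise, nondecreasing limit of a sequence of nonnegative simple functions; this reduces the inequality to showing $\int s\,d\mu \le L$ for every nonnegative simple $s$ with $s \le g$. Fixing such an $s$ and any $c \in (0,1)$, I would introduce the measurable sets
\begin{equation}
  A_n := \{x : g_n(x) \ge c\, s(x)\},
\end{equation}
which form a nondecreasing sequence whose union is all of $X$ (up to a null set) because $g_n \nearrow g \ge s > c s$ wherever $s > 0$. Writing $s = \sum_{k} a_k I[B_k]$ with the $B_k$ disjoint and measurable, continuity of $\mu$ from below along $A_n \cap B_k$ gives $\int_{A_n} s\,d\mu \to \int s\,d\mu$, and since $\int g_n\,d\mu \ge c \int_{A_n} s\,d\mu$ we obtain $L \ge c \int s\,d\mu$. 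Letting $c \to 1$ and taking the supremum over simple $s \le g$ yields $L \ge \int g\,d\mu$.

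The hard part, then, is really the construction and control of the sets $A_n$, which is what converts pointwise monotone convergence into convergence of integrals; everything else is bookkeeping. Combining the two inequalities gives $\lim_n \int g_n\,d\mu = \int g\,d\mu$, and adding $\int f_1\,d\mu$ to both sides recovers the stated form of the theorem.
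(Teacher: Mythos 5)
Your proof is correct and is the standard argument for the monotone convergence theorem (reduce to the nonnegative case by subtracting $f_1$, bound from below via $cs$ on the increasing sets $A_n$, use continuity of $\mu$ from below, then let $c \to 1$ and take the supremum over simple $s \le g$). The paper itself does not supply a proof of this statement: it appears in a review appendix on probabilistic background where the author explicitly defers to textbooks such as Billingsley or Klenke, which give essentially this same argument. One remark worth noting: your reduction also requires observing that the finiteness of $\int f_1\,d\mu$ makes the additive shift harmless even when $\int f\,d\mu = +\infty$ (both sides of the conclusion are then $+\infty$), which your argument handles correctly but could be stated explicitly.
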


\begin{theorem}[Lebesgue, dominated convergence]\label{domConvThm}
Let $f_n$ be a sequence in $L^1(\mu)$ with $f_n \to f$ in measure for some measurable $f$.  Assume there is a nonnegative $g~\in~L^1(\mu)$ such that $|f_n|\le g$ almost everywhere for all $n$.

Then $f \in L^1(\mu)$ and $f_n \to f$ in $L^1$; in particular
\begin{equation}
  \int f_n d\mu \to \int f d\mu.
\end{equation}
\end{theorem}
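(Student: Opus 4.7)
The plan is to reduce the convergence-in-measure hypothesis to the more familiar case of almost-everywhere convergence, and to prove that case by applying Fatou's lemma (which follows easily from the Beppo-Levi theorem stated just above) to a cleverly chosen nonnegative sequence.

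First I would establish that $f \in L^1(\mu)$. By the Riesz subsequence principle, convergence in measure guarantees a subsequence $f_{n_k} \to f$ almost everywhere, and passing to the pointwise limit in the inequality $|f_{n_k}| \le g$ gives $|f| \le g$ a.e., hence $\int |f|\, d\mu \le \int g\, d\mu < \infty$.

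Next I would handle the case of a.e.\ convergence. The trick is to apply Fatou's lemma to $h_n := 2g - |f_n - f|$, which is nonnegative a.e.\ (since $|f_n - f| \le |f_n| + |f| \le 2g$) and converges a.e.\ to $2g$. The resulting inequality
\begin{equation}
\int 2g\, d\mu \le \liminf_n \int (2g - |f_n - f|)\, d\mu = \int 2g\, d\mu - \limsup_n \int |f_n - f|\, d\mu
\end{equation}
collapses, after subtracting the finite quantity $\int 2g\, d\mu$ from both sides, to $\limsup_n \int |f_n - f|\, d\mu \le 0$, which proves convergence in $L^1$.

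Finally, to obtain the stated version from the a.e.\ case, I would argue by contradiction. If $\int |f_n - f|\, d\mu \not\to 0$, then some subsequence satisfies $\int |f_{n_k} - f|\, d\mu \ge \epsilon > 0$; but $f_{n_k}$ still converges to $f$ in measure, so it admits a further a.e.-convergent subsequence, to which the previous step applies to give $\int |f_{n_{k_j}} - f|\, d\mu \to 0$, contradicting the lower bound. The only conceptual ingredient beyond the classical a.e.\ proof is this subsequence principle; the estimates themselves are routine once Fatou's lemma is in hand, so I do not expect a real obstacle here.
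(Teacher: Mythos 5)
The paper does not actually prove this statement: it appears in Appendix~\ref{prob_appendix}, a review of background probability facts, with the explicit remark that ``Proofs and further details can be found in standard probability textbooks.'' So there is no paper proof to compare against. Your argument is the standard one and it is correct. The reduction of the $L^1$-convergence claim to the almost-everywhere case via the Riesz subsequence principle and a subsequence-of-a-subsequence contradiction is the usual way to upgrade from a.e.\ convergence to convergence in measure; the Fatou-applied-to-$2g - |f_n - f|$ trick is the textbook core, and Fatou does follow from the Beppo--Levi theorem stated just above (by applying Beppo--Levi to $g_n := \inf_{k \ge n} h_k$, with the convention that the conclusion holds even when the limiting integral is infinite). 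One small point worth making explicit: after Fatou gives $\limsup_n \int |f_n - f|\, d\mu \le 0$, it is the nonnegativity of $\int |f_n - f|\, d\mu$ that lets you conclude the limit exists and equals zero, and the final displayed conclusion then follows from $\left| \int f_n\, d\mu - \int f\, d\mu \right| \le \int |f_n - f|\, d\mu$.
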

\end{sloppypar}

We can also talk about limits of sequences of measures on a metric space (e.g.\ $\R$).  The most common notion is the following: a sequence of finite measures $\mu_n$ on the same measurable space \emph{converges weakly} to $\mu$ if
\begin{equation}
  \int f d\mu_n \to \int f d\mu
\end{equation}
for all bounded, continuous functions $f$.

\section{Random variables, expectations, conditional expectations}

An integrable function on a probability space is called a \emph{random variable}.  The average (or expectation) of a random variable $\phi$ is
\begin{equation}
  \Av \phi = \int \phi d\mu.
\end{equation}
The expectation satisfies \emph{Chebyshev's inequality},
\begin{equation}
\Av I[ |\phi- \Av \phi | \ge \epsilon] \le \epsilon ^{-2} \left( \Av (\phi^2) - (\Av \phi)^2 \right).
\end{equation}

It is possible to say a great deal about a random variable without specifying which probability space it is defined on.  The important fact is that a random variable $\phi$ defines a probability measure, called its \emph{distribution} and denoted by $P_\phi$, on the space in which it takes its values.  This is exactly the probability that $\phi$ will take a value in a particular measurable set.  We can talk about multiple random variables being \emph{identically distributed} if their distributions are the same.  Random variables \emph{converge in distribution} if their distributions converge weakly.  One important distribution is the \emph{normal distribution} with mean $m$ and variance $b^2$, denoted $N(m,b^2)$, which is absolutely continuous with respect to the Lebesgue measure and has density $e^{-x^2/2b^2}/(b \sqrt{2 \pi})$.

Among the important properties of a random variable are its \emph{moments}.  The $k$th moment is $\Av f^k$.  A bounded random variable is characterized by its (integer) moments, or by its exponential moment generating functional $\Av e^{tf}$.

Random variables $\phi$ and $\psi$ defined on the same probability space are \emph{independent} or \emph{independently distributed} if, for any measurable $A$ and $B$, the probability that $\phi \in A$ and $\psi \in B$ is the product of the two separate probabilities.  A similar definition holds for arbitrary collections of random variables.  We frequently speak of collections of random variables being \emph{independently and identically distributed}, abbreviated i.i.d..

Given a random variable $\phi$ and a sub-$\sigma$-algebra $\mathcal{F}$ of the one for which it was defined, the \emph{conditional expectation} of $\phi$ with respect to $\mathcal{F}$ (written $\condAv{\phi}{\mathcal{F}}$) is another random variable which is measurable with respect to $\mathcal{F}$ and which satisfies
\begin{equation}
  \Av \left( \phi I[A] \right) = \Av \left( \condAv{\phi}{\mathcal{F}} I[A] \right)
\end{equation}
for any $A \in \mathcal{F}$.  A version of Theorem~\ref{domConvThm} holds for conditional expectations:
\begin{theorem}
  Let $phi_n$ be a sequence of random variables, and let $\psi$ be a nonnegative random variable on the same space, with $|\phi_n| \le \psi$ and $\phi_n \to \phi$ almost surely.  Then
  \begin{equation}
    \lim_{n \to \infty} \condAv{\phi_n}{\mathcal{F}} = \condAv{\phi}{\mathcal{F}}
  \end{equation}
  almost surely and in $L^1$.
\end{theorem}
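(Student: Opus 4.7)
The plan is to reduce both conclusions to the ordinary dominated convergence theorem (Theorem~\ref{domConvThm}), going through a single application of conditional monotone convergence. First I would establish the almost sure statement, and then deduce the $\mathcal{L}^1$ statement from the uniform bound it produces along the way.

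For the almost sure convergence, I would introduce the random variables $Y_n := \sup_{m \ge n} |\phi_m - \phi|$, which decrease monotonically to $0$ almost surely by hypothesis and satisfy $0 \le Y_n \le 2\psi \in \mathcal{L}^1$. Then $2\psi - Y_n$ increases monotonically to $2\psi$, and the conditional monotone convergence theorem (which follows from the defining identity of conditional expectation together with the unconditional monotone convergence theorem applied to each $\mathcal{F}$-measurable test set $A$) gives $\condAv{Y_n}{\mathcal{F}} \to 0$ almost surely. The chain of inequalities
\begin{equation}
\left| \condAv{\phi_n}{\mathcal{F}} - \condAv{\phi}{\mathcal{F}} \right| \le \condAv{|\phi_n - \phi|}{\mathcal{F}} \le \condAv{Y_n}{\mathcal{F}}
\end{equation}
then yields $\condAv{\phi_n}{\mathcal{F}} \to \condAv{\phi}{\mathcal{F}}$ almost surely. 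Here the first inequality is the conditional form of $|\Av f| \le \Av |f|$, itself a direct consequence of the defining identity applied to the two $\mathcal{F}$-measurable sets $\{\condAv{\phi_n - \phi}{\mathcal{F}} \ge 0\}$ and its complement.

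For the $\mathcal{L}^1$ convergence, the same chain produces the uniform domination
\begin{equation}
\left| \condAv{\phi_n}{\mathcal{F}} - \condAv{\phi}{\mathcal{F}} \right| \le 2\condAv{\psi}{\mathcal{F}},
\end{equation}
and $\condAv{\psi}{\mathcal{F}}$ is itself in $\mathcal{L}^1$ since $\Av \condAv{\psi}{\mathcal{F}} = \Av \psi < \infty$ by the defining identity with $A = X$. Theorem~\ref{domConvThm}, applied to the sequence $\condAv{\phi_n}{\mathcal{F}}$ of random variables (which already converges almost surely, hence in measure, to $\condAv{\phi}{\mathcal{F}}$), then delivers convergence in $\mathcal{L}^1$.

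The only substantive ingredient is the conditional monotone convergence step; everything else is bookkeeping with the defining property of conditional expectation. I do not anticipate a real obstacle, since that ingredient reduces in turn to the unconditional monotone convergence theorem by integrating against indicators of arbitrary $\mathcal{F}$-measurable sets, keeping the whole argument within the standard machinery already reviewed in the appendix.
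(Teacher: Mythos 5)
The paper states this result in its probability appendix without proof, treating it as a standard fact from the textbooks it cites, so there is no internal argument to compare yours against. Your proof is correct and is the usual one: pass to the monotone envelope $Y_n := \sup_{m\ge n}|\phi_m-\phi|$, apply conditional monotone convergence to $2\psi - Y_n \uparrow 2\psi$ to get $\condAv{Y_n}{\mathcal{F}}\downarrow 0$ almost surely, squeeze $\left|\condAv{\phi_n}{\mathcal{F}}-\condAv{\phi}{\mathcal{F}}\right|$ below $\condAv{Y_n}{\mathcal{F}}$ via the conditional triangle inequality, and then upgrade to $\mathcal{L}^1$ convergence by dominating the whole sequence by $2\condAv{\psi}{\mathcal{F}}\in\mathcal{L}^1$ and invoking the unconditional dominated convergence theorem. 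One small point worth making explicit: the bound $Y_n\le 2\psi$ uses that $|\phi|\le\psi$ as well as $|\phi_m|\le\psi$, which follows by letting $n\to\infty$ in the hypothesis; you implicitly use this, and it is harmless, but it deserves a sentence.
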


There is a great deal to be said about collections of random variables which fail to be independent in particular ways.  To talk about this, we first introduce the idea of a \emph{filtration}, which is a nested sequence of $\sigma$-algebras, $\mathcal{F}_1 \subset \mathcal{F}_2 \subset \cdots$.  A \emph{stochastic process} is a sequence of random variables $\phi_n$ such that each $\phi_n$ is measurable with respect to the corresponding element of the filtration.

A \emph{martingale} is a stochastic process which satisfies
\begin{equation}
  \condAv{\phi_n}{\mathcal{F}_m} = \condAv{\phi_m}{\mathcal{F}_m} \ (\forall n>m).
\end{equation}
There are a number of results about the limits of martingales, which are referred to at various points in the text.

%
%
%
%

\chapter{Product Measure Steady States of Generalized Zero Range Processes}\label{GZRP}
(With J. L. Lebowitz, published as~\cite{GZRP})

\section*{Abstract}
We establish necessary and sufficient conditions for the existence of factorizable steady states of the Generalized Zero Range Process on a periodic or infinite lattice.  This process allows transitions from a site $i$ to a site $i+q$ involving (a bounded number of) multiple particles with rates depending on the content of the site $i$, the direction $q$ of movement, and the number of particles moving.  We also show the sufficiency of a similar condition for the continuous time Mass Transport Process, where the mass at each site and the amount transferred in each transition are continuous variables; we conjecture that this is also a necessary condition.

\section{Introduction}
The classical zero range process (ZRP) is a widely studied lattice model with stochastic time evolution \cite{EH2005}.  To define the process consider a cubic box $\Lambda \subset \mathbb{Z}^d$ with periodic boundary conditions, i.e.\ a $d$-dimensional torus.  At each site $i$ of $\Lambda$ there is a random integer-valued variable $n_i \in \{0,1,\ldots\}$, representing the number of particles at site $i$.  The time evolution is specified by a function $\alpha_q(n_i)$ giving the rate at which a particle from a site $i$ containing $n_i$ particles jumps to the site $i+q$, where $q$ runs over a set of neighbors $E$ (the most common choice is $E=\{\pm e_1, \pm e_2, \ldots\}$, but our treatment holds for any finite $E$ which spans $\mathbb{Z}^d$).  The name zero range indicates the fact that the jump rate from $i$ to $i+q$ depends only on the number of particles at $i$.

It is easy to see, when the system is finite, $\alpha_q(n)+\alpha_{-q}(n) \geq \delta >0$ for all $n > 0$ (we always have $\alpha(0) =0$) and $E$ spans $\mathbb{Z}^d$, that all configurations with a given total particle number $N \equiv \sum_{i \in \Lambda} n_i$ are mutually accessible, and hence there is a unique stationary measure $\tilde{P}_\Lambda(\underline{n};N)$ for each $N$.  Normalized superpositions of these measures yield all of the stationary states of this system.  Conversely, given a stationary measure for which there is a nonzero probability of $N$ particles being present in the system one can obtain $\tilde{P}_\Lambda(\underline{n};N)$ by restricting that measure to configurations with $N$ particles.

The ZRP was first introduced in \cite{Spitzer1970}.  It was assumed there and in most subsequent works that the rates $\alpha_q(n)$ are of the form
\begin{equation}
\alpha_q(n)=g_q\alpha(n)
\label{iso}
\end{equation}
with $g_q$ independent of $n$ and $\alpha(n)$ independent of $q$.  In this case the system has the unique steady state given by \cite{Spitzer1970, Andjel1982}
\begin{equation}
\tilde{P}_\Lambda(\underline{n};N)=C_N\delta\left(\sum_{i \in \Lambda}n_i - N\right) \prod_{i \in \Lambda}p(n_i)
\label{pmeasure}
\end{equation}
where
\begin{equation}
p(n)=\frac{c \lambda ^n}{\prod_{k=1}^n \alpha(k)}
\label{singlesite}
\end{equation}
$C_N$ and $c$ are normalization constants given by
\begin{eqnarray}
c = \left(\sum_{n=0}^\infty \frac{\lambda ^n}{\prod_{k=1}^n \alpha(k)}\right)^{-1}\\
C_N = \left(\sum_{\underline{n} : \sum n_i = N} \prod_{i \in \Lambda}p(n_i)\right)^{-1}
\end{eqnarray}
The unique stationary measure $\tilde{P}_\Lambda(\underline{n};N)$ is thus a restriction to configurations with $N$ particles of the product measure $\prod_{i \in \Lambda} p(n_i)$ with single-site distribution $p(n)$.

In the limit $\Lambda \rightarrow \mathbb{Z}^d$ with $N/|\Lambda| \rightarrow \rho$ the only stationary extremal measures, i.e.\ the only stationary measures with a decay of correlations, are product measures with $p(n)$ given in Equation (\ref{singlesite}) as the distribution of single site ocupation numbers \cite{Andjel1982}.  These states are parameterized by $\lambda$, which plays the role of the fugacity in an equilibrium system, with different values of $\lambda$ corresponding to different expected particle densities $\rho$, where
\begin{equation}
\rho=\sum_{n=1}^\infty n p(n) = c \sum_{n=1}^\infty \frac{n \lambda^n}{\prod_{k=1}^n \alpha(k)}
\end{equation}

Recently there has been a revival of interest in the ZRP.  For certain choices of $\alpha(n)$, for example when $\alpha(n) \sim 1 + b/n$ for large $n$, the ZRP on $\mathbb{Z}$ exhibits a transition between a phase where all sites almost certainly contain finite numbers of particles to a `condensed' phase where there is a single site containing an infinite number of particles \cite{Evans2000,GSS2003}.  This condensation has attracted attention as a representative of an interesting class of phase transitions in one dimensional non-equilibrium systems, and has also been applied to models of growing networks \cite{EH2005,KLMST2002}.

Evans, Majumdar and Zia \cite{EMZ2004} have proposed a generalization of the ZRP, called a Mass Transport Model (MTM).  They considered a one dimensional lattice on which there is a continuous `mass' $m_i \geq 0$ at each site, with a parallel update scheme in which at each time step a random mass $\mu_i$, $0 \leq \mu_i \leq m_i$ moves from each site $i \in \mathbb{Z}$ to the neighboring site $i+1$ with a probability density $\phi(\mu,m_i)$.  This process shares many features with the (totally asymmetric) Zero Range Process, in particular the existence of a condensation transition in certain cases \cite{MEZ2005}.  One very significant difference from the ZRP, however, is that the system has a product measure steady state if and only if $\phi(\mu,m)$ satisfies a certain condition.  Taking a limit in which the probability of a transition at any given site at any given time step goes to zero (discussed in \cite{EMZ2004}) gives a stochastic process on continuous time (equivalent to a model with random sequential local updates) which we will call a Mass Transport Process or MTP, in which the rate of transitions from site $i$ to $i+1$ is given by $\alpha(\mu,m)$.  This process has a product measure steady state if and only if there exist functions $g$ and $p$ such that
\begin{equation}
\alpha(\mu,m) = g(\mu)p(m-\mu)/p(m)
\label{mt1d}
\end{equation}

An interesting question, then, is whether similar criteria for the existence of a product measure exist for such processes in higher dimension and with movement in both directions allowed.  This question is already relevant for the ZRP.  A particular case in $d=2$, studied in \cite{vanB, GL}, has
\begin{eqnarray}
\alpha_{\pm 1}(n)=\alpha [ 1 - \delta_{n,0} ]  \\
\alpha_{ \pm 2}(n)=\alpha^{(2)}(n)=\alpha n \label{2drates}
\end{eqnarray}
i.e.\ a constant rate (independent of $n$) per occupied site for moving in the $\pm x$ direction and a rate proportional to $n$ in the $\pm y$ direction.  A treatment of this system based on fluctuating hydrodynamics and computer simulations (originally conducted on a similar but not quite equivalent system, but which we have reproduced on this system) suggests that this particular system has correlations between occupation numbers at different sites a distance $D$ apart decaying according to a dipole power law $D^{-2}$.  This behavior, which is very different from a product measure steady state or its projection (\ref{pmeasure}), is conjectured to be generic for nonequilibrium stationary states of systems with non-equilibrium particle conserving dynamics in $d \geq 2$.

In the present work we prove rigorously that Equation (\ref{iso}) is a necessary and sufficient condition for the existence of product measure steady states for ZRPs, and as a consequence that the system described by (\ref{2drates}) has no product measure steady states.  This condition in turn is a special case of a condition on a class of systems which we call Generalized Zero Range Processes (GZRP), in which we also allow transitions in which more than one particle moves at a time, although we will assume that the number of particles moving in a single transition is bounded.  The rate now depends on the number of particles $\nu$ which move in the transition as well as the number of particles $n$ at the site before the transition, and so the rates are given by a function $\alpha_q(\nu,n)$ with some $\nu_{max}$ such that $\alpha_q(n,\nu)=0$ whenever $\nu > \nu_{max}$.  The classical ZRPs discussed above are a special case with $\nu_{max}=1$ and $\alpha_q(1,n)=\alpha_q(n)$.  We prove that a necessary and sufficient condition for the GZRP to have product measure steady states is
\begin{equation}
\alpha_q(\nu,n)=\frac{g_q(\nu)f(n-\nu)}{f(n)}
\label{condition}
\end{equation}
for some non-negative $g_q(\nu)$ and $f(n)$ with $\sum f(n) < \infty$.  This has a clear similarity to (\ref{mt1d}), and when $\nu_{max}=1$ (\ref{condition}) reduces to (\ref{iso}).

We will prove Equation (\ref{condition}) in the course of finding a weaker result for the continuous-time Mass Transport Process generalized to dimension $d \geq 1$ and to transitions in all directions.  We show that these systems have product measure steady states when $\alpha_q(\mu,m) = g_q(\mu)p(m-\mu)/p(m)$.  This condition is also necessary under certain conditions (generalizing (\ref{mt1d}) to higher dimension) and we conjecture that this is so in all cases.

\section{Factorizability in the Mass Transport Process}
Let $P_\Lambda(\underline{m},t)$ be the time-dependent probability density of finding the system in a particular configuration $\underline{m}$ with mass $m_i$ at site $i \in \Lambda$, $m_i\in (0,\infty)$.  As noted above, we are considering periodic boundary conditions; this case is somewhat simpler than those of other boundary conditions.  The master equation describing the evolution of $P_\Lambda(\underline{m},t)$ is
\begin{equation} \label{master}
\begin{split}
\frac{\partial P_\Lambda(\underline{m},t)}{\partial t} = &\sum_{i \in \Lambda} \left(-\sum_{q \in E} \int_0^{m_i} d\mu \alpha_q (\mu,m_i)P(\underline{m},t) \right. \\
& \left. + \sum_q \int_0^{m_i}d\mu \alpha_{-q}(\mu,m_{i+q}+\mu)P(\underline{m}^{i,q,\mu},t)\right)
\end{split}
\end{equation}
where
\begin{equation}
{m_j^{i,q,\mu}} =
\left\{\begin{array}{ll}m_j, & j \notin \{i,i+q\} \\m_j-\mu, & j=i \\m_j+\mu, & j=i+q\end{array} \right. \label{miqmu}
\end{equation}

A stationary state of the system is a distribution $\tilde{P}_\Lambda(\underline{m})$ such that $\partial P_\Lambda(\underline{m},t)/\partial t = 0$ whenever $P_\Lambda(\underline{m},t)=\tilde{P}_\Lambda(\underline{m})$, or equivalently
\begin{equation}
\sum_{i \in \Lambda} \sum_q \int_0^{m_i} d\mu \alpha_q (\mu,m_i)\tilde{P}_\Lambda(\underline{m}) = \sum_{i \in \Lambda} \sum_q \int_0^{m_i}d\mu \alpha_{-q}(\mu,m_{i+q}+\mu)\tilde{P}_\Lambda(\underline{m}^{i,q,\mu})
\label{stationary}
\end{equation}

We wish to find conditions under which there is a $\tilde{P}_\Lambda$ which is factorizable, that is which takes the form of
\begin{equation}
\hat{P}_\Lambda(\underline{m})=\prod_{i \in \Lambda} p(m_i)
\label{factor}
\end{equation}

Assuming that $\alpha_q(\mu,m) > 0$ for all $m > 0$ and $0 < \mu \leq m$ for at least one $q$ of each pair of opposite directions, the system in a finite torus $\Lambda$ will have a unique steady state corresponding to each value of the total mass $M = \sum m_i$.  Any linear combination of such states is also a solution of (\ref{stationary}).  Given a factorizable steady state, states of definite total mass can be obtained by projecting $\hat{P}$ onto the set of configurations with a particular value of $M$ in analogy with Equation (\ref{pmeasure}).

Assume that there is a factorizable steady state as in (\ref{factor}).  Let $\bar p (s)$ be the Laplace transform of $p(m)$, and let
\begin{equation}
\phi_q(\mu,s) = \left[1/\bar p (s)\right]\int_0^\infty dm e^{-sm}\alpha_q(\mu,m+\mu)p(m+\mu)
\label{phidef}
\end{equation}

Note that, since $\alpha_q(\mu,m)=0$ for $m < \mu$,
\begin{equation}
\begin{split}
\int_0^\infty dm e^{-sm}&\alpha_q(\mu,m)p(m) \\
&= e^{-s\mu}\int_0^\infty dm e^{-sm}\alpha_q(\mu,m+\mu)p(m+\mu) = e^{-s\mu}\phi_q(\mu,s)\bar p (s)
\end{split}
\label{lemma1}
\end{equation}

We also have
\begin{equation}
\begin{split}
\int_0^\infty dm e^{-sm} \int_0^m d\mu \alpha_q(\mu,m)p(m) &= \int_0^\infty d\mu \int_\mu^\infty dm e^{-sm} \alpha_q(\mu,m)p(m)  \\
&= \int_0^\infty d\mu e^{-s\mu} \int_0^\infty dm e^{-sm} \alpha_q(\mu,m+\mu)p(m+\mu) \\
&= \int_0^\infty d\mu e^{-s\mu} \phi_q(\mu,s)\bar p (s)
\end{split}
\label{lemma2}
\end{equation}

Multiplying both sides of (\ref{stationary}) by $\prod_i e^{-s_i m_i}$ and integrating over all $m_i$, we obtain
\begin{equation}
\begin{split}
\sum_{i \in \Lambda,q \in E} \left(\prod_{j \neq i} \bar p (s_j)\right) & \int_0^\infty dm_i \int_0^{m_i}d\mu \alpha_q(\mu,m_i)p(m_i)e^{-s_im_i}  \\
= &\sum_{i \in \Lambda,q \in E} \left(\prod_{j \neq i, i+q} \bar p (s_j)\right) \int_0^\infty dm_i \int_0^\infty dm_{i+q} \int_0^\infty d\mu  \\
& \times \alpha_{-q}(\mu,m_{i+q}+\mu)p(m_i-\mu)p(m_{i+q}+\mu) e^{-s_im_i-s_{i+q}m_{i+q}} \end{split}
\label{intermediate}
\end{equation}

Rewriting (\ref{intermediate}) with the aid of (\ref{lemma1}) and
(\ref{lemma2}) and canceling common factors, we obtain
\begin{equation}
\sum_{i \in \Lambda,q} \int_0^\infty d\mu \phi_q(\mu,s_i) e^{-s_i \mu} = \sum _{i \in \Lambda,q} \int_0^\infty d\mu \phi_{-q}(\mu,s_{i+q})e^{-s_i\mu}
\label{mtcondition}
\end{equation}

Equation (\ref{mtcondition}) will be satisfied if (though not only if)
\begin{equation}
\phi_q(\mu,s)=g_q(\mu)
\label{mtconstant}
\end{equation}

In this case Equation (\ref{phidef}) gives
\begin{equation}
\int_0^\infty dm e^{-sm} \alpha_q(\mu,m+\mu)p(m+\mu) = g_q(\mu) \int_0^\infty dm e^{-sm}p(m)
\end{equation}
which by uniqueness of the Laplace transform gives
\begin{equation}
\alpha_q(\mu,m)=g_q(\mu)\frac{p(m-\mu)}{p(m)} \label{mtsimple}
\end{equation}

Equation (\ref{mtsimple}) is a generalization of the comparable formula for the unidirectional case \cite{EMZ2004}.  In this case and in all other cases where, for each $q \in E$, either $\alpha_q \equiv 0$ or $\alpha_{-q} \equiv 0$ and hence either $\phi_q \equiv 0$ or $\phi_{-q} \equiv 0$, there is in Equation (\ref{mtcondition}) only one term which depends on each pair $m_i,m_{i+q}$, and in order for the equation to be satisfied it must depend on only one of them.  This happens only if (\ref{mtconstant}) holds for that $q$, so in these cases Equation (\ref{mtsimple}) gives the only possible rates for which there is an invariant product measure.

Although in general Equation (\ref{mtconstant}) is not the only way of satisfying Equation (\ref{mtcondition}), solutions of this equation only correspond to realizable dynamics when $p$ and $\alpha_q$ are non-negative and normalizable; the resulting restrictions on $\phi_q$ from Equation (\ref{phidef}) are such that it seems unlikely that there are reasonable (indeed any) rates, other than those in (\ref{mtsimple}), which satisfy all of these conditions.

Dynamics for which the system has a factorizable steady state can be found by beginning with some suitable (positive and normalizable) $p(m)$ and then defining $\alpha_q(\mu,m)$ via (\ref{mtsimple}).  For example let
\begin{equation}
p_c(m) = c e^{-cm} \theta(m)
\end{equation}
\noindent where $\theta$ is the Heaviside step function.  The possible transition rates corresponding to $\tilde P(\underline m ) =\prod p_c(m)$ are
\begin{equation}
\alpha_q(m,\mu) = g_q(\mu) e^{c\mu}\theta(m-\mu) = \tilde{g}_q(\mu) \theta(m-\mu)
\end{equation}
\noindent where $\tilde{g}_q$ are arbitrary non-negative integrable functions, i.e.\ the rates $\alpha_q(\mu,m)$ are independent of $m$ as long as $\mu \leq m$.

\section{Reverse processes}
In this section we will show that Equation (\ref{mtsimple}) is a
necessary condition for the existence of factorizable steady states
of any MTP whose reverse process is also an MTP.  The relevant way
in which the reverse process can fail to be an MTP is that it can
have transition rates which depend on the mass at the target site of
the transition as well as on the mass at the site it is leaving.

In general, given a Markov process with transition rates
$K(\underline{m} \rightarrow \underline{m}')$ and stationary
distribution $\tilde{P}(\underline{m})$, the reverse process is
defined by rates $K^*(\underline{m} \rightarrow \underline{m}')$
given by
\begin{equation}
K^*(\underline{m} \rightarrow \underline{m}') =
\frac{K(\underline{m}' \rightarrow
\underline{m})\tilde{P}(\underline{m}')}{\tilde{P}(\underline{m})}
\label{reverse1}
\end{equation}
This new process is what one obtains by running the original process
backwards.  Consequently the reverse process has the same stationary
distribution as the original process, and when $K$ is translation in
variant so is $K^*$.\footnote{If for some configurations
$\tilde{P}(\underline{m})=0$, then one defines a new process on a
configuration space excluding these configurations (this problem
does not arise in the case under consideration).}

For an MTP defined by rates $\alpha_q(m_i,\mu)$, $K(\underline{m}
\rightarrow \underline{m}')$ is equal to $\alpha_q(m_i,\mu)$ for
configurations $\underline{m}$ and $\underline{m}'$ related by
moving a mass $\mu$ from site $i$ to $i+q$, and to $0$ otherwise.
The reverse process is specified by the rate function
$\alpha_{i,q}^*(\mu,\underline{m})$ which is the rate of transitions
from a configuration $\underline{m}$ in which a mass $\mu$ moves
from site $i$ to site $i+q$; these are the only transitions in this
process.

When $\tilde{P}(\underline{m})$ is a product measure with
single-site weights $p(m)$, (\ref{reverse1}) becomes
\begin{equation}
\alpha_{i,q}^*(\mu,\underline{m}) =
\alpha_{-q}(\mu,m_{i+q}+\mu)\frac{p(m_i-\mu)p(m_{i+q}+\mu)}{p(m_i)p(m_{i+q})}
\label{reverseDef}
\end{equation}
Rewriting, we have
\begin{equation}
\alpha_{i,q}^*(\mu,\underline{m}) \frac{p(m_i)}{p(m_i-\mu)} =
\alpha_{-q}(\mu,m_{i+q}+\mu)\frac{p(m_{i+q}+\mu)}{p(m_{i+q})}
\label{reverse2}
\end{equation}
If $\alpha^*$ defines a mass transport process, then it must be
independent of all $m_j$ for $j \neq i$.  In this case both sides of
(\ref{reverse2}) are equal to some function which depends only on $\mu$ and $q$.  This which can only be true if $\alpha$ satisfies Equation (\ref{mtsimple}), in which case one finds that
\begin{equation}
\alpha^*_{i,q}(\mu,\underline{m}) = \alpha_{-q}(\mu,m_i)
\end{equation}

\section{Factorizability in Generalized Zero Range Processes}
With mass at each site restricted to an integer particle number $n_i$, we can reproduce the analysis in the previous section up to Equation (\ref{mtcondition}).  Denoting the vector of occupation numbers by $\underline{n}$, and the transition rates by $\alpha_q(\nu,n)$, the stationarity condition corresponding to Equation (\ref{stationary}) is
\begin{equation}
\sum_{i \in \Lambda}\sum_{q \in E}\sum_{\nu=1}^{n_i} \left(-\alpha_q(\nu,n_i)\tilde{P}_\Lambda(\underline{n})+\alpha_{-q}(\nu,n_{i+q}+\nu)\tilde{P}_\Lambda(\underline{n}^{i,q,\nu})\right) =0
\label{zrstationary}
\end{equation}

Suppose $\tilde{P}$ is factorizable,
\begin{equation}
\tilde{P}_\Lambda(\underline{n})=\prod_{i \in \Lambda}p(n_i)
\end{equation}
where $p(n)$ is the probability of having $n$ particles at a given site.  Then define the generating function (discrete Laplace transform)
\begin{equation}
\bar{p}(z)=\sum_{n=0}^\infty z^n p(n)
\end{equation}
and let
\begin{equation}
\phi_q(\nu,z)=\frac{\sum_{n=0}^\infty z^n \alpha_q(\nu,n+\nu)p(n+\nu)}{\bar{p}(z)}
\label{zrphidef}
\end{equation}
Note that $\phi_q(\nu,z) \geq 0$ for all $\nu,z \geq 0$.
The counterpart of Equation (\ref{mtcondition}) is then
\begin{equation}
\sum_{i \in \Lambda,q \in E} \sum_{\nu=1}^\infty z_i^\nu \phi_q(\nu,z_i) = \sum_{i \in \Lambda,q \in E} \sum_{\nu=1}^\infty z_i^\nu \phi_{-q}(\nu,z_{i+q})
\label{zrcondit}
\end{equation}

We now exploit the assumption that transitions occur only for $\nu \leq \nu_{max}$.  Then choosing some $j \in \Lambda$ and $\tilde{q} \in E$ and taking the $k$th derivative of the above expression with respect to $z_j$ and $z_{j+\tilde{q}}$ gives
\begin{equation}
\sum_{\nu=k}^{\nu_{max}}\frac{\nu!}{(\nu-k)!}z_j^{\nu-k}\phi_{-\tilde{q}}^{(k)}(\nu,z_{j+\tilde{q}})+
\sum_{\nu=k}^{\nu_{max}}\frac{\nu!}{(\nu-k)!}z_{j+\tilde{q}}^{\nu-k}\phi_{\tilde{q}}^{(k)}(\nu,z_j)=0
\label{dvcondition}
\end{equation}

For $k=\nu_{max}$, we have
\begin{equation}
\phi_{\tilde{q}}^{(\nu_{max})}(\nu_{max},z_j)+\phi_{-\tilde{q}}^{(\nu_{max})}(\nu_{max},z_{j+\tilde{q}})=0 \label{inducStart}
\end{equation}
For (\ref{inducStart}) to hold for all $z_j$ and $z_{j+\tilde{q}}$, both terms on the left-hand-side must be constant, and thus the functions $\phi_{\pm \tilde{q}} (\nu_{max},\cdot)$ are polynomials of degree $\nu_{max}$; being non-negative they must have non-negative leading terms.  Equation (\ref{inducStart}) states that pairs of leading terms of these polynomials must add up to zero and so each must be zero, and therefore the functions $\phi_q (\nu_{max},\cdot)$ are polynomials of degree at most $\nu_{max}-1$ for each $q$.

Now setting $k=\nu_{max}-1$ we find by the same reasoning that the functions $\phi_q(\nu_{max},\cdot)$ are polynomials of degree at most $\nu_{max}-2$.  Proceeding in this manner we find
\begin{equation}
\phi_q(\nu,z) = g_q(\nu)
\label{zrcondition}
\end{equation}
as a necessary as well as a sufficient condition for (\ref{zrcondit}) to be satisfied.  Referring to the definition of $\phi$, this implies that
\begin{equation}
\alpha_q(\nu,n)=g_q(\nu) \frac{p(n-\nu)}{p(n)}
\label{gzrpcond}
\end{equation}
is a necessary and sufficient condition for the existence of a product measure.

In the case where $\nu_{max}= 1$ and $\alpha_q(1,n)=\alpha_q(n)$, this condition becomes
\begin{equation}
\alpha_q(n)=c_q\frac{p(n-1)}{p(n)}
\end{equation}
This is what we referred to above as the classical ZRP, with the well-known stationary measure \cite{Spitzer1970, Evans2000} discussed in the introduction.

\section{GZRPs on infinite lattices}
In order to show that we have really found all of the factorizable steady states of this class of systems, it remains to be established that the conditions obtained also apply to an infinite lattice; that is, that there are not rates for which the resulting GZRP on an infinite lattice has product measure steady states while the GZRPs defined on finite lattices have none, or do not have the same such stationary states.

Let $P(\underline{n})$ be a product measure with single-site distribution $p(n)$ which is stationary for rate functions $\alpha$ on $\mathbb{Z}^d$, and let $\Lambda$ be a finite box in $\mathbb{Z}^d$ such that there is some $i_0 \in \Lambda$ such that $i_0+q \in \Lambda$ for all $q \in E$.  Denote by $\underline{n}_\Lambda$ the configuration of the system inside of $\Lambda$, and let $P(\underline{n}_\Lambda)$ be the marginal distribution of this configuration.  Then we have
\begin{equation}
\begin{split}
\frac{d}{dt} P(\underline{n}_\Lambda) = & - \sum_{i \in \Lambda} \sum_{q \in E} \sum_{\nu=1}^{n_i} \alpha_q(\nu,n_i) P(\underline{n}_\Lambda)
+ \sum_{i \in \Lambda} \sum_{q \in E} \sum_{\nu=1}^{n_{i+q}} \alpha_q(\nu,n_i+\nu)P(\underline{n}_\Lambda^{i,q,\nu}) \\
& - \sum_{i \in \partial \Lambda} \sum_{q \in E : i+q \in \Lambda} \sum_{\nu=1}^\infty\sum_{n=\nu}^\infty \alpha_q(\nu,n) P(\underline{n}_\Lambda) p(n) \\
& + \sum_{i \in \partial \Lambda} \sum_{q \in E : i+q \in \Lambda} \sum_{n=1}^\infty\sum_{\nu=1}^n \alpha_q(\nu,n) P(\underline{n}^{i,q,\nu}_\Lambda) p(n) \\
=&0
\end{split}
\label{generator}
\end{equation}
where $\partial \Lambda = \{i \in \mathbb{Z}^d \setminus \Lambda|(\exists q \in E)(i+q \in \Lambda)\}$ and
\begin{equation}
n^{i,q,\nu}_k = \left\{\begin{array}{ll}
n_k,&k \notin \{i,i+q\} \cap \Lambda \\
n_k+\nu,&k = i \in \Lambda\\
n_k - \nu,&k=i+q \in \Lambda\end{array}\right.
\end{equation}
Equation (\ref{generator}) is very similar to Equation (\ref{zrstationary}), and by repeating the procedure used above with Equation (\ref{generator}) in place of Equation (\ref{zrstationary}), it can easily be seen that $\alpha$ and $p$ must satisfy Equation (\ref{dvcondition}) and so that (\ref{gzrpcond}) is a necessary condition for the existence of a product measure steady state of the process on $\mathbb{Z}^d$ as well as on a finite torus.

\section{Conclusion}

We have shown that there is a straightforward necessary and sufficient condition, Equation (\ref{condition}), for a generalized Zero Range Process to have a product measure steady state.  For Mass Transport Processes, we have found a condition, Equation (\ref{mtcondition}), for the existence of a product measure steady state, which is considerably more opaque than in the GZRP; it is not clear that this is equivalent to the sufficient condition expressed in Equation (\ref{mtsimple}), the counterpart of the condition we have obtained for GZRPs.  We have, however, presented some reasons to believe that it is.

\section*{Acknowledgements}
The work of RLG was supported by Department of Education grant P200A030156-03.  The work of JLL was support by NSF grant DMR 01-279-26, and by AFOSR grant AF 59620-01-1-0154.  We thank P A Ferrari for a fruitful discussion at the Institut des Hautes \'Etudes Scientifiques in Bures-Sur-Yvette, France, where part of this work was done.  We also thank R K P Zia for useful discussions.

\bibliographystyle{spphys}
\bibliography{thesis,zrp}

\begin{vita}

\heading{Curriculum Vitae} \vspace{15pt}
\heading{Rafael L Greenblatt} \vspace{15pt}
\begin{descriptionlist}{xxxxx-xxxxx}
\item[2002-2010] Ph.D. in Physics, Rutgers University
\item[1998-2001] B.A. (High Distinction) in Mathematics and Physics, University of California Berkeley
\item[1996-1998] Attended San Diego City College
\end{descriptionlist}
\medskip
\newcommand{\TA}{Teaching assistant, Department of Physics and Astronomy, Rutgers University}
\newcommand{\GA}{Departmental graduate assistant, Department of Physics and Astronomy, Rutgers University}
\begin{descriptionlist}{xxxxx-xxxxx} 
\item[2009] \TA
\item[2008-2009] Part time lecturer, Department of Physics and Astronomy, Rutgers University
\item[2006] \TA
\item[2005] \GA
\item[2003-2005] Department of Education GAANN fellow
\item[2002-2003] \TA
\end{descriptionlist}

Publications:
\begin{itemize}
\item Comment on ``Yang-Lee Zeros for an Urn Model for the Separation of Sand'' (with J.L. Lebowitz), {\it Phys. Rev. Lett.} {\bf 93} 238901, 2004.
\item Product measure steady states of generalized zero range processes (with J.L. Lebowitz), {\it J. Phys. A} {\bf 39} 1565--1574, 2006.
\item Rounding of First Order Transitions in Low-Dimensional Quantum Systems with Quenched Disorder (with M. Aizenman and J.L. Lebowitz), {\it Phys. Rev. Lett.} {\bf 103} 197201, 2009.
\item Phase diagrams of systems with quenched randomness (with M. Aizenman and J.L. Lebowitz), ArXiv:0912.1251, to appear in {\it Physica A}, 2010.
\end{itemize}
\end{vita}

\end{document}